\documentclass[11pt,leqno,twoside,a4paper]{article}
\usepackage[english]{babel}
\usepackage[utf8]{inputenc}
\usepackage{amsfonts,amssymb,amsmath,amsthm}
\usepackage{enumitem}
\usepackage[a4paper,left=30mm,right=30mm,top=25mm,bottom=30mm]{geometry}
\usepackage{graphicx} 
\usepackage{mathabx}
\usepackage{scalerel,stackengine}
\usepackage{verbatim}
\usepackage{calc}
\usepackage{color}
\usepackage{xcolor}
\usepackage{yfonts}
\usepackage{mathptmx}
\usepackage[T1]{fontenc}
\usepackage{lmodern}     
\usepackage[hidelinks]{hyperref}

\providecommand{\appendixname}{Appendix}

\usepackage{fancyhdr}
\fancypagestyle{plain}{%
	\fancyhf{}
	\fancyhead[LE,LO]{\thepage}

}

\newlength{\abstractleft}
\newlength{\abstractright}
\makeatletter
\renewenvironment{abstract}{%
	\par\small
	\list{}{%
		\setlength{\leftmargin}{\abstractleft}%
		\setlength{\rightmargin}{\abstractright}%
		\setlength{\listparindent}{0pt}%
		\setlength{\itemindent}{0pt}%
		\setlength{\parsep}{0pt}%
	}%
	\item\relax
	\noindent\textsc{Abstract.}\ %
	\ignorespaces
}{%
	\endlist
	\par\normalsize
}
\makeatother	



\newcommand{\ShortAuthors}{\scriptsize R. MAHADEVAN, F. OLIVARES, AND A. ZUNIGA} 
\newcommand{\ShortTitle}{\footnotesize Sharp three-particle fractional Hardy inequality} 

\fancypagestyle{plain}{%
	\fancyhf{}
	\fancyfoot[C]{\thepage}%
}

\usepackage{etoolbox}
\newif\ifinappendix
\pretocmd{\appendix}{%
	\inappendixtrue
	\addtocontents{toc}{\protect\inappendixtrue}%
}{}{}

\makeatletter

\renewcommand{\@seccntformat}[1]{%
	\ifinappendix
	\ifdefstring{#1}{section}{\appendixname~\thesection.\quad}{%
		\ifdefstring{#1}{subsection}{\appendixname~\thesubsection.\quad}{%
			\csname the#1\endcsname\quad
		}%
	}%
	\else
	\csname the#1\endcsname\quad
	\fi
}
\makeatletter

\newcommand{\TOCEntryFont}{}

\renewcommand{\l@section}[2]{%
	\begingroup
	\parskip=0pt
	\let\oldnumberline\numberline
	\ifinappendix
	\renewcommand{\numberline}[1]{\oldnumberline{\appendixname~##1.}}%
	\@dottedtocline{1}{1.5em}{6.8em}{{\TOCEntryFont #1}}{{\TOCEntryFont #2}}%
	\else
	\renewcommand{\numberline}[1]{\oldnumberline{##1.}}%
	\@dottedtocline{1}{1.5em}{3.0em}{{\TOCEntryFont #1}}{{\TOCEntryFont #2}}%
	\fi
	\endgroup
}

\renewcommand{\l@subsection}[2]{%
	\begingroup
	\parskip=0pt
	\let\oldnumberline\numberline
	\ifinappendix
	\renewcommand{\numberline}[1]{\oldnumberline{\appendixname~##1.}}%
	\@dottedtocline{2}{3.8em}{7.4em}{{\TOCEntryFont #1}}{{\TOCEntryFont #2}}%
	\else
	\renewcommand{\numberline}[1]{\oldnumberline{##1.}}%
	\@dottedtocline{2}{3.8em}{3.8em}{{\TOCEntryFont #1}}{{\TOCEntryFont #2}}%
	\fi
	\endgroup
}
\makeatother

\usepackage{titlesec}
\titleformat{\section}
{\normalfont\scshape\large\filcenter}
{\ifinappendix\appendixname~\thesection.\else\thesection.\fi}{0.8em}{}

\titleformat{\subsection}
{\normalfont\scshape\normalsize\filcenter}
{\ifinappendix\appendixname~\thesubsection.\else\thesubsection.\fi}{0.8em}{}

\titlespacing*{\section}{0pt}{1.5ex plus .2ex}{0.8ex}
\titlespacing*{\subsection}{0pt}{1.2ex plus .2ex}{0.6ex}

\titleformat{\subsubsection}
{\normalfont\scshape\normalsize}
{\thesubsubsection.}{0.8em}{}

\newtheorem{theorem}{Theorem}[section]
\newtheorem{lemma}[theorem]{Lemma}
\newtheorem{proposition}[theorem]{Proposition}
\newtheorem{corollary}[theorem]{Corollary}
\newtheorem{remark}{Remark}

\newcommand{\1}{\mathbf{1}}

\newcommand{\calB}{\mathcal{B}}

\newcommand{\calS}{\mathcal{S}}
\newcommand{\phihat}{\widehat{\varphi}}

\newcommand{\kernel}{K}
\newcommand{\valphan}{v_{\alpha,n}}
\newcommand{\om}[1]{\omega^{(#1)}_{\alpha,n}}
\newcommand{\off}[1]{\Omega_{d,#1}}
\newcommand{\uhat}{\widehat{u}}
\newcommand{\ucheck}{\widecheck{u}}
\newcommand{\x}{\mathbf{x}}

\newcommand{\CFH}{C_{fH}(d,s)}

\newcommand{\CMOZ}{C_{\rm MOZ}(d_1,d_2,d_3,d)}
\newcommand{\CLL}[1]{\mathbf{c}_{#1}}
\newcommand{\CGM}{C_{\rm GM}(\alpha_1,\alpha_2,\alpha_3,d)}
\newcommand{\CStein}{C_{\rm Stein}(\alpha,\beta,d)}
\newcommand{\C}{\mathbb{C}}
\newcommand{\F}{\mathcal{F}}
\newcommand{\GM}{\mathcal{B}^{\rm GM}}
\newcommand{\N}{\mathbf{N}}
\newcommand{\Nn}{\mathcal{N}_s}
\newcommand{\R}{\mathbf{R}}
\newcommand{\rem}{\mathsf{R}_s}
\newcommand{\remR}{\textgoth{R}_{N,s}}
\renewcommand{\S}{\mathbf{S}}
\newcommand{\V}{V_{s,3}}
\newcommand{\optC}{C(s,d,3)}

\newcommand{\sLap}{(-\Delta)^s}
\newcommand{\sLapj}[1]{(-\Delta_{#1})^s}

\newcommand{\intd}{\int_{\R^d}}

\newcommand{\innd}[2]{\left\langle #1,#2 \right\rangle_{L^2(\R^d)}}
\newcommand{\inn}[2]{\left\langle #1,#2 \right\rangle}
\renewcommand{\epsilon}{\ensuremath{\varepsilon}}

\DeclareMathOperator*{\esssup}{ess\,sup\,}
\DeclareMathOperator{\supp}{supp}

\colorlet{darkred}{red!80!black}
\colorlet{darkblue}{blue!80!black}
\colorlet{darkgreen}{green!60!black}

\stackMath
\newcommand\wwidehat[1]{%
	\savestack{\tmpbox}{\stretchto{%
			\scaleto{%
				\scalerel*[\widthof{\ensuremath{#1}}]{\kern.1pt\mathchar"0362\kern.1pt}%
				{\rule{0ex}{\textheight}}
			}{\textheight}%
		}{2.4ex}}%
	\stackon[-6.9pt]{#1}{\tmpbox}%
}
\stackMath

\newcommand\wwidecheck[1]{%
	\savestack{\tmpbox}{\stretchto{%
			\scaleto{%
				\scalerel*[\widthof{\ensuremath{#1}}]{%
					\kern.1pt
					\raisebox{0.25ex}{%
						\scalebox{0.9}[1.5]{\rotatebox[origin=c]{180}{$\mathchar"0362$}}%
					}%
					\kern.1pt
				}{\rule{0ex}{\textheight}}%
			}{\textheight}%
		}{2.4ex}}%
	\stackon[.2pt]{#1}{\tmpbox}%
}

\providecommand{\keywords}[1]
{
	\small
	\textit{Keywords: #1}
}

\title{\large\bfseries\MakeUppercase{A sharp three-particle fractional Hardy inequality and an angular Selberg-type identity}$^{\dagger}$} 
\author{\small\MakeUppercase{Rajesh Mahadevan, \; Franco Olivares,\; and\; Andres Zuniga}}
\date{}

\newcommand{\addressblock}[1]{%
	\par\noindent 
	{\scshape #1}\par\medskip
}

\newcommand{\EmailB}[1]{%
	\textit{E-mail address}: %
	{\upshape\ttfamily\bfseries\href{mailto:#1}{#1}}%
}

\makeatletter
\renewcommand{\tableofcontents}{%
	\@starttoc{toc}%
}
\makeatother
\begin{document}
	\maketitle
	\thispagestyle{plain}
	\vspace{-2.25em} 
	
	\begingroup
	\renewcommand{\thefootnote}{\fnsymbol{footnote}}
	\setcounter{footnote}{2}
	\footnotetext{%
		\textit{Date:} 28 May 2026.\par 
		\keywords{Fractional Hardy inequality; fractional Laplacian;  nonlocal operators; many-particle systems; ground-state representation; singular integral identities; multi-body interactions.}\par
		2020 \textit{Mathematics Subject Classification.} Primary 35R11; Secondary 26D10, 47G30, 35P15.\par
		\textcopyright\ 2026 by the authors. Faithful reproduction of this article, in its entirety,
		by any means is permitted for non-commercial purposes.%
	}
	\endgroup
	
	\setcounter{footnote}{0}
	
	\begin{abstract}
		We establish a sharp three-particle fractional Hardy inequality for the Laplacian of order $s\in(0,1)$ in dimension $d\ge 4-2s$ (Theorem~\ref{thm:main:1}), involving an explicit intrinsically three-body interaction potential $\V$. The inequality holds with the optimal two-particle fractional Hardy constant $\CFH$, which is shown to be sharp relative to the fixed potential $\V$. This potential $\V$ strictly dominates the standard pairwise Coulomb-type interaction and captures genuine three-body effects. As a consequence, we derive a nontrivial many-particle fractional Hardy inequality for $N\geq 3$, and, in the regime $N>d+1$, obtain an improved Coulomb-type inequality with a strictly larger constant, agreeing in spirit with the results of Hoffmann-Ostenhof et al.~\cite{hoffmann2008many} and Lundholm~\cite{lundholm2015geometric}. The proof relies on a fractional \emph{ground-state representation} method adapted to three-particle interactions, combined with an explicit evaluation of the resulting nonlocal interaction term. This evaluation is achieved through a \emph{new singular integral identity} of Selberg type (Theorem~\ref{thm:main:2}), extending the three-fold formula of Grafakos-Morpurgo~\cite{grafakos1999selberg} beyond the radial setting. This identity provides the analytic mechanism underlying the emergence of the three-body potential and may be of independent interest in harmonic analysis.
	\end{abstract}
	
	\begin{center}
		{\large\textsc{Contents}}
	\end{center}
	\vspace{-1.0em}
	\tableofcontents
	
	\section{Introduction and main results}
	
	Hardy-type inequalities play a central role in analysis and mathematical physics, particularly in regimes where singular potentials compete with kinetic energy. In the two-particle setting, the classical Hardy inequality~\cite{hardy1952inequalities} asserts that, for $d\geq 3$,
	\begin{equation}\label{eq:intro:hardyIneq}
		\intd |\nabla u(x)|^2\,dx
		\ \geq\
		\frac{(d-2)^2}{4}\intd \frac{|u(x)|^2}{|x|^2}\,dx\qquad\text{for all } u\in C^\infty_c(\R^d),
	\end{equation}
	while its nonlocal counterpart, the fractional Hardy inequality, states that for $d\ge 2$ and $s\in(0,1)$,
	\begin{equation}\label{eq:intro:fH}
		\langle (-\Delta)^s u,u\rangle_{L^2(\R^d)}
		\ \geq\
		\CFH \intd \frac{|u(x)|^2}{|x|^{2s}}\,dx
		\qquad \text{for all } u\in C^\infty_c(\R^d),
	\end{equation}
	where 
	\begin{equation}\label{eq:def:fracHardy:constant}
		\CFH
		\ =\
		2^{2s}\frac{\Gamma^2(\frac{d+2s}{4})}{\Gamma^2(\frac{d-2s}{4})}
	\end{equation}
	is the sharp constant in~\eqref{eq:intro:fH}. We refer to Herbst~\cite{herbst1977spectral}, Kato~\cite{kato1966perturbation}, Yafaev~\cite{yafaev1999sharp}, and Frank--Lieb--Seiringer~\cite{frank2008hardy} for background, sharp constants, and further developments in the fractional setting. In the special case $s=\frac12$, this reduces to Kato's inequality and is closely related to the massless relativistic kinetic energy.
	
	We now turn to the many-particle setting. Let
	\begin{equation}\label{eq:intro:def:omega}
		\off{N}
		\ :=\
		\{x=(x_1,\dots,x_N)\in \R^{dN}: \;x_i\neq x_j \;\text{ for }\; i\neq j\}
	\end{equation}
	denote the collision-free configuration space. As a direct consequence of the two-particle inequalities~\eqref{eq:intro:hardyIneq} and~\eqref{eq:intro:fH}, one obtains the standard many-particle inequalities with Coulomb-type interactions:
	\begin{equation}\label{eq:intro:hardyQuadraticVersion}
		\int_{\R^{dN}} |\nabla u(x)|^2\,dx
		\ \geq\
		\frac{(d-2)^2}{2(N-1)}\int_{\R^{dN}}
		\left(\sum_{1\le i<j\le N}\frac{1}{|x_i-x_j|^2}\right)|u(x)|^2\,dx,
	\end{equation}
	and
	\begin{equation}\label{eq:intro:fracHardyQuadraticVersion} 
		\int_{\R^{dN}} u(x)\left(\sum_{i=1}^N \sLapj{i}u(x)\right)\,dx
		\ \geq\
		\frac{2\CFH}{N-1}\int_{\R^{dN}}\left(\sum_{1\le i<j\le N}\frac{1}{|x_i-x_j|^{2s}}\right)|u(x)|^2\,dx,
	\end{equation}
	valid for $u\in C^\infty_c(\off{N})$. In particular, when $N=3$, the constants coincide with the two-particle values, reflecting the absence of genuine three-body effects. 
	
	In the local case, Hoffmann-Ostenhof \emph{et al.}~\cite[Theorem 2.1]{hoffmann2008many} showed that the bound above,~\eqref{eq:intro:hardyQuadraticVersion}, can be improved by incorporating three-particle interactions:
	\begin{align}
		\int_{\R^{Nd}}|\nabla u(x)|^2\,dx \ \geq\ C_{mpH}(d,N)\int_{\R^{Nd}}\left(\sum_{1\leq i<j\leq N}\frac{1}{|x_i-x_j|^2}\right)|u(x)|^2\,dx,
	\end{align}
	for all $u\in C^{\infty}_c(\off{N})$, where the constant $C_{mpH}(d,N)$ satisfies
	\begin{align}
		C_{mpH}(d,N) \ \geq\ (d-2)^2\max\left\{\frac{1}{N},\frac{1}{1+\sqrt{1+\frac{3(d-2)^2(N-2)(N-1)}{2(d-1)^2}}}\right\}.
	\end{align}
	This improvement is driven by genuine three-body interactions and was further refined by Lundholm~\cite{lundholm2015geometric}. In particular, for $N=3$, the resulting constant is at least $4/3$ times larger than that in the purely pairwise inequality~\eqref{eq:intro:hardyQuadraticVersion}.
	
	The main objective of the present work is to establish a nonlocal analogue of this phenomenon. More precisely, we first obtain a sharp three-particle fractional Hardy inequality exhibiting an intrinsically three-body potential, and then derive corresponding many-particle consequences. Our approach to the theorem is intrinsically nonlocal.
	\begin{theorem}\label{thm:main:1}
		Let $s\in(0,1)$ and let $d\ge 4-2s$. Let $\off{3}$ be the collision-free set in $\R^{3d}$ defined in~\eqref{eq:intro:def:omega}. Then, for every $u\in C^\infty_c(\off{3})$,
		\begin{equation}\label{eq:thm:1:inequality}
			\int_{\R^{3d}} u(x)\left(\sum_{i=1}^3 \sLapj{i}u(x)\right)\,dx
			\ \geq\
			\CFH\int_{\R^{3d}} \V(x)\,|u(x)|^2\,dx,
		\end{equation}
		where $\CFH$ is the sharp constant in the fractional Hardy inequality~\eqref{eq:def:fracHardy:constant}, and this constant is also sharp for~\eqref{eq:thm:1:inequality}. The three-particle Hardy potential is given by
		\begin{equation}\label{eq:thm:1:def:potential}
			\V(x_1,x_2,x_3)
			\ :=\
			\sum_{{\rm cyclic}}
			\frac{|x_k-x_j|^{2s}}
			{|x_k-x_i|^{2s}\,|x_j-x_i|^{2s}},
			\qquad (x_1,x_2,x_3)\in \off{3},
		\end{equation}
		where ``{\it \mbox{\rm cyclic}}'' denotes the cyclic sum over the indices $(i,j,k)$ in the set
		\[
		\{(1,2,3),\quad (2,3,1),\quad (3,1,2)\}. 
		\]
	\end{theorem}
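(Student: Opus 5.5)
We will use a fractional ground-state representation with the three-body trial weight
\[
\omega(x)=\prod_{1\le i<j\le 3}|x_i-x_j|^{-\alpha},\qquad \alpha=\tfrac{d-2s}{2},
\]
that is, the exponent of the optimizer of the two-particle inequality~\eqref{eq:intro:fH}. We write $u=\omega v$ with $v\in C^\infty_c(\off{3})$. Applying the standard ground-state identity to each one-particle operator $\sLapj{i}$, with the other two variables frozen, gives
\[
\int u\sum_i\sLapj{i}u
=\sum_i\frac{c_{d,s}}{2}\iint\frac{|v(\dots,x_i,\dots)-v(\dots,y_i,\dots)|^2}{|x_i-y_i|^{d+2s}}\,\omega(\cdot x_i\cdot)\,\omega(\cdot y_i\cdot)
+\int\Big(\frac{\sum_i\sLapj{i}\omega}{\omega}\Big)|u|^2 .
\]
The first sum is nonnegative. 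So the inequality~\eqref{eq:thm:1:inequality} reduces to the pointwise computation
\[
\frac{\sum_i\sLapj{i}\omega}{\omega}\ \ge\ \CFH\,\V \quad\text{on } \off{3},
\]
and we aim to prove equality there. Since $\omega$ is not in $L^2$, we will justify the formula in the usual way. We take $v$ compactly supported away from collisions, and each $\sLapj{i}\omega$ is locally integrable there. The only requirement is that $\omega(\cdot)$, as a function of $x_i$, lies in the weighted space $L^1(\langle x_i\rangle^{-d-2s})$. It decays like $|x_i|^{-2\alpha}=|x_i|^{-(d-2s)}$, so this holds for every $s$.

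The key computation is $\sLapj{1}\omega$ for fixed $x_2,x_3$. As a function of $x_1$,
\[
\omega=|x_2-x_3|^{-\alpha}\,|x_1-x_2|^{-\alpha}\,|x_1-x_3|^{-\alpha},
\]
a product of two Riesz-type powers centered at different points. We need
\[
\sLap\big(|x-a|^{-\alpha}|x-b|^{-\alpha}\big)=\kappa\,\frac{|a-b|^{2s}}{|x-a|^{2s}\,|x-b|^{2s}}\,|x-a|^{-\alpha}|x-b|^{-\alpha},
\]
with $\kappa=\CFH$ when $2\alpha=d-2s$. This is the nonradial Selberg-type identity (Theorem~\ref{thm:main:2}), and it is the step I expect to be the main obstacle.

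I would first try to prove it by conformal covariance. Since $2\alpha=d-2s$, the function $|x-a|^{-\alpha}|x-b|^{-\alpha}$ is the image of the single power $|y|^{-\alpha}$ under a Kelvin-type inversion centered at $a$ that sends $b$ to infinity. The fractional Laplacian intertwines with inversions via the weight $|x|^{-(d-2s)}$ in front and $|x|^{-(d+2s)}$ behind. The standard identity $\sLap|y|^{-\alpha}=\CFH|y|^{-2s}|y|^{-\alpha}$ then transforms into the formula above, and the factor $|a-b|^{2s}$ appears from $|y|=|a-b|\,|x-b|/\big(|x-a|\,|x-b|\big)$ after scaling. If inversion bookkeeping fails, I would fall back on Fourier/Selberg integrals in the spirit of Grafakos--Morpurgo, which is the route the abstract suggests.

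The constraint $d\ge4-2s$ would enter as follows. The fixed frozen factor is harmless, but the identity requires the product to be locally integrable near $x_1=x_2=x_3$ in the full three-body sense, which amounts to $3\alpha<2d$. We would also need the ground-state representation to extend across the lower-dimensional collision sets without boundary terms. Both are ensured by the dimension hypothesis. Summing cyclically gives exactly $\CFH\,\V$, with equality.

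For sharpness, note that $\V$ is homogeneous of degree $-2s$. Near a configuration where $x_1\to x_2$ with $x_3$ far away, $\V\sim|x_1-x_2|^{-2s}$ plus bounded terms. We take $u_\epsilon(x)=\phi_\epsilon(x_1-x_2)\,\chi(x_1+x_2)\,\eta(x_3)$, where $\phi_\epsilon$ is an almost-optimizer of the two-particle inequality~\eqref{eq:intro:fH}, concentrating at scale $\epsilon$ (it can be cut off near $0$ since $d>2s$), and $\chi,\eta$ are separated bumps. After passing to center-of-mass coordinates, the ratio of the two sides tends to the two-particle fractional Hardy ratio, which can be made arbitrarily close to $\CFH$. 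So no larger constant is possible.
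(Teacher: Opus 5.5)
Your argument is correct, and it reaches the paper's pointwise identity by a genuinely different and much shorter route.

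\textbf{The paper's route.} It works with truncated weights $v_{\alpha,n}$, $\alpha<(d-2s)/2$. It splits $(-\Delta_1)^s$ of $|x_1-x_2|^{-\alpha}|x_1-x_3|^{-\alpha}$ by the fractional Leibniz rule into two two-body Hardy terms and a collision term $\mathcal{N}_s$. The limit of the collision term (Lemma~\ref{lem:limit:interaction}) is evaluated via Green's identity, Stein's two-point formula, the Grafakos--Morpurgo formula and Theorem~\ref{thm:main:2}. The law of cosines then collapses everything to $\CFH|x_2-x_3|^{2s}|x_1-x_2|^{-2s}|x_1-x_3|^{-2s}$. This is exactly your identity with $a=x_2$, $b=x_3$; Theorem~\ref{thm:main:2} is an ingredient of that computation, not the identity you need.

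\textbf{Your Kelvin argument.} It goes through once the bookkeeping is fixed. The inversion must be centred at the point sent to infinity, say $x^*=b+(x-b)/|x-b|^2$. Then $|x^*-a^*|=|x-a|/(|x-b|\,|a-b|)$; the relation you wrote is garbled. Hence $|x-b|^{2s-d}|x^*-a^*|^{-\alpha}=|a-b|^{\alpha}|x-a|^{-\alpha}|x-b|^{-\alpha}$, precisely because $2\alpha=d-2s$. The covariance $(-\Delta)^s\bigl[|x-b|^{2s-d}U(x^*)\bigr]=|x-b|^{-d-2s}\bigl[(-\Delta)^sU\bigr](x^*)$ holds for $U\in L^1(\langle y\rangle^{-d-2s})$ smooth near $x^*$. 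Apply it to $U=|\cdot-a^*|^{-\alpha}$ together with $(-\Delta)^s|y|^{-\alpha}=\CFH|y|^{-\alpha-2s}$, and you get your formula with $\kappa=\CFH$.

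\textbf{What each route buys.} Combined with the ground-state representation for the untruncated weight, which you justify adequately, your route bypasses the Leibniz rule, Green's identity and Theorem~\ref{thm:main:2} entirely. It also explains why only the conformal exponent $\alpha=(d-2s)/2$ works (compare Remark~\ref{rem:behavior:s}). It never uses $d\ge 4-2s$, so, written out, it gives the inequality for every $d>2s$, including the range the paper leaves open. In exchange, the paper's route yields Theorem~\ref{thm:main:2} as an independent harmonic-analysis result and an explicit two-body plus collision decomposition of $\V$.

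\textbf{Two corrections.} Neither damages the conclusion.

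First, your account of where $d\ge 4-2s$ enters is spurious. With $\alpha=(d-2s)/2$, the condition $3\alpha<2d$ holds for every $d$. No three-body integrability is needed, because the ground-state estimate is applied one variable at a time to $u$ supported away from all collisions. In the paper the restriction comes only from the Green-identity boundary terms of order $\epsilon^{(d+2s-4)/2}$. Equivalently, it comes from the bulk integrals with exponent $\frac{d-2s}{2}+2$, which diverge once this exponent is $\ge d$.

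Second, in the sharpness argument $\V\sim 2|x_1-x_2|^{-2s}$, not $|x_1-x_2|^{-2s}$, since the first two cyclic terms each contribute $|x_1-x_2|^{-2s}$. This factor $2$ is exactly what balances the kinetic side, where both $(-\Delta_1)^s$ and $(-\Delta_2)^s$ contribute the relative-coordinate energy. With the asymptotics you stated, the quotient would tend to $2\CFH$. After this correction, your test functions and limiting argument coincide with the paper's.
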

	
	We now emphasise the main structural features of Theorem~\ref{thm:main:1}. First, the potential $\V$ is genuinely of three-body type, in that it couples all relative distances in a nontrivial manner. Second, for every $s\in(0,1)$ and every $(x_1,x_2,x_3)\in\off{3}$, one has
	\begin{equation}\label{eq:intro:potential-bound}
		\V(x_1,x_2,x_3)
		\ \geq\
		\sum_{1\le i<j\le 3}\frac{1}{|x_i-x_j|^{2s}}\;,
	\end{equation}
	as a direct consequence of the elementary inequality
	\begin{equation}\label{eq:lem:h-o:1}
		\frac{c^{2s}}{a^{2s}b^{2s}} + \frac{b^{2s}}{a^{2s}c^{2s}} + \frac{a^{2s}}{b^{2s}c^{2s}}
		\ \geq\
		\frac{1}{a^{2s}}+\frac{1}{b^{2s}}+\frac{1}{c^{2s}}.
	\end{equation}
	Equality holds if and only if the three particles form an equilateral configuration. In particular, the potential $\V$ strictly dominates the standard pairwise Coulomb-type interaction, and the difference is encoded in a non-negative defect term $\rem(x_1,x_2,x_3)$, which vanishes precisely in the equilateral case. Third, although the coefficient $\CFH$ coincides with that appearing in the pairwise inequality~\eqref{eq:intro:fracHardyQuadraticVersion} when $N=3$, it is sharp relative to the three-body potential $\V$, but not necessarily with respect to the purely pairwise interaction.
	
	Moreover, Theorem~\ref{thm:main:1} yields a refined $N$-particle fractional Hardy inequality featuring an explicit non-negative defect term. When $N>d+1$, this defect term itself controls the pairwise Coulomb interaction, leading to a strict improvement of the constant $\CFH$, agreeing in spirit with the local results of Hoffmann-Ostenhof \emph{et al.}~\cite{hoffmann2008many} and Lundholm~\cite{lundholm2015geometric}. The threshold $N>d+1$ is optimal.
	
	The second principal contribution of the paper is the identification of the analytic mechanism underlying the emergence of the three-body interaction. In the fractional ground-state representation, the interaction term generated by the Leibniz rule can still be computed explicitly, but only through a new singular integral identity involving both radial and angular components. Theorem~\ref{thm:main:1} ultimately relies on this second main result, namely, a Selberg-type singular integral identity.

	\begin{theorem}[\textbf{Angular Selberg-type correlation identity}]\label{thm:main:2}
		Let $d\geq 2$, let $x,y,z\in\R^d$ be distinct points and let $0<d_1,d_2,d_3<d$ be exponents such that $d_1+d_2+d_3=2d$. Then
		\begin{align}
			&\intd
			\frac{(x-t)\cdot(y-t)}{|x-t|^{d_2+1}|y-t|^{d_3+1}}
			\frac{1}{|z-t|^{d_1}}\,dt\notag\\
			&\hspace{5em}
			=\ \CMOZ\;
			\frac{(x-z)}{|x-z|^{d-d_3+1}}
			\cdot
			\frac{(y-z)}{|y-z|^{d-d_2+1}}\,
			\frac{1}{|x-y|^{d-d_1}},\label{eq:thm:MOZ:1}
		\end{align}
		where the constant $\CMOZ$ is explicitly given by
		\begin{equation}\label{eq:thm:MOZ:2}
			\CMOZ
			\ =\ 
			\pi^{\frac d2}
			\frac{\Gamma(\frac{d-d_1}{2})}{\Gamma(\frac{d_1}{2})}
			\frac{\Gamma(\frac{d-d_2+1}{2})}{\Gamma(\frac{d_2+1}{2})}
			\frac{\Gamma(\frac{d-d_3+1}{2})}{\Gamma(\frac{d_3+1}{2})}.
		\end{equation}
	\end{theorem}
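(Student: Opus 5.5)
The plan is to reduce the angular integral in \eqref{eq:thm:MOZ:1} to scalar three-point integrals of Grafakos--Morpurgo type, evaluate those, and recombine. I will use the scalar three-fold Selberg formula of Grafakos--Morpurgo~\cite{grafakos1999selberg} as a black box: for $0<a,b,c<d$ with $a+b+c=2d$,
\[
\intd \frac{dt}{|x-t|^{a}|y-t|^{b}|z-t|^{c}}
\ =\ \pi^{\frac d2}\frac{\Gamma(\frac{d-a}{2})\Gamma(\frac{d-b}{2})\Gamma(\frac{d-c}{2})}{\Gamma(\frac a2)\Gamma(\frac b2)\Gamma(\frac c2)}\,
\frac{1}{|x-y|^{d-c}|x-z|^{d-b}|y-z|^{d-a}}.
\]
I denote the right-hand side by $G(a,b,c)$.

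\textbf{Step 1 (polarization).} Write $(x-t)\cdot(y-t)=\tfrac12\bigl(|x-t|^2+|y-t|^2-|x-y|^2\bigr)$. This splits the left-hand side of \eqref{eq:thm:MOZ:1} into three scalar integrals with exponent triples $(a,b,c)$ at $(x,y,z)$:
\[
I_1 \text{ with } (d_2-1,\,d_3+1,\,d_1),\qquad I_2 \text{ with } (d_2+1,\,d_3-1,\,d_1),\qquad I_3 \text{ with } (d_2+1,\,d_3+1,\,d_1).
\]
The third integral carries the factor $-\tfrac12|x-y|^2$.
\begin{itemize}
\item The exponents of $I_1$ and $I_2$ still sum to $2d$, so each is given directly by $G$.
\item The exponents of $I_3$ sum to $2d+2$, so $G$ does not apply to it directly.
\end{itemize}

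\textbf{Step 2 (handling $I_3$).} I use $\Delta_x|x-t|^{-\alpha}=\alpha(\alpha+2-d)|x-t|^{-\alpha-2}$ with $\alpha=d_2-1$. This gives
\[
I_3\ =\ \frac{1}{(d_2-1)(d_2+1-d)}\,\Delta_x I_1 ,
\]
and $I_1$ is the conformal integral $G(d_2-1,d_3+1,d_1)$. Applying $\Delta_x$ to the explicit product $|x-y|^{-(d-d_1)}|x-z|^{-(d-d_3-1)}$ produces three terms:
\begin{itemize}
\item two pure power terms;
\item a cross term $\nabla_x|x-y|^{-\cdot}\cdot\nabla_x|x-z|^{-\cdot}$, which contains $(x-y)\cdot(x-z)$.
\end{itemize}
I again polarize $(x-y)\cdot(x-z)$ into squared distances. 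After this, all three pieces are linear combinations of monomials $|x-y|^{p}|x-z|^{q}|y-z|^{r}$.

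\textbf{Step 3 (matching).} Polarize the right-hand side of \eqref{eq:thm:MOZ:1} in the same way, using $(x-z)\cdot(y-z)=\tfrac12(|x-z|^2+|y-z|^2-|x-y|^2)$. This gives three monomials. I then check that the coefficients of each monomial on the two sides agree with $\CMOZ$ as in \eqref{eq:thm:MOZ:2}. The tools here are only $\Gamma(w+1)=w\Gamma(w)$ and the prefactors $\tfrac12$ and $\frac{1}{(d_2-1)(d_2+1-d)}$. I expect the three coefficients to collapse because the products $\Gamma(\frac{d-d_2\pm1}{2})/\Gamma(\frac{d_2\pm1}{2})$ differ from the target by rational factors. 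A symmetric alternative for $I_3$ is to use $\Delta_y$ instead; it serves as a consistency check.

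\textbf{Main obstacle (convergence).} The pieces $I_1$, $I_2$, $I_3$ need not converge individually, even though the original integral does: its singularities are of order $|x-t|^{-d_2}$, $|y-t|^{-d_3}$, $|z-t|^{-d_1}$, and it decays like $|t|^{-2d}$. For instance, $d_3+1\ge d$ or $d_2-1\le 0$ may occur. I will first prove the identity on the open set of exponents where
\[
1<d_2,\,d_3<d-1,\qquad 0<d_1<d,
\]
with $d_1$ determined by the constraint $d_1+d_2+d_3=2d$; there every step is legitimate.

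I will then extend by analytic continuation in $(d_2,d_3)$. Both sides are real-analytic on the full admissible range: the left side by dominated convergence locally uniformly in the exponents, the right side by inspection of the Gamma factors. The two sides therefore agree everywhere.

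The case $d=2$ needs a separate check, since the window $1<d_2,d_3<d-1$ is empty when $d=2$. There I would instead continue analytically in the dimension-free parameters of the Grafakos--Morpurgo formula, which is the delicate point of the argument.
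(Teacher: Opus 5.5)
\paragraph{Comparison with the paper.} Your route differs from the paper's. The paper does not polarise the dot product. It normalises to $(w,e_1,0)$, splits $(x-t)\cdot(y-t)$ componentwise, and compares Fourier transforms: $\widehat{K}=\sum_j\widehat{h_j}\,\widehat{g_j}$ against $\widehat{J}$, the latter obtained from Proposition~\ref{prop:app:Fourier:convolution}. Your polarisation-plus-Laplacian reduction to the scalar Grafakos--Morpurgo formula is more elementary, and for $d\ge 3$ it works.

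In the window $1<d_2,d_3<d-1$ all three pieces converge. The relation $\Delta_x I_1=(d_2-1)(d_2+1-d)I_3$ holds with no $\delta$-term, since $d_2-1<d-2$. The Laplacian of $|x-y|^{-(d-d_1)}|x-z|^{-(d-d_3-1)}$ produces exactly the three monomials of the right-hand side, and the coefficients match. Write $\Gamma_1$ for the constant of $I_1$.
\begin{itemize}
\item The monomials $|x-y|^{-(d-d_1)}|x-z|^{-(d-d_3-1)}|y-z|^{-(d-d_2+1)}$ and $|x-y|^{2-(d-d_1)}|x-z|^{-(d-d_3+1)}|y-z|^{-(d-d_2+1)}$ receive $\pm\frac12\Gamma_1\frac{d-d_3-1}{d_2-1}=\pm\frac12\CMOZ$.
\item The monomial $|x-y|^{-(d-d_1)}|x-z|^{-(d-d_3+1)}|y-z|^{-(d-d_2-1)}$ receives $\frac12\CMOZ\,\frac{(d_3-1)-(d-d_1)}{d-d_2-1}=\frac12\CMOZ$, using $d_1+d_3=2d-d_2$.
\end{itemize}
For the continuation step, complexify the exponents. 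Locally uniform domination together with Morera's theorem gives holomorphy, whereas dominated convergence alone gives only continuity. Then note that the admissible set $\{0<d_2,d_3<d,\ d<d_2+d_3<2d\}$ is convex, hence connected.

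\paragraph{The gap at $d=2$.} The statement includes $d=2$, and the problem there is worse than an empty window. In $d=2$ one has $d_2+d_3=4-d_1>2$, so at most one of $d_2,d_3$ lies below $1$. But $I_1$ converges only if $d_3<1$ (because of the factor $|y-t|^{-(d_3+1)}$), $I_2$ only if $d_2<1$, and $I_3$ needs both. Hence in $d=2$ there are no admissible exponents, real or complex, at which your polarisation can be integrated termwise. Analytic continuation in the exponents therefore never starts from a point where Steps~1--3 are valid. This includes continuation along the constraint surface $a+b+c=2d$, which is the only place the closed form is available. Your remark about continuing in \emph{dimension-free parameters} does not supply the missing step.

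Closing the gap needs a new ingredient. Some options:
\begin{itemize}
\item Continue in the dimension. Define the integral for non-integer $d$ through the plane spanned by $x,y,z$ and a transverse weight $\rho^{d-3}\,d\rho$, prove the scalar formula there directly (integer values of $d$ alone do not determine an analytic continuation), and let $d\downarrow 2$.
\item Do a separate two-dimensional computation with complex coordinates and complex beta integrals.
\item Use the paper's Fourier argument. It is uniform in $d\ge 2$ because its componentwise splitting keeps every piece dominated by $|x-t|^{-d_2}|y-t|^{-d_3}|z-t|^{-d_1}$, so no integrability is lost.
\end{itemize}
For the paper's own application the gap is harmless, since Theorem~\ref{thm:main:2} is used only with $d\ge 4-2s>2$. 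As a proof of the stated theorem, however, your argument covers only $d\ge 3$.
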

	
	\noindent	
	From the viewpoint of the present paper, Theorem~\ref{thm:main:2} is precisely the mechanism that converts the nonlocal interaction term arising in the ground-state representation into the explicit three-particle potential $\V$. More broadly, it extends the three-fold Selberg formula of Grafakos-Morpurgo~\cite{grafakos1999selberg} to a kernel with degree-one angular structure, while still admitting an explicit closed-form evaluation under the homogeneity condition $d_1+d_2+d_3=2d$. In the proof of Theorem~\ref{thm:main:1}, this identity provides the explicit evaluation of the collision term produced by the fractional Leibniz rule, and the subsequent application of the law of cosines yields the three-body potential appearing there. Beyond its role in the present argument, Theorem~\ref{thm:main:2} shows that the Grafakos-Morpurgo formula is not confined to purely radial kernels, and may therefore be of independent interest in harmonic analysis.\smallskip
	
	The main theorem immediately yields an $N$-particle consequence by summing over triples. In this sense, the many-particle result is a corollary of the three-particle analysis and, therefore, cannot be regarded as the primary result of this paper. Consequently, as in Hoffmann-Ostenhof \emph{et al.}~\cite{hoffmann2008many} for the local case, the key to the many-particle fractional Hardy inequality is the three-particle fractional Hardy inequality.
	
	\begin{corollary}\label{cor:many-particle}
		Let $s\in(0,1)$ and let $d\ge 4-2s$. Let $\off{N}\subset\R^{dN}$ be the collision-free set defined in~\eqref{eq:intro:def:omega}. For any $N\ge 3$, define
		\[
		V_{N,3}(x_1,\ldots,x_N)
		\ :=\ 
		\sum_{1\le i<j<k\le N}\V(x_i,x_j,x_k)
		\qquad \text{for } (x_1,\ldots,x_N)\in\off{N},
		\]
		where $\V$ is the three-particle Hardy potential in~\eqref{eq:thm:1:def:potential}. Then, for every $u\in C^\infty_c(\off{N})$,
		\begin{align}
			&\int_{\R^{dN}} u(x)\left(\sum_{i=1}^N \sLapj{i}u(x)\right)\,dx\notag\\
			&\hspace{4em}
			\ \geq\ 
			\frac{2\,\CFH}{(N-1)(N-2)}
			\int_{\R^{dN}}V_{N,3}(x_1,\ldots,x_N)|u(x)|^2\,dx.
			\label{eq:cor:many-particle:Hardy}
		\end{align}
	\end{corollary}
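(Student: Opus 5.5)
The plan is to sum the three-particle inequality of Theorem~\ref{thm:main:1} over all triples $\{i,j,k\}$. This is the standard averaging argument: each single-particle kinetic term is shared among the triples containing that particle, and we count how many such triples there are.

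\textbf{Step 1: apply Theorem~\ref{thm:main:1} to one triple.} Fix $u\in C^\infty_c(\off{N})$ and a triple $i<j<k$. Freeze the remaining variables $x' = (x_l)_{l\notin\{i,j,k\}}$. For each fixed $x'$, the function
\[
(x_i,x_j,x_k)\mapsto u(x)
\]
belongs to $C^\infty_c(\off{3})$. This holds because $\supp u$ is a compact subset of $\off{N}$, so it stays a positive distance away from the collision sets $\{x_i=x_j\}$, $\{x_j=x_k\}$ and $\{x_i=x_k\}$. The operator $\sLapj{i}$ acts only on the variable $x_i$, so it commutes with freezing $x'$. Theorem~\ref{thm:main:1} therefore gives
\[
\int_{\R^{3d}} u\bigl(\sLapj{i}+\sLapj{j}+\sLapj{k}\bigr)u \,dx_i\,dx_j\,dx_k
\ \geq\ \CFH\int_{\R^{3d}} \V(x_i,x_j,x_k)|u|^2\,dx_i\,dx_j\,dx_k .
\]
Integrating this over $x'\in\R^{d(N-3)}$ gives the same inequality on all of $\R^{dN}$. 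To justify the Fubini step, I would write $\int u\,\sLapj{i}u\,dx$ on the Fourier side as $\int|\xi_i|^{2s}|\hat u|^2\,d\xi$, or use the partial Fourier transform in $x_i$ alone; either way the integrand is non-negative and integrable.

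\textbf{Step 2: sum over triples and count.} Sum the Step~1 inequality over all $\binom N3$ triples. A fixed index $i$ lies in exactly $\binom{N-1}{2}=\frac{(N-1)(N-2)}{2}$ triples. The left-hand side of the sum is therefore
\[
\frac{(N-1)(N-2)}{2}\int_{\R^{dN}} u\sum_{i=1}^N\sLapj{i}u\,dx .
\]
The right-hand side is $\CFH\int V_{N,3}|u|^2\,dx$, directly from the definition of $V_{N,3}$. Dividing both sides by $\frac{(N-1)(N-2)}{2}$ yields~\eqref{eq:cor:many-particle:Hardy}.

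The argument has no genuine obstacle. The only point needing care is the restriction in Step~1: we must check that the slices of $u$ really lie in $C^\infty_c(\off{3})$, and that the quadratic forms split across the variables. Both follow from compactness of the support and from the tensor structure of $\sum_i\sLapj{i}$.
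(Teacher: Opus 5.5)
Your proposal is correct and follows the paper's route: the paper obtains the corollary by summing the three-particle inequality of Theorem~\ref{thm:main:1} over all triples, using that each index lies in $\binom{N-1}{2}=\frac{(N-1)(N-2)}{2}$ of them. Your slicing argument (freezing $x'$, checking that the slices lie in $C^\infty_c(\off{3})$, and integrating back via Fubini) spells out details the paper leaves implicit.
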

	
	Let us record two structural limitations of our approach. First, although in Theorem~\ref{thm:main:1} the coefficient $\CFH$ is sharp for the potential $\V$, passing to the limit as $s\uparrow 1$ does not recover the sharp constant in the local case. This is further explained in Remark~\ref{rem:behavior:s}. Second, the dimensional restriction $d\geq 4-2s$ arises naturally from the singular structure of the correlation term in the proof (see Remark~\ref{rem:dim:restriction} for further explanation). Since $s \in (0,1)$, the condition $d\geq 4-2s$ implies $d \ge 3$. In the physically relevant case $d=3$, this condition requires $s\geq \tfrac{1}{2}$, and this includes the ultra-relativistic case $(d,s)=(3,\tfrac{1}{2})$. We stress, however, that the condition $d\geq 4-2s$ should presently be viewed as a restriction of the proof and not as an obstruction to the validity of Theorem~\ref{thm:main:1} itself in the remaining cases. What our analysis shows is that, in the interaction computation underlying Lemma~\ref{lem:limit:interaction}, the boundary terms produced by the Green-type reduction cease to be integrable, or at least to be controllable by the present argument, once $d+2s<4$. The authors do not presently know of any counterexample showing that the inequality fails below this threshold. In this sense, the range $d\geq 4-2s$ marks the boundary of applicability of our ground-state-representation method combined with the explicit correlation identity. It remains an open problem to determine whether the theorem, or perhaps a suitable adaptation of it, continues to hold in the complementary regime.

	The proof strategy may be summarised as follows. We begin with a fractional ground-state representation adapted to three particles. The natural ground state is not directly admissible, and we therefore introduce a truncated approximation. To estimate the kinetic energy, we employ a fractional Leibniz rule. This reduces the proof of Theorem~\ref{thm:main:1} to the explicit evaluation of a nonlocal interaction term. The latter is handled by Theorem~\ref{thm:main:2}, whose proof proceeds by a Fourier-transform analysis of a new singular correlation kernel in the setting of tempered distributions. The resulting identity yields the explicit three-body potential $\V$ and hence the sharp inequality. 
	
	The remainder of the paper is organised as follows. In Section~\ref{sec:prelim} we collect the notation and analytic tools used throughout the paper, including the fractional ground-state estimate and the fractional Leibniz rule. In Section~\ref{sec:proof:main_thm} we prove Theorem~\ref{thm:main:1}, establish sharpness of the coefficient $\CFH$, and derive refined geometric and many-particle consequences of the three-body potential, including a strict quantitative improvement of the pairwise $N$-particle bound when $N>d+1$. In Section~\ref{subsec:new_integral_formula}, we prove the angular Selberg-type identity, Theorem~\ref{thm:main:2}. The appendix contains auxiliary technical results used in the proofs of the main theorems. 
	
	\section{Preliminaries}\label{sec:prelim}
	
	In this section we collect the notation and analytic tools used throughout the paper. We begin with basic facts on tempered distributions and Fourier transforms, and then recall the singular integral identities and properties of the fractional Laplacian needed in the proofs of Theorem~\ref{thm:main:1} and~Theorem~\ref{thm:main:2}.\medskip
	
	\subsection{Distributions and Fourier transforms}
	We use standard notation for Schwartz functions and tempered distributions. It is customary to denote by $\calS(\R^d)$ the Schwartz space of rapidly decaying smooth functions and by $\calS'(\R^d)$ its topological dual. 
	
	For $a\in\R^d$ and $\rho\in\R\setminus\{0\}$, we define the translation, dilation and reflection of $\varphi\in\calS(\R^d)$ by 
	\begin{equation}\label{eq:def:trans-dil-refl-function}
		(\tau^{a} \varphi)(x)\ :=\  \varphi(x-a)\,, \quad   (\delta^{\rho}\varphi)(x)\ :=\ \varphi(\rho x)\,, \quad\widetilde{\varphi}(x)\ :=\ \varphi(-x) \quad \mbox{ for } x \in \R^d\,.
	\end{equation}
	The corresponding actions on tempered distributions $T\in\calS'(\R^d)$ are defined by duality:
	\begin{equation}\label{eq:def:trans-dil-refl-dist}
		\langle \tau^{a}T \,, \,\varphi \rangle\ =\ \langle T \,, \,\tau^{-a}\varphi \rangle \,, \quad
		\langle \delta^{\rho}T \,, \, \varphi \rangle\ =\ \langle T \,, \, |\rho|^{-d} (\delta^{1/\rho}\varphi)\rangle\, , \quad
		\langle \widetilde{T} \,, \,\varphi \rangle\ =\ \langle T \,, \, \widetilde{\varphi} \rangle \,. 
	\end{equation}
	If $\psi\in\calS(\R^d)$ and $T\in\calS'(\R^d)$, then the product $\psi T$ and the convolution $\psi\ast T$ are defined by duality:
	\begin{equation}\label{eq:prelim:def:Fouriertempered-mult}
		\langle \psi T \,, \,\varphi \rangle \ =\ 
		\langle T \,, \, \psi \varphi \rangle,
	\end{equation} 	
	and 
	\begin{equation}\label{eq:prelim:def:Fouriertempered-conv}
		\langle  \psi \ast T \,, \,\varphi \rangle \ =\ 
		\langle T \,, \, \widetilde{\psi} \ast \varphi \rangle, 
	\end{equation} 
	for every $\varphi \in \mathcal{S}(\R^d)$.
	
	Throughout the paper we adopt the Fourier transform convention 
	\begin{equation}\label{eq:prelim:def:Fourier}
		\uhat(\xi)\ :=\ \intd e^{-2\pi i\xi\cdot x}u(x)\,dx,
	\end{equation}
	for every $ u\in\mathcal{S}(\R^d)$; see Stein~\cite{stein1971fourier}, Lieb-Loss~\cite{lieb2001analysis} or Grafakos~\cite{grafakos2008classical}. With this definition, the inverse Fourier transform is given by 
	\begin{equation}\label{eq:prelim:def:Fourierinverse}
		\ucheck(x)\ :=\ \intd e^{2\pi ix\cdot \xi}u(\xi)\,d\xi,
	\end{equation} 
	for every $u\in\mathcal{S}(\R^d)$. 
	Under this convention one has, for Schwartz functions $f$ and $g$,
	\begin{equation}\label{FTprop-functions}
		\widehat{\widetilde{f}}\ =\  \widetilde{\widehat{f}}\,, \qquad \widehat{\widehat{f}}\ =\ \widetilde{f}\,, \qquad \widehat{\tau^yf}(\xi)\ =\ e^{-2\pi i y \cdot \xi}\widehat{f}(\xi)\,, \qquad 
		\widehat{f\ast g}(\xi)\ =\ \widehat{f}(\xi) \widehat{g}(\xi)\,;
	\end{equation}
	see Grafakos~\cite[Proposition 2.2.11]{grafakos2008classical}.
	\smallskip
	
	\noindent For $T\in \calS'(\R^d)$, the Fourier transform is defined by duality:
	\begin{equation}\label{eq:prelim:def:Fouriertempered}
		\langle \widehat{T} \,, \,\varphi \rangle \ =\
		\langle T \,, \, \widehat{\varphi} \rangle,
	\end{equation} 
	for all $\varphi\in\mathcal{S}(\R^d)$. We shall use the following standard identities: for $\psi\in\calS(\R^d)$ and $T\in\calS'(\R^d)$,
	\begin{equation}\label{FTprop-distributions}
		\widehat{\widetilde{T}} \ =\ \widetilde{\widehat{T}}\,, \qquad \widehat{\widehat{T}}\ =\ \widetilde{T}\,, \qquad \widehat{\tau^yT}\ =\ e^{-2\pi i y \cdot \xi }\,\widehat{T}\,, \qquad \widehat{\psi T}\ =\ \widehat{\psi} \ast \widehat{T}\,\,\;\text{ in }\calS'(\R^d).
	\end{equation}
	See, for instance, Grafakos~\cite[Proposition 2.3.22]{grafakos2008classical}.	
	\smallskip 
	
	A key role in the sequel is played by the Riesz kernels. For $0 < \alpha < d$, we define
	\begin{equation}\label{eq:def:Rieszkernel}
		K_\alpha (x)\ :=\ 
		\frac{1}{\gamma(\alpha)}|x|^{-(d-\alpha)} ,\qquad \gamma(\alpha)\ :=\ 2^{\alpha}\pi^{\frac d2}  \dfrac{\Gamma(\frac{\alpha}{2})}{\Gamma(\frac{d-\alpha}{2})}\,,
	\end{equation}
	following Stein~\cite{stein1986singular}. Since $K_\alpha$ is locally integrable and bounded at infinity, it defines a tempered distribution. Moreover, for $0<\alpha<d$ with $\alpha\neq d-2$,  
	\begin{equation}\label{lap-Riesz}
		\Delta(|x|^{-\alpha})\ =\ (-\alpha)(d-\alpha-2)|x|^{-\alpha-2}\,.
	\end{equation}
	We recall the following standard identities for Riesz kernels and their Fourier transforms, which will be used repeatedly in the sequel. We use the classical Fourier transform formula, understood in the sense of tempered distributions:
	\begin{equation}\label{eq:Fourier:Lieb-Loss:aux:1}
		\wwidehat{|\cdot|^{-(d-\alpha_1)}} \ =\
		\frac{\CLL{\alpha_1}}{\CLL{d-\alpha_1}}\,|\cdot|^{-\alpha_1},
	\end{equation}
	where 
	\begin{equation}\label{eq:Fourier:Lieb-Loss:constant}
		\CLL{\mu}\ :=\
		\pi^{-\frac{\mu}{2}}\Gamma(\tfrac{\mu}{2}),\quad \text{ for }\mu>0\,;
	\end{equation}
	see Lieb-Loss~\cite[Theorem~5.9]{lieb2001analysis} or Stein \cite[Chapter 5:~Lemma 2]{stein1986singular}. One can readily check that the constants $\gamma(\alpha)$ and $c_\alpha$ are related by
	\begin{equation}\label{gamma-c}
		(2 \pi)^{-\alpha} \gamma(\alpha)\ =\ \dfrac{\CLL{\alpha}}{\CLL{d-\alpha}}.
	\end{equation}
	Another identity we shall also use is
	\begin{equation}\label{lem:app:Fourier}
		\wwidehat{\dfrac{x_j}{|x|^{d-\alpha+1}}}(\xi)\ =\
		\frac{\CLL{\alpha+1}}{i\,\CLL{d-\alpha+1}}\,
		\frac{\xi_j}{|\xi|^{\alpha+1}},\quad \text{ for }\;j\in\{1,\ldots,d\}.
	\end{equation}
	More generally, if $P_k\in\C[x]$ is a homogeneous harmonic polynomial of degree $k$ and $\sigma\in\C$ has $0<\operatorname{Re}\sigma<d$, then
	\begin{equation}\label{eq:Stein Weiss th 4.1}
		\wwidehat{\,P_k(x)\,|x|^{-d-k+\sigma}\,}(\xi)
		\ = \ \frac{\CLL{k+\sigma}}{i^k\,\CLL{d+k-\sigma}}\,
		P_k(\xi)\,|\xi|^{-k-\sigma}.
	\end{equation}
	See Stein-Weiss~\cite[Theorem 4.1]{stein1971fourier} for this general formula.
	\medskip
	
	\subsection{Singular integral formulae}
	The Riesz potential operator $I_\alpha$ is defined by convolution with the Riesz kernel $K_{\alpha}$: 
	\begin{equation}\label{eq:def:RieszOperator}
		I_\alpha(\varphi)(x)\ :=\
		\frac{1}{\gamma(\alpha)}\intd \frac{1}{|x-y|^{d-\alpha}}\,\varphi(y)\, dy \quad \mbox{ for all } \varphi \in \calS(\R^d)\,,
	\end{equation}
	where $\gamma(\alpha)$ is given in~\eqref{eq:def:Rieszkernel}. 
	
	Equivalently, using \eqref{eq:Fourier:Lieb-Loss:aux:1} and \eqref{gamma-c}, it can be defined in terms of its Fourier transform:
	\begin{equation}\label{FT-Riesztr}
		\wwidehat{I_\alpha  (\varphi)}(\xi)\ =\
		(2\pi)^{-\alpha}|\xi|^{-\alpha} \widehat{\varphi}(\xi).
	\end{equation}
	The Riesz potentials satisfy the semigroup property: for $0<\alpha,\beta<d$,    
	\begin{equation}\label{semigp-prop}		
		I_\alpha \circ I_\beta \ =\  I_{\alpha+\beta}, \quad \mbox{ whenever }\;\alpha + \beta < d;
	\end{equation}
	see Stein~\cite[Chapter 5]{stein1986singular}. This can be deduced in several ways, for instance, as a consequence of \eqref{FT-Riesztr}. The property can be written as an integral formula for the convolution of singular potentials as follows: given distinct $x,y\in\R^d$ ($d\geq 1$) and exponents $0<\alpha,\beta<d$ satisfying $\alpha+\beta>d$, there holds (see Stein \cite[Chapter 5:~equation~(8)]{stein1986singular})
	\begin{equation}\label{eq:prelim:stein:formula}
		\int_{\R^d}|x-t|^{-\alpha}|y-t|^{-\beta}\, dt\ =\
		\CStein\,|x-y|^{d-\alpha-\beta},
	\end{equation}
	where 
	\[
	\CStein \ :=\
	\dfrac{\gamma(d-\alpha)\gamma(d-\beta)}{\gamma(2d-(\alpha+\beta))}\ =\
	\pi^{\frac{d}{2}}\frac{\Gamma(\frac{d-\alpha}{2})\Gamma(\frac{d-\beta}{2})\Gamma(\frac{\alpha+\beta-d}{2})}{\Gamma(\frac{\alpha}{2})\Gamma(\frac{\beta}{2})\Gamma(d-\frac{\alpha+\beta}{2})} \,.
	\]
	The conditions $\alpha, \beta < d$ and $\alpha+\beta >d$ are sufficient for the integrability of the integrand on the left-hand side of \eqref{eq:prelim:stein:formula} and the resulting integral is homogeneous of degree $d-\alpha-\beta$ in $|x-y|$. As we shall find it useful, we introduce the notation
	\begin{equation}\label{beta-integral}
		\calB_{\alpha,\beta}(x,y)\ :=\
		\int_{\R^d}|x-t|^{-\alpha}|y-t|^{-\beta}\, dt\,,
	\end{equation}	
	and suppress the subscripts when no confusion can arise. 
	
	Formula~\eqref{eq:prelim:stein:formula} admits the following three-fold analogue due to Grafakos and Morpurgo~\cite[Theorem~1]{grafakos1999selberg}. Given $d\geq 3$, $x,y,z\in\R^d$ distinct, and exponents $0<\alpha_1,\alpha_2,\alpha_3<d$ with $\alpha_1+\alpha_2+\alpha_3=2d$, there holds
	\begin{align}\label{eq:prelim:grafakos:formula}
		& \int_{\R^d}|z-t|^{-\alpha_1}|x-t|^{-\alpha_2}|y-t|^{-\alpha_3}\,dt\notag\\
		&\qquad\ =\ \CGM\,|x-y|^{\alpha_1-d}|y-z|^{\alpha_2-d}|z-x|^{\alpha_3-d}\,,
	\end{align}
	with 
	\[
	\CGM \ :=\
	\pi^{\frac{d}{2}}\frac{\Gamma(\frac{d-\alpha_1}{2})\Gamma(\frac{d-\alpha_2}{2})\Gamma(\frac{d-\alpha_3}{2})}{\Gamma(\frac{\alpha_1}{2})\Gamma(\frac{\alpha_2}{2})\Gamma(\frac{\alpha_3}{2})}\,.
	\]
	Homogeneity with respect to $(x,y,z)$ on either side forces the constraint $\alpha_1+\alpha_2+\alpha_3=2d$. We also introduce the following notation for the above integral: 
	\begin{equation}\label{3fold-Selbergintegral}
		\GM_{\alpha_1,\alpha_2,\alpha_3}(x,y,z)\ =\
		\int_{\R^d}|x-t|^{-\alpha_2}|y-t|^{-\alpha_3}|z-t|^{-\alpha_1}\,dt,
	\end{equation}		
	and we shall suppress the parameters $\alpha_1$, $\alpha_2$ and $\alpha_3$ once they are fixed.
	\smallskip
	
	\noindent	
	Theorem~\ref{thm:main:2} extends~\eqref{eq:prelim:grafakos:formula} by incorporating angular factors. More precisely, identity~\eqref{eq:thm:MOZ:1} depends not only on the mutual distances among $x,y,z$, but also on their pairwise scalar products. The assumptions on the exponents are the same as in~\eqref{eq:prelim:grafakos:formula}. Thus, Theorem~\ref{thm:main:2} may be viewed as a natural generalisation of~\eqref{eq:prelim:grafakos:formula}. By contrast, explicit analogues for $k$-fold integrals with $k\geq 4$ are unavailable in general: see Wu \emph{et al.}~\cite{wu2020kfold}.
	
	\subsection{Fractional Laplacian: definition and properties}
	A standard definition of the fractional Sobolev space $H^s(\R^d)$, for $0<s<1$, is
	\begin{equation}\label{eq:prelim:def:Sobolev}
		H^s(\R^d)\ =\
		\left\{u\in L^2(\R^d):\;\intd(1+|\xi|^{2s})|\F u(\xi)|^2\,d\xi <\infty\right\}.
	\end{equation}
	Here $\F$ denotes the Fourier transform convention used in Frank-Lieb-Seiringer~\cite{frank2008hardy}, namely
	\begin{equation*}
		\F u(\xi)\ :=\
		\dfrac{1}{(2\pi)^{\frac d2}}\intd e^{-i\xi\cdot x}u(x)\,dx\,,
	\end{equation*}
	defined first for $u\in\calS(\R^d)$ and then extended to $H^s(\R^d)$. For $u \in H^s(\R^d)$, one has the identity
	\begin{equation}\label{eq:prelim:identity:1}
		\intd |\xi|^{2s}|\F u(\xi)|^2\,d\xi\ =\
		a_{s,d}\intd\intd \frac{|u(x)-u(y)|^2}{|x-y|^{d+2s}}\,dx\,dy
	\end{equation}
	where, by~Frank \emph{et al.}~\cite[Lemma 3.1]{frank2008hardy},
	\begin{equation}\label{eq:prelim:def:FourierConstant}
		a_{s,d}\ :=\
		\frac{s\,2^{2s-1}}{\pi^{\frac d2}}\frac{\Gamma(\frac{d+2s}{2})}{\Gamma(1-s)}.
	\end{equation} 
	Since throughout this paper we use the Fourier transform convention~\eqref{eq:prelim:def:Fourier}, we first relate these two normalisations. A direct computation shows that
	\begin{equation}\label{rltnFT}
		\F u \ =\ (2 \pi)^{\frac d2}\,\widehat{ \delta^{2 \pi} u }\quad\text{ for }u \in H^s(\R^d),
	\end{equation}
	where $\delta^a$ denotes the dilation operator introduced in~\eqref{eq:def:trans-dil-refl-function}. Substituting~\eqref{rltnFT} into~\eqref{eq:prelim:identity:1} and performing the corresponding change of variables, we obtain 
	\begin{equation}\label{eq:prelim:identity:2}
		(2\pi)^{2s} \intd |\xi|^{2s}|\widehat{v}(\xi)|^2d\xi 
		\ =\  a_{s,d} \intd\intd \frac{|v(x')-v(y')|^2}{|x'-  y'|^{d+2s}}\, dx'\,dy',
	\end{equation}
	for all $v\in H^s(\R^d)$. We define the fractional Laplacian $(-\Delta)^s$ as the pseudo-differential operator acting on $H^s(\R^d)$ with Fourier symbol $(2\pi)^{2s}|\xi|^{2s}$; see~\cite[Proposition 3.3]{nezza2012hitchhikers}. Thus,
	\begin{equation}\label{Fourierdef-fraclap}
		\wwidehat{\sLap u}(\xi)\ =\ (2\pi)^{2s}\,|\xi|^{2s}\,\hat{u}(\xi),
	\end{equation}
	or equivalently,             
	\begin{equation}\label{eq:prelim:def:quadratic}
		\inn{\sLap u}{\varphi}\ :=\ 
		(2\pi)^{2s}
		\intd|\xi|^{2s}\uhat(\xi)\overline{\phihat(\xi)}\,d\xi \quad \text{ for }\;\varphi\in H^{s}(\R^d).
	\end{equation}
	In particular, taking $\varphi = u$ and using~\eqref{eq:prelim:identity:2}, we obtain the quadratic form identity
	\begin{align*}
		\inn{\sLap u}{u}
		\ =\ (2\pi)^{2s}
		\intd|\xi|^{2s}|\uhat(\xi)|^2 \,d\xi
		\ =\  a_{s,d}\intd\intd\frac{|u(x)-u(y)|^{2}}{|x-y|^{d+2s}}\,dx\,dy\,.
	\end{align*}
	By polarisation, this yields the variational identity
	\begin{equation}\label{eq:prelim:varDef}
		\inn{\sLap u}{\varphi}\ =\
		a_{s,d}\intd\intd\frac{(u(x)-u(y))(\varphi(x)-\varphi(y))}{|x-y|^{d+2s}}\,dx\,dy
	\end{equation}
	for any $u,\varphi\in H^s(\R^d)$. This identity also provides the distributional definition of $\sLap u$ for more general $u$, by testing against $\varphi\in C^{\infty}_c(\R^d)$ whenever the right-hand side is finite. 
	\begin{remark}
		For $u\in\calS(\R^d)\cup C^2_c(\R^d)$, the above definition agrees with the principal value formula
		\begin{equation}\label{eq:prelim:def:pointwise:sLap}
			\sLap u(x)\ :=\
			2a_{s,d} \;{\rm P.V.}\intd\frac{u(x)-u(y)}{|x-y|^{d+2s}}\,dy\,,
		\end{equation}
		where $a_{s,d}$ is given by~\eqref{eq:prelim:def:FourierConstant}.
	\end{remark}
	In view of the Fourier transform law \eqref{eq:Fourier:Lieb-Loss:aux:1} and the Fourier definition of the fractional Laplacian \eqref{Fourierdef-fraclap}, for $0<\alpha<d-2s$, one has
	\begin{equation}\label{fracLap-Rszpot-alpha}
		\sLap|x|^{-\alpha} \ =\ \kappa_{d,s}(\alpha)\, |x|^{-\alpha-2s}\;\;\text{ in }\;\calS'(\R^d), \qquad \kappa_{d,s}(\alpha):= (2\pi)^{2s} \dfrac{\CLL{d-\alpha}\CLL{\alpha+2s}}{\CLL{\alpha}\CLL{d-\alpha-2s}}\,,
	\end{equation}
	in the sense of tempered distributions. In particular, taking $\alpha=\frac{d-2s}{2}$, it follows that
	\begin{equation}\label{fracLap-Rszpot}
		(-\Delta_1)^s\Bigl(|\,\cdot-x_2|^{-\frac{d-2s}{2}}\Bigr)\ =\
		\CFH \,|\,\cdot-x_2|^{-\frac{d+2s}{2}}\,\;\;\text{ in }\;\calS'(\R^d),\quad 0<s<\frac{d}{2}\,,
	\end{equation}
	since $\kappa_{d,s}\left((d-2s)/2\right) \ =\ \CFH$. This can also be obtained as a direct consequence of Frank \emph{et al.}~\cite[Equation~(3.4)]{frank2008hardy}. This will be used repeatedly in the proof of Theorem~\ref{thm:main:1}.
	\medskip	
	
	\noindent
	The following lemma is a basic ground-state estimate which will be useful for bounding the kinetic energy from below by a potential term. 
	\begin{lemma}[\textbf{Extraction of potential}]\label{lem:quadIneq}
		Let $0 < s < 1$ and  $u\in H^s(\R^d)$. Suppose that $\omega:\R^d\to\R_+$ satisfies $\omega\in H^s(\R^d)$ and $(u^2/\omega)\in H^s(\R^d)$. Then
		\begin{equation}\label{energy-ineq}
			\inn{\sLap u}{u}\ \geq\ \inn{\sLap\omega}{\frac{u^2}{\omega}}.
		\end{equation}
		Moreover, equality holds if and only if $u$ is a constant multiple of $\omega$.
	\end{lemma}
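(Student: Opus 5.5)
We will prove Lemma~\ref{lem:quadIneq} through the variational identity~\eqref{eq:prelim:varDef}, which writes both sides as double integrals against the kernel $|x-y|^{-d-2s}$. The inequality will then follow from a pointwise inequality between the two integrands. Since $u$ and $u^2/\omega$ both lie in $H^s(\R^d)$, polarisation gives
\[
\inn{\sLap u}{u}-\inn{\sLap\omega}{\frac{u^2}{\omega}}
\ =\ a_{s,d}\intd\intd \frac{\bigl(u(x)-u(y)\bigr)^2-\bigl(\omega(x)-\omega(y)\bigr)\Bigl(\frac{u(x)^2}{\omega(x)}-\frac{u(y)^2}{\omega(y)}\Bigr)}{|x-y|^{d+2s}}\,dx\,dy .
\]
Each of the two double integrals is absolutely convergent: the first because $u\in H^s(\R^d)$, the second by Cauchy--Schwarz, because $\omega$ and $u^2/\omega$ both lie in $H^s(\R^d)$. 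The difference may therefore be combined under a single integral sign.

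The key step is the algebraic identity, valid for $a=u(x)$, $b=u(y)$, $p=\omega(x)>0$, $q=\omega(y)>0$:
\[
(a-b)^2-(p-q)\Bigl(\frac{a^2}{p}-\frac{b^2}{q}\Bigr)
\ =\ -2ab+\frac{q}{p}a^2+\frac{p}{q}b^2
\ =\ pq\Bigl(\frac{a}{p}-\frac{b}{q}\Bigr)^2\ \ge\ 0 .
\]
To verify it, expand both sides; the terms $a^2$ and $b^2$ cancel. Substituting back yields the ground-state representation
\[
\inn{\sLap u}{u}-\inn{\sLap\omega}{\frac{u^2}{\omega}}
\ =\ a_{s,d}\intd\intd \omega(x)\,\omega(y)\,\frac{\bigl(v(x)-v(y)\bigr)^2}{|x-y|^{d+2s}}\,dx\,dy\ \geq\ 0,
\qquad v:=\frac{u}{\omega}.
\]
This is~\eqref{energy-ineq}.

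For the equality case:
\begin{itemize}
\item If $u=c\,\omega$, then $u^2/\omega=c^2\omega$, and both sides of~\eqref{energy-ineq} equal $c^2\inn{\sLap\omega}{\omega}$.
\item Conversely, suppose equality holds. Since $\omega>0$ almost everywhere (we read $\R_+$ as $(0,\infty)$; this is needed anyway for $u^2/\omega$ to make sense) and the kernel is strictly positive, we get $v(x)=v(y)$ for almost every $(x,y)$. By Fubini, $v$ is almost everywhere constant, so $u$ is a constant multiple of $\omega$.
\end{itemize}

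The only delicate point is justifying the splitting and recombination of the double integrals, i.e.\ checking that each piece is absolutely integrable. The hypotheses $\omega,\,u^2/\omega\in H^s(\R^d)$ are designed exactly for this. If $\omega$ may vanish on a set of positive measure, we would restrict attention to $\{\omega>0\}$ and treat the integrand there, but the statement implicitly excludes this case.
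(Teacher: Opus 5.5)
Your proof is correct and follows the paper's argument essentially verbatim. The paper inserts the same pointwise identity $|a-b|^2=(c-d)\bigl(\frac{a^2}{c}-\frac{b^2}{d}\bigr)+cd\,\bigl|\frac{a}{c}-\frac{b}{d}\bigr|^2$ into the Gagliardo double-integral form, obtains the nonnegative remainder $a_{s,d}\iint \omega(x)\omega(y)\,|v(x)-v(y)|^2\,|x-y|^{-d-2s}\,dx\,dy$ with $v=u/\omega$, and characterises equality by $v$ being constant; your explicit Cauchy--Schwarz check that the cross term is absolutely integrable, so the integrals may be split and recombined, is slightly more careful than the paper's one-line remark.
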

	\begin{proof}[Proof of Lemma~\ref{lem:quadIneq}]
		By the quadratic form identity,
		\begin{equation*}
			\inn{\sLap u}{u}\ =\ a_{s,d}\intd\intd\frac{|u(x)-u(y)|^2}{|x-y|^{d+2s}}\,dxdy\,.
		\end{equation*}
		We use the elementary identity
		\[
		|a-b|^2\ =\ (c-d)\left(\frac{a^2}{c}-\frac{b^2}{d}\right)+\left|\frac{a}{c}-\frac{b}{d}\right|^2cd\,,
		\]
		valid for $a,b,c,d\in\R$ with $c\neq 0$ and $d\neq 0$.
		Setting 
		\[
		a=u(x), \quad b=u(y),\quad c=\omega(x),\quad d=\omega(y),
		\]    
		we obtain
		\begingroup
		\allowdisplaybreaks
		\begin{align*}
			\inn{\sLap u}{u} 
			&\ =\ a_{s,d}\intd\intd\frac{(\omega(x)-\omega(y))(\frac{u^2}{\omega}(x)-\frac{u^2}{\omega}(y))}{|x-y|^{d+2s}}\,dx\,dy \\[0.25em]
			&\hspace{7em} + a_{s,d}\intd\intd\frac{\left|\frac{u}{\omega}(x)-\frac{u}{\omega}(y)\right|^2\omega(x)\omega(y)}{|x-y|^{d+2s}}\,dx\,dy\,.
		\end{align*}
		\endgroup
		The first integral is well defined by the assumptions $\omega\in H^s(\R^d)$ and $(u^2/\omega)\in H^s(\R^d)$. Hence, from the above identity we conclude that the second integral is also finite. Since $\omega>0$, the second integral is non-negative, and~\eqref{energy-ineq} follows. Equality holds if and only if this second integral vanishes, that is, if and only if $u/\omega$ is constant. 
	\end{proof}
	
	\begin{lemma}[\textbf{Localised version of Lemma~\ref{lem:quadIneq}}]\label{lem:localised:quadIneq}
		Let $0<s<1$. Let $U\subset\R^d$ be open, let $u\in C^{\infty}_c(U)$, and let $\omega\in H^s(\R^d)$ be non-negative a.e. in $\R^d$. Assume that $\omega\in C^{\infty}(U)$
		and that $\omega$ is strictly positive on a neighbourhood of $\supp u$.	Then,
		\[
		\inn{\sLap u}{u} \ \geq \ \inn{\sLap\omega}{\frac{u^2}{\omega}}.
		\]
	\end{lemma}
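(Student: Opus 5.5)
The plan is to reuse the pointwise algebraic identity from the proof of Lemma~\ref{lem:quadIneq}, but restricted to the region where it makes sense, and to control the part of the double integral where $\omega$ may vanish. Let $K=\supp u$, which is compact in $U$. By hypothesis there is an open set $W$ with $K\subset W\Subset U$ on which $\omega\geq c>0$. Since $\omega\in C^\infty(U)$, we may shrink $W$ so that $\omega\in C^\infty(\overline W)$. Then $u^2/\omega$, extended by zero outside $K$, belongs to $C^\infty_c(U)\subset H^s(\R^d)$. Hence the right-hand side $\inn{\sLap\omega}{u^2/\omega}$ is well defined through the variational identity~\eqref{eq:prelim:varDef}, because both $\omega$ and $u^2/\omega$ lie in $H^s(\R^d)$.

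The first step is to write both sides as double integrals against $|x-y|^{-d-2s}$ over $\R^d\times\R^d$ and compare the integrands pointwise. Set $v=u^2/\omega$ and
\[
F(x,y)=|u(x)-u(y)|^2-(\omega(x)-\omega(y))(v(x)-v(y)).
\]
We distinguish three cases.

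\begin{enumerate}
\item If $x,y\in\{\omega>0\}$, the elementary identity in the proof of Lemma~\ref{lem:quadIneq} gives $F=\omega(x)\omega(y)\,|u/\omega(x)-u/\omega(y)|^2\geq 0$.
\item If $x\in K$ and $\omega(y)=0$, then $u(y)=v(y)=0$. Hence
\[
F=u(x)^2-\omega(x)\,\frac{u(x)^2}{\omega(x)}=0.
\]
The symmetric case $y\in K$, $\omega(x)=0$ is identical.
\item If neither $x$ nor $y$ lies in $K$, then $u$ and $v$ vanish at both points, so $F=0$.
\end{enumerate}

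These cases exhaust all pairs up to null sets. The only remaining points are those where $\omega$ fails to be defined pointwise, and since $\omega\geq 0$ a.e. we work with a fixed representative. It follows that $F\geq0$ a.e., and integrating gives the inequality.

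The step I expect to be the main obstacle is justifying that the integrals can be split. We need $|u(x)-u(y)|^2|x-y|^{-d-2s}$ and $(\omega(x)-\omega(y))(v(x)-v(y))|x-y|^{-d-2s}$ to be separately integrable. The first is integrable since $u\in H^s$. The second is integrable by Cauchy--Schwarz, since $\omega,v\in H^s$. Therefore $F|x-y|^{-d-2s}\in L^1$, and linearity of the integral yields
\[
\inn{\sLap u}{u}-\inn{\sLap\omega}{v}=a_{s,d}\iint F\,|x-y|^{-d-2s}\,dx\,dy\geq 0.
\]
The care needed is to ensure that $v$ is genuinely in $H^s(\R^d)$, which uses smoothness of $\omega$ near $K$ together with the lower bound $\omega\geq c$. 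Pointwise positivity of $\omega$ alone would not suffice.
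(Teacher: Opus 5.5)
Your proof is correct, but it takes a different route from the paper's.

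The paper does not redo the pointwise computation. It perturbs the weight to $\omega_\epsilon=\omega+\epsilon e^{-|x|^2}$, which is strictly positive on all of $\R^d$. It then applies Lemma~\ref{lem:quadIneq} as a black box to $\omega_\epsilon$, using $u^2/\omega_\epsilon\in C^\infty_c(U)$. Finally it lets $\epsilon\downarrow 0$, using $\sLap\omega_\epsilon=\sLap\omega+\epsilon\sLap\rho$ in $\calS'(\R^d)$ and $u^2/\omega_\epsilon\to u^2/\omega$ in $C^\infty_c(U)$.

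You instead rerun the proof of Lemma~\ref{lem:quadIneq} directly for $\omega$. Your key observation is that the zero set of $\omega$ is disjoint from $\supp u$, so the integrand difference $F$ vanishes on every pair with one point in that set, and no regularisation is needed.

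Your route avoids the limit in $\epsilon$ entirely. It also gives more than the inequality, namely the exact ground-state identity
\[
\inn{\sLap u}{u}-\inn{\sLap\omega}{\tfrac{u^2}{\omega}} \;=\; a_{s,d}\iint_{\{\omega>0\}^2}\frac{\omega(x)\omega(y)\bigl|\frac{u}{\omega}(x)-\frac{u}{\omega}(y)\bigr|^2}{|x-y|^{d+2s}}\,dx\,dy .
\]
The paper's route is shorter on the page because it recycles Lemma~\ref{lem:quadIneq}, at the cost of a mild continuity argument in $\epsilon$.

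Your case~1 tacitly uses $v(x)=u(x)^2/\omega(x)$ wherever $\omega(x)>0$, including $x\notin K$ where both sides vanish; this is immediate. The ``a.e.'' qualifier is genuinely needed. If a representative had $\omega(y)<0$ at some $y$, one would get $F=\omega(y)v(x)$, which can be negative, but such $y$ form a null set, as you note.
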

	
	\begin{proof}[Proof of Lemma~\ref{lem:localised:quadIneq}]
		Let $\rho(x):=e^{-|x|^2}$, and for $\epsilon>0$ set
		\[
		\omega_{\epsilon} \ :=\ \omega + \epsilon\,\rho.
		\]
		Then $\omega_{\epsilon}\in H^s(\R^d)$ and $\omega_{\epsilon}>0$ on $\R^d$. Since $u\in C^{\infty}_c(U)$, $\omega_{\epsilon}\in C^{\infty}(U)$ and $\omega_{\epsilon}$ is strictly positive on a neighbourhood of $\supp u$, it follows that 
		\[
		\frac{u^2}{\omega_{\epsilon}}\in C^{\infty}_c(U)\subset H^s(\R^d).
		\]
		Applying Lemma~\ref{lem:quadIneq} to $\omega_{\epsilon}$, we obtain
		\[
		\inn{\sLap u}{u} \ \geq \ \inn{\sLap\omega_{\epsilon}}{\frac{u^2}{\omega_{\epsilon}}}.
		\]
		Now
		\[
		\sLap\omega_{\epsilon} \ =\ \sLap\omega + \epsilon\sLap\rho \quad\text{ in }\,\calS'(\R^d),
		\]      
		and since $\omega$ is smooth and strictly positive on a neighbourhood of $\supp u$,
		\[
		\frac{u^2}{\omega_{\epsilon}}
		\longrightarrow \frac{u^2}{\omega}\quad\text{ in } C^{\infty}_c(U)\quad \text{ as } \epsilon\downarrow 0.
		\]
		Therefore, taking the limit $\epsilon\downarrow 0$ we conclude
		\[
		\inn{\sLap\omega_{\epsilon}}{\frac{u^2}{\omega_{\epsilon}}}\longrightarrow \inn{\sLap\omega}{\frac{u^2}{\omega}}.
		\]
		This proves the lemma.
	\end{proof}
	\noindent
	We shall also use the so-called Leibniz rule due to Biccari-Warma-Zuazua~\cite{biccari2017local}.
	\begin{proposition}\label{prop:LeibnitzRule}
		Let $u,v\in H^s(\R^d)$ and assume additionally that $uv\in H^s(\R^d)$. Then,
		\begin{align*}
			\sLap(uv)\ =\ v\sLap u+u\sLap v - \Nn(u,v)\,,
		\end{align*}
		in the variational sense~\eqref{eq:prelim:varDef}, where
		\begin{align}\label{interaction}
			\Nn(u,v)(x)\ :=\ 2a_{s,d}\intd\frac{(u(x)-u(y))(v(x)-v(y))}{|x-y|^{d+2s}}\,dy 
		\end{align}
		denotes the nonlocal interaction term of $u$ and $v$.
	\end{proposition}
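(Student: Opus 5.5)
The plan is to prove the product rule at the level of the bilinear form~\eqref{eq:prelim:varDef}, testing against $\varphi\in C^\infty_c(\R^d)$, and to let a pointwise algebraic identity for differences do the work. For any points $x,y$ one has
\[
u(x)v(x)-u(y)v(y)\ =\ v(x)\bigl(u(x)-u(y)\bigr)+u(x)\bigl(v(x)-v(y)\bigr)-\bigl(u(x)-u(y)\bigr)\bigl(v(x)-v(y)\bigr).
\]
This is the discrete Leibniz rule. The first two terms correspond to $v\sLap u$ and $u\sLap v$, and the third is the origin of $\Nn(u,v)$.

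First, I would check that the two sides make sense. Since $uv\in H^s$, the left side $\inn{\sLap(uv)}{\varphi}$ is given by~\eqref{eq:prelim:varDef}. The term $v\sLap u$ is understood as the distribution $\varphi\mapsto\inn{\sLap u}{v\varphi}$, and likewise for $u\sLap v$. The function $\Nn(u,v)$ is defined a.e. and lies in $L^1_{\rm loc}$: by Cauchy--Schwarz,
\[
\int\!\!\int\frac{|u(x)-u(y)||v(x)-v(y)|}{|x-y|^{d+2s}}\,dx\,dy\ \le\ [u]_{H^s}[v]_{H^s}<\infty,
\]
so $\Nn(u,v)\in L^1(\R^d)$ and it may be paired with bounded $\varphi$.

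Next comes the symmetrisation step, which is the heart of the argument. I would use the symmetric form
\[
\inn{\sLap w}{\varphi}\ =\ a_{s,d}\int\!\!\int\frac{(w(x)-w(y))(\varphi(x)-\varphi(y))}{|x-y|^{d+2s}},
\]
and for the product rule I would rather write this as $2a_{s,d}\int\!\!\int (w(x)-w(y))\varphi(x)|x-y|^{-d-2s}$, using the antisymmetry $x\leftrightarrow y$. That rewriting is only legitimate when the integrand is absolutely integrable, so I would work with the truncated kernel $\1_{|x-y|>\epsilon}$, for which every manipulation is justified. With the truncation in place, I would apply the pointwise identity to $w=uv$. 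This gives
\[
\inn{\sLap(uv)}{\varphi}_\epsilon\ =\ \inn{\sLap u}{v\varphi}_\epsilon+\inn{\sLap v}{u\varphi}_\epsilon-\int \Nn^\epsilon(u,v)\varphi .
\]
Here I convert each truncated double integral back to the symmetric form, using the same antisymmetrisation applied to $v\varphi$ and $u\varphi$.

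Finally, I would let $\epsilon\downarrow0$. The left side and $\int\Nn^\epsilon\varphi$ converge by dominated convergence, using the $H^s$ bounds and the Cauchy--Schwarz estimate above. The middle terms then converge as well, since they are the difference of convergent quantities. I would identify their limits with $\inn{\sLap u}{v\varphi}$ when $v\varphi\in H^s$ (for instance for bounded $v$), and otherwise take the limit as the definition of the pairing.

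The main obstacle is this last justification: the truncated pairings $\inn{\sLap u}{v\varphi}_\epsilon$ need not individually be absolutely convergent for general $H^s$ functions. I would handle this by first proving the identity for $u,v\in C^\infty_c$, where all integrals converge absolutely, and then extending by density in $H^s$. The extension uses continuity of both sides in the topology given by the $H^s$ seminorm, together with $uv\in H^s$.
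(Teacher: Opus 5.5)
Your truncated computation is fine for each fixed $\epsilon>0$, but the proof has a gap at the step you flag yourself, and the density fallback does not close it.

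\textbf{The missing fact.} You need that $u\varphi,\,v\varphi\in H^s(\R^d)$ for \emph{every} $u,v\in H^s(\R^d)$ and $\varphi\in C^\infty_c(\R^d)$. Boundedness of $v$ plays no role; only $\varphi$ needs to be bounded and Lipschitz. Indeed,
\[
|(v\varphi)(x)-(v\varphi)(y)|\le\|\varphi\|_\infty|v(x)-v(y)|+|v(y)|\,|\varphi(x)-\varphi(y)|,
\]
and, up to a factor $2$, the second piece contributes at most $\|v\|_{L^2}^2\int\min\{\|\nabla\varphi\|_\infty^2|h|^2,4\|\varphi\|_\infty^2\}\,|h|^{-d-2s}\,dh<\infty$ to the Gagliardo seminorm.

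\textbf{Why the gap matters.} Without this fact, the pairing $\inn{\sLap u}{v\varphi}$ in~\eqref{eq:prelim:varDef} is not available. So ``taking the limit as the definition of the pairing'' changes the statement. Also, your argument that the middle terms converge ``as the difference of convergent quantities'' only gives convergence of their \emph{sum}, not of each term.

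\textbf{Why the density fallback fails as stated.} It needs the same multiplier fact, to get $v_k\varphi\to v\varphi$ in $H^s$. Its treatment of the left-hand side is also wrong: $H^s(\R^d)$ is not an algebra, so $u_k\to u$, $v_k\to v$ in $H^s$ does not give $u_kv_k\to uv$ in $H^s$. The hypothesis $uv\in H^s$ does not repair this. One could instead write $\inn{\sLap(u_kv_k)}{\varphi}=\intd u_kv_k\,\sLap\varphi$ and use $u_kv_k\to uv$ in $L^1$ together with $\sLap\varphi\in L^\infty$.

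\textbf{How the paper avoids all this.} Once $u\varphi,v\varphi\in H^s$ is known, each of your $\epsilon$-terms converges separately by dominated convergence, and the truncation becomes unnecessary. The paper never passes to the non-symmetric form $2a_{s,d}\iint(w(x)-w(y))\varphi(x)\,|x-y|^{-d-2s}$. It writes
\[
\int\Nn(u,v)\varphi=a_{s,d}\iint\bigl(u(x)-u(y)\bigr)\bigl(v(x)-v(y)\bigr)\bigl(\varphi(x)+\varphi(y)\bigr)|x-y|^{-d-2s}\,dx\,dy
\]
and uses the pointwise identity
\begin{align*}
&\bigl(u(x)-u(y)\bigr)\bigl((v\varphi)(x)-(v\varphi)(y)\bigr)+\bigl(v(x)-v(y)\bigr)\bigl((u\varphi)(x)-(u\varphi)(y)\bigr)\\
&\qquad-\bigl(u(x)-u(y)\bigr)\bigl(v(x)-v(y)\bigr)\bigl(\varphi(x)+\varphi(y)\bigr)\ =\ \bigl((uv)(x)-(uv)(y)\bigr)\bigl(\varphi(x)-\varphi(y)\bigr).
\end{align*}
After dividing by $|x-y|^{d+2s}$, all four terms are absolutely integrable by Cauchy--Schwarz, since $u,v,uv,\varphi,u\varphi,v\varphi\in H^s(\R^d)$ and $\varphi$ is bounded. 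The proposition then follows by linearity of the integral, with no limiting procedure at all.
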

	\begin{proof}[Proof of Proposition~\ref{prop:LeibnitzRule}]
		We prove the identity in the variational sense of~\eqref{eq:prelim:varDef}. Consider $\varphi\in \calS(\R^d)$. Since $u,v\in H^s(\R^d)$ one has $u\varphi,v\varphi\in H^s(\R^d)$, and the Cauchy-Schwarz inequality shows that 
		\begin{align*}
			|\inn{\Nn(u,v)}{\varphi}|\ \le\ 2a_{s,d}\,\|\varphi\|_{L^{\infty}(\R^d)}\,\intd \intd\frac{|u(x)-u(y)|\,|v(x)-v(y)|}{|x-y|^{d+2s}}\,dx\,dy < \infty\,.
		\end{align*}
		Therefore,
		\begingroup
		\allowdisplaybreaks 
		\begin{align}
			&\langle v\sLap u+u\sLap v - \Nn(u,v),\varphi\rangle \notag\\
			&\hspace{9em}\ =\ \langle \sLap u, v\varphi\rangle+\langle \sLap v, u\varphi\rangle-\langle \Nn(u,v),\varphi\rangle\,. \label{rhs}
		\end{align}
		\endgroup
		The last term can be rewritten in symmetric form as
		\begingroup
		\allowdisplaybreaks 
		\begin{align*}
			\langle \Nn(u,v),\varphi\rangle
			&\ =\ 2a_{s,d}\intd\biggl(\intd\frac{(u(x)-u(y))(v(x)-v(y))}{|x-y|^{d+2s}}dy\biggr)\varphi(x)\, dx\\
			&\ =\ a_{s,d}\intd \intd\frac{(u(x)-u(y))(v(x)-v(y)) (\varphi(x)+\varphi(y))} {|x-y|^{d+2s}}\,dy \, dx.
		\end{align*}
		\endgroup
		Therefore, using the definition \eqref{eq:prelim:varDef} for the first two terms in \eqref{rhs} and after some algebraic simplifications we obtain
		\[
		\begin{aligned}        
			&\langle \sLap u, v\varphi\rangle+\langle \sLap v, u\varphi\rangle-\langle \Nn(u,v),\varphi\rangle\\[0.25em]
			&\hspace{8em}\ =\ a_{s,d}\intd\intd\frac{(u(x)v(x)-u(y)v(y))(\varphi(x)-\varphi(y))}{|x-y|^{d+2s}}\,dy\,dx.
		\end{aligned}
		\]
		Since $uv\in H^s(\R^d)$, the right-hand side is precisely
		\[
		\inn{\sLap(uv)}{\varphi},
		\]
		which proves the claim.
	\end{proof}
	\subsection{A geometric identity}\label{subsec:geometric information}
	For any triangle with vertices at $x_i, x_j, x_k \in\R^d$, we shall denote by $r_{ij}$, $r_{jk}$, $r_{ki}$ the corresponding side lengths $|x_i-x_j|$, $|x_j-x_k|$, $|x_k-x_i|$, respectively. A key geometric identity relating the circumradius $R_{ijk}$ to the side lengths and the cosines of the corresponding angles (see~\cite[Lemma 3.2]{hoffmann2008many}) is
	\begin{equation}\label{circumradius}
		\frac{1}{2 R^2_{ijk}} \ =\ \sum_{\mbox{\rm cyclic}} \frac{(x_k-x_i)}{|x_k-x_i|^2}\cdot\frac{(x_j-x_i)}{|x_j-x_i|^2} \,.
	\end{equation}
	\medskip
	
	\section{Proof of Theorem~\ref{thm:main:1} and further remarks}\label{sec:proof:main_thm}
	The results of this section rely on the ground-state representation technique; see, for instance, Lundholm~\cite[Section 4]{lundholm2017methods}. We begin by briefly recalling the corresponding argument in the local setting before turning to the fractional framework. 
	Let $\Omega$ be a domain in $\R^d$ and let $\omega:\Omega\to\R_+$ be a positive twice-differentiable weight. Given $u\in C^{\infty}_c(\Omega)$, write $u=:\omega\, v$. A direct computation gives
	\[
	|\nabla u|^2\ =\ v^2|\nabla\omega|^2+\omega\nabla\omega\cdot\nabla(v^2)+\omega^2|\nabla v|^2.
	\]
	Integrating over $\Omega$ and applying integration by parts, one obtains
	\begin{equation}
		\int_{\Omega}|\nabla u|^2\,dx
		\ =\
		\int_{\Omega}\left(|\omega\nabla v|^2+\frac{(-\Delta\omega)}{\omega}
		u^2\right)dx\label{eq:intro:aux:1}
	\end{equation}
	In particular, one obtains
	\begin{equation}        
		\inn{-\Delta u}{u}_{L^2(\R^d)} \ =\ \int_{\Omega}|\nabla u|^2\,dx
		\ \geq\ \int_{\Omega}\frac{(-\Delta\omega)}{\omega}\,u^2\,dx.\label{eq:intro:aux:2}
	\end{equation}
	This identity is referred to as the ground-state representation (GSR). It is useful because it provides a lower bound for the kinetic energy in terms of a potential term involving 
	\[
	V(x)\ :=\ \frac{-\Delta\omega(x)}{\omega(x)}.
	\]    
	In the case $\Omega=\R^d$, choosing $\omega(x)=|x|^{-\alpha}$ and optimising with respect to $\alpha$ yields the classical Hardy inequality~\eqref{eq:intro:hardyIneq}. 
	
	\medskip
	
	\subsection{Proof of Theorem~\ref{thm:main:1}}
	
	We now implement the ground-state representation method for three particles in the fractional setting. In this setting, however, the same strategy becomes substantially more delicate. The proof of Theorem~\ref{thm:main:1} relies on three ingredients: a truncated three-particle ground state, the fractional Leibniz rule, and an explicit evaluation of the resulting nonlocal interaction term. The choice of the ground state
	\[
	|x_i-x_j|^{-\frac{(d-2s)}{2}}|x_i-x_k|^{-\frac{(d-2s)}{2}}|x_j-x_k|^{-\frac{(d-2s)}{2}}
	\]
	is straightforward and is inspired by Lundholm~\cite{lundholm2015geometric} in the local setting.
	However, since it is not directly admissible in Lemma~\ref{lem:quadIneq}, we introduce a truncation argument using suitable cut-off functions. With this choice made, we estimate the kinetic energy associated with each particle by means of a fractional Leibniz rule. Finally, these steps lead to the analysis of a nonlocal interaction term, whose explicit evaluation is carried out in Lemma~\ref{lem:limit:interaction} and relies on the new angular Selberg-type identity of Theorem~\ref{thm:main:2}.\medskip
	
	\noindent\textbf{Step 1: Choice of the truncated ground state.}\\ 
	Let $\phi\in C^{\infty}_c(\R^d)$ be a radial function such that $0\leq\phi\leq1$ and $\phi(x)\equiv 1$ for $|x|\leq 1$. For $0<\alpha<(d-2s)/2$ and $n\in\N$,
	define $\valphan:\R^d\to\R$ via
	\begin{equation}\label{eq:proof:main:1:GSR:1}
		\valphan(x)\ :=\ |x|^{-\alpha}\phi\Bigl(\frac{x}{n}\Bigr) \quad\text{ for }\,x\in\R^d. 
	\end{equation}
	It is known that $\valphan\in H^s(\R^d)$; see Frank-Lieb-Seiringer~\cite[Proposition 4.1]{frank2008hardy}. Because the kinetic energy is estimated particle by particle, for indices $i,j,k \in \{1,2,3\}$ with $i\neq j$ and $i\neq k$, and for $(x_i,x_j,x_k)\in\off{3}$, we define the truncated three-particle ground state associated with the $i$-th particle as follows:
	\begin{equation}\label{eq:proof:main:1:GSR:2}
		\om{i}(x_i)\ :=\ \valphan(x_i-x_j)\,\valphan(x_i-x_k)\,|x_j-x_k|^{-\alpha}.
	\end{equation}
	By Lemma~\ref{lem:app:regularity} in the appendix, one has $\om{i}\in H^s(\R^d)$ for any admissible choice of $i$, $\alpha$, and $n$.\medskip
	
	\noindent\textbf{Step 2: Application of the localised ground-state estimate.}\\
	Let $u\in C^{\infty}_c(\off{3})$. For each $i=1,2,3$, we regard $u$ as a function of the variable~$x_i$, with the remaining two variables held fixed, and denote this partial function by $u_i:\R^d\to\R$. 
	
	Fix $i=1$ and $x_2\neq x_3$. Since $u\in C^{\infty}_c(\off{3})$, there exist $R,\delta>0$, depending on $(x_2,x_3)$, such that  
	\[
	\supp u_1 \ \subset\ B_R(0)\,\setminus\,\bigl(B_{\delta}(x_2)\cup B_{\delta}(x_3)\bigr).
	\]
	Choose $n$ sufficiently large such that
	\[
	|x_1-x_2|<n,\qquad |x_1-x_3|<n
	\]
	for every $x_1\in\supp u_1$. Since $\phi\equiv 1$ on $B_1(0)$, it follows for large $n$ that
	\[
	\valphan(x_1-x_2)\ =\ |x_1-x_2|^{-\alpha},\qquad \valphan(x_1-x_3) \ =\ |x_1-x_3|^{-\alpha},
	\]
	throughout $\supp u_1$. Since $\om{1} \ =\ \valphan(x_1-x_2)\,\valphan(x_1-x_3)\,|x_2-x_3|^{-\alpha}$ is smooth and strictly positive on a neighbourhood $U$ of $\supp u_1$, it follows that
	\[
	\frac{u^2_1}{\om{1}}\in C^{\infty}_c(U)\subset H^s(\R^d).
	\]
	Therefore, Lemma~\ref{lem:localised:quadIneq} applies and yields
	\[
	\inn{\sLap u_1}{u_1} \ \geq\ \inn{\sLap\om{1}}{\frac{u^2_1}{\om{1}}}. 
	\]
	The cases $i=2$ and $i=3$ follow by relabelling.\medskip
	
	\noindent\textbf{Step 3: Fractional Leibniz decomposition.}\\
	Since the factor $|x_2-x_3|$ is independent of $x_1$, it plays no role when the operator $\sLapj{1}$ acts, and we may therefore focus on the product
	\[
	\valphan(\,\cdot-x_2)\,\valphan(\,\cdot-x_3).  
	\]
	Throughout this step, the quotients appearing below make sense on the neighbourhood $U\supset\supp u_1$ introduced in Step 2. On $U$, $\valphan(\,\cdot-x_2)$ and $\valphan(\,\cdot-x_3)$ are smooth and strictly positive.
	
	Applying the fractional Leibniz rule (Proposition~\ref{prop:LeibnitzRule}) to the above product, we obtain a decomposition into two two-body terms and one interaction term:
	\begingroup
	\allowdisplaybreaks
	\begin{align*}
		&\inn{\sLapj{1}\om{1}}{\frac{u^2_1}{\om{1}}}\\
		&\hspace{3em} =\,
		\inn{\sLapj{1}\valphan(\,\cdot-x_2)}{\frac{|u(\cdot,x_2,x_3)|^2}{\valphan(\,\cdot-x_2)}}\ +\ \inn{\sLapj{1}\valphan(\,\cdot-x_3)}{\frac{|u(\cdot,x_2,x_3)|^2}{\valphan(\,\cdot-x_3)}}\\[0.15em]
		&\hspace{4em} -\,\int_{\R^{d}}\frac{\Nn(\valphan(\,\cdot-x_2),\valphan(\,\cdot-x_3))(x_1)}{\valphan(x_1-x_2)\valphan(x_1-x_3)}\,|u(x_1,x_2,x_3)|^2\,dx_1.
	\end{align*}
	\endgroup
	
	\noindent\textbf{Step 4: Passage to the limit}.\\
	By Lemma~\ref{lem:app:LapIPP:2}, after passing first to the limit $n \to \infty$ and then to $\alpha\uparrow(d-2s)/2$, the first two terms converge to the expected pairwise Hardy contributions:
	\[
	\CFH \intd \bigl(|x_1-x_2|^{-2s}+|x_1-x_3|^{-2s}\bigr)\,|u(x_1,x_2,x_3)|^2\,dx_1\,.
	\]
	The interaction term is the only point at which the restriction $d\geq 4-2s$ enters, through the explicit evaluation provided by Lemma~\ref{lem:limit:interaction}. Thus, it remains to identify the limit of the interaction term. The analytic bottleneck is resolved through an essential use of the angular correlation identity~\eqref{eq:thm:MOZ:1} in Theorem~\ref{thm:main:2}. Then, by Lemma~\ref{lem:limit:interaction}, the limit
	\[
	\lim_{\alpha\uparrow\frac{d-2s}{2}}\lim_{n\to\infty}\frac{\Nn\bigl(\valphan(\,\cdot-x_2),\valphan(\,\cdot-x_3)\bigr)(x_1)}{\valphan(x_1-x_2)\,\valphan(x_1-x_3)}
	\]
	is given by 
	\[
	\begin{aligned}
		& \CFH\biggl[\frac{1}{|x_1-x_2|^{2s}}+\frac{1}{|x_1-x_3|^{2s}}+2\frac{(x_3-x_1)}{|x_3-x_1|^{2s}}\cdot \frac{(x_2-x_1)}{|x_2-x_1|^{2s}}\frac{1}{|x_2-x_3|^{2-2s}}\\
		& -\frac{|x_1-x_3|^{2-2s}}{|x_1-x_2|^{2s}|x_2-x_3|^{2-2s}}-\frac{|x_1-x_2|^{2-2s}}{|x_1-x_3|^{2s}|x_2-x_3|^{2-2s}}\biggr]
	\end{aligned}
	\]
	Substituting this into the preceding expression and simplifying, we obtain
	\begingroup
	\allowdisplaybreaks
	\begin{align*}\label{eq:proof:main:1:aux:2}
		&\inn{\sLapj{1}u_1}{u_1}\notag\\
		&\hspace{4em} \geq\ \CFH \int_{\R^{d}} \biggl[\frac{|x_1-x_2|^{2-2s}}{|x_1-x_3|^{2s}|x_2-x_3|^{2-2s}}+\frac{|x_1-x_3|^{2-2s}}{|x_1-x_2|^{2s}|x_2-x_3|^{2-2s}}\\[0.25em]
		&\hspace{11em}- 2\frac{(x_3-x_1)}{|x_3-x_1|^{2s}}\cdot\frac{(x_2-x_1)}{|x_2-x_1|^{2s}}\frac{1}{|x_2-x_3|^{2-2s}}\biggr] |u(x_1,x_2,x_3)|^2\,dx_1.
	\end{align*}
	\endgroup
	\smallskip
	
	\noindent\textbf{Step 5: Identification of the three-body potential.}\\
	We now invoke the law of cosines, $|x_1-x_2|^2+|x_1-x_3|^2-2(x_2-x_1)\cdot (x_3-x_1) = |x_2-x_3|^2$ to obtain a key simplification. Indeed, the expression in brackets simplifies to
	\[
	\frac{|x_2-x_3|^{2s}}{|x_1-x_2|^{2s}|x_1-x_3|^{2s}}.
	\]
	Consequently,
	\[
	\inn{\sLapj{1}u_1}{u_1} \ \geq\  \CFH \int_{\R^{d}} \frac{|x_2-x_3|^{2s}}{|x_1-x_2|^{2s}|x_1-x_3|^{2s}}\,|u(x_1,x_2,x_3)|^2\,dx_1.
	\]
	The corresponding estimates for $i=2$ and $i=3$ follow analogously, yielding, respectively 
	\[
	\inn{\sLapj{2}u_2}{u_2} \ \geq\  \CFH \int_{\R^{d}} \frac{|x_1-x_3|^{2s}}{|x_2-x_1|^{2s}|x_2-x_3|^{2s}}\,|u(x_1,x_2,x_3)|^2\,dx_2,
	\]
	and
	\[
	\inn{\sLapj{3}u_3}{u_3} \ \geq\  \CFH \int_{\R^{d}} \frac{|x_1-x_2|^{2s}}{|x_3-x_1|^{2s}|x_3-x_2|^{2s}}\,|u(x_1,x_2,x_3)|^2\,dx_3.
	\]
	Integrating each of these inequalities with respect to the remaining variables and summing, one obtains the lower bound in Theorem~\ref{thm:main:1},    \begin{equation}\label{eq:proof:main:reducedPot}
		\V(x_1,x_2,x_3)\ =\ \sum_{\mbox{\rm cyclic}}\frac{|x_k-x_j|^{2s}}{|x_k-x_i|^{2s}|x_j-x_i|^{2s}}\,.
	\end{equation}
	This is precisely inequality~\eqref{eq:thm:1:inequality}, which completes the proof of Theorem~\ref{thm:main:1}. \qed
	\medskip
	
	\subsection{Sharpness of the coefficient for the fixed potential $\V$}
	
	In this subsection we prove that the coefficient  \emph{$\CFH$ in~\eqref{eq:thm:1:inequality} is sharp for the explicit potential $\V$}. In other words, sharpness is understood relative to the fixed potential $\V$, and not with respect to the fully optimised three-particle Hardy problem. To this end, define 
	\begin{equation}\label{eq:sharp:Vs3:def}
		\optC
		\ :=\
		\inf_{u\in C_c^\infty(\off{3})\setminus\{0\}}
		\frac{\displaystyle \sum_{i=1}^3 \left\langle \sLapj{i}u,u\right\rangle_{L^2(\R^{3d})}}
		{\displaystyle \int_{\R^{3d}}\V(x)\,|u(x)|^2\,dx }.
	\end{equation}
	By Theorem~\ref{thm:main:1}, one has $\optC\geq\CFH$. It therefore remains to prove the reverse inequality.
	
	The sharpness mechanism is as follows. We construct test functions for which two particles approach each other at a small scale $\varepsilon$, while the third particle remains at a fixed macroscopic distance. In this regime, the three-body potential $\V$ reduces, to leading order, to twice the two-body Hardy potential, and the corresponding kinetic energy exhibits the same asymptotic behaviour. Consequently, the three-particle quotient reduces to the two-particle Hardy quotient.  
	
	Let \(\{w_n\}^{\infty}_{n=1}\subset C_c^\infty(\R^d\setminus\{0\})\) be a minimising sequence for the fractional Hardy constant \(\CFH\), that is
	\[
	E_n:=\innd{\sLap w_n}{w_n},
	\quad H_n:=\int_{\R^d}\frac{|w_n(r)|^2}{|r|^{2s}}\,dr,
	\quad\frac{E_n}{H_n}\to \CFH.
	\]
	Choose $\varphi,\psi\in C_c^\infty(\R^d)$ such that
	\[
	\supp\varphi,\;\supp\psi\subset B_1(0),  \qquad\|\varphi\|_{L^2(\R^d)}=\|\psi\|_{L^2(\R^d)}=1,
	\]
	and fix \(a\in\R^d\) with \(|a|=10\). For \(n\in\N\) and \(\varepsilon>0\), set
	\[
	u_{n,\varepsilon}(x_1,x_2,x_3)
	\ :=\ \varepsilon^{-\frac{d}{2}}\,w_n\!\left(\frac{x_1-x_2}{\varepsilon}\right)\varphi\!\left(\frac{x_1+x_2}{2}\right)
	\psi(x_3-a).
	\]
	Fix $n$ and consider the asymptotic behaviour as $\epsilon\to0$. Since \(\supp w_n\subset\subset\R^d\setminus\{0\}\), there exist \(0<r_n<R_n<\infty\) such that
	\[
	\supp w_n \ \subset\ \{\theta\in\R^d:\;r_n\leq |\theta|\leq R_n\}.
	\]
	Introduce the change of variables
	\[
	y\ :=\ \frac{x_1+x_2}{2},\quad \theta\ :=\ \frac{x_1-x_2}{\varepsilon},\quad z\ :=\ x_3-a.
	\]
	For fixed $(y,\theta,z)$ with $\theta\neq 0$, the first two particles coalesce as $\epsilon\downarrow 0$, whereas the third particle remains at a bounded distance and does not coalesce with the pair $(x_1,x_2)$. It follows that, for $0<\varepsilon<\varepsilon_n:=\min\{1,8/R_n\}$, one has $u_{n,\varepsilon}\in C^{\infty}_c(\off{3})$. Indeed, $x_1-x_2$ is of order $\epsilon$, the centre of mass $(x_1+x_2)/2$ remains in a compact set, and $x_3$ is localised near $a$, hence remains separated from the first two particles.
	
	Now define
	\[
	D_{n,\varepsilon}\ :=\ \int_{\R^{3d}}\V(x)|u_{n,\varepsilon}(x)|^2\,dx,\qquad Q_{n,\varepsilon}\ :=\ \sum^3_{i=1}\inn{\sLapj{i}u_{n,\varepsilon}}{u_{n,\varepsilon}}_{L^2(\R^{3d})}.
	\]
	We first analyse the denominator. For the above change of variables $(x_1,x_2,x_3)\mapsto (y,\theta,z)$, the volume element satisfies $dx_1\,dx_2\,dx_3 = \varepsilon^d\,dy\,d\theta\,dz$. Then,
	\[
	\varepsilon^{2s}D_{n,\varepsilon} \ =\ \int_{\R^{3d}}\varepsilon^{2s}\,\V\Bigl(y+\frac{\varepsilon\theta}{2},\,y-\frac{\varepsilon\theta}{2},\,a+z\Bigr)|w_n(\theta)|^2|\varphi(y)|^2|\psi(z)|^2\,dy\,d\theta\,dz.
	\]
	By direct use of the definition of $\V$ and a straightforward calculation,
	\[
	\varepsilon^{2s}\V\Bigl(y+\frac{\varepsilon\theta}{2},\,y-\frac{\varepsilon\theta}{2},\,a+z\Bigr)\longrightarrow\frac{2}{|\theta|^{2s}}\qquad\text{ as }\varepsilon\downarrow 0. 
	\]
	Moreover, on the support of the integrand, one has $|\theta|\geq r_n$ and $|a+z-y\pm\varepsilon\theta/2|\geq 4$, so that the integrand is dominated by
	\[
	C_n\,|w_n(\theta)|^2\,|\varphi(y)|^2\,|\psi(z)|^2,
	\]
	which is integrable. Hence, by the dominated convergence theorem,
	\[
	\lim_{\varepsilon\downarrow 0}\varepsilon^{2s}D_{n,\varepsilon} \ =\ 2\int_{\R^{3d}}\frac{|w_{n}(\theta)|^2}{|\theta|^{2s}}|\varphi(y)|^2\,|\psi(z)|^2\,dy\,d\theta\,dz \ =\ 2 H_n.
	\]
	We next turn to the numerator
	\[
	Q_{n,\epsilon}\ =\ (2\pi)^{2s}\int_{\R^{3d}}\bigl(|\xi|^{2s}+|\eta|^{2s}+|\zeta|^{2s}\bigr)\bigl|\wwidehat{u_{n,\epsilon}}(\eta,\xi,\zeta)\bigr|^2\,d\eta\,d\xi\,d\zeta.
	\]
	It can be checked that 
	\[
	\wwidehat{u_{n,\epsilon}}(\eta,\xi,\zeta) = \ \epsilon^{\frac{d}{2}}\,\wwidehat{w_n}\left(\epsilon \frac{(\eta - \xi)}{2}\right)\,\wwidehat{\varphi}(\xi+\eta)\,e^{-2\pi ia\cdot\zeta}\,\wwidehat{\psi}(\zeta).
	\]
	Introduce the change of variables
	\[
	v\ :=\ \xi+\eta,\quad u\ :=\ \varepsilon\left(\frac{\xi-\eta}{2}\right),\quad \zeta\ :=\ \zeta,
	\]
	so that the volume element satisfies $d\xi\,d\eta\,d\zeta = \varepsilon^{-d}\,du\,dv\,d\zeta$. Therefore, 
	\begingroup
	\allowdisplaybreaks
	\begin{align*}
		\epsilon^{2s}Q_{n,\epsilon} \ &=\ (2\pi)^{2s}\int_{\R^{3d}}\left(\left|u+\frac{\epsilon v}{2}\right|^{2s}+\left|u-\frac{\epsilon v}{2}\right|^{2s}+\epsilon^{2s}|\zeta|^{2s}\right)\\
		&\hspace{15em} \bigl|\wwidehat{w_n}(u)\bigr|^2\bigl|\wwidehat{\varphi}(v)\bigr|^2\bigl|\wwidehat{\psi}(\zeta)\bigr|^2\,du\, dv\, d\zeta.    
	\end{align*} 
	\endgroup 
	The integrand converges pointwise to 
	\[
	2(2\pi)^{2s}|u|^{2s}\bigl|\wwidehat{w_n}(u)\bigr|^2\bigl|\wwidehat{\varphi}(v)\bigr|^2\bigl|\wwidehat{\psi}(\zeta)\bigr|^2,
	\]
	and is dominated by an $L^1$-function, since
	\[
	\Bigl|u\pm\frac{\epsilon v}{2}\Bigr|^{2s}\leq C_s\bigl(|u|^{2s}+|v|^{2s}\bigr)\quad\text{ and }\quad \epsilon^{2s}|\zeta|^{2s}\leq |\zeta|^{2s}
	\]
	for $0<\epsilon<1$, while $\wwidehat{w_n}, \wwidehat{\varphi},\wwidehat{\psi}\in \calS(\R^d)$. Then the dominated convergence theorem yields
	\[
	\lim_{\epsilon\downarrow 0}\epsilon^{2s}Q_{n,\epsilon}\ =\ 2(2\pi)^{2s}\intd|u|^{2s}|\wwidehat{w_n}(u)|^2\,du\ =\ 2E_n.
	\]
	Thus, for each fixed $n\in\N$,
	\[
	\frac{Q_{n,\epsilon}}{D_{n,\epsilon}} \ =\ \frac{\epsilon^{2s}Q_{n,\epsilon}}{\epsilon^{2s}D_{n,\epsilon}} \longrightarrow \frac{2 E_n}{2 H_n}=\frac{E_n}{H_n}\qquad \text{ as }\epsilon\downarrow 0.
	\]
	Now, since $u_{n,\epsilon}\in C^{\infty}_c(\off{3})$ for $0<\epsilon<\epsilon_n$, the definition of $\optC$ yields
	\[
	\optC\ \leq\ \frac{Q_{n,\epsilon}}{D_{n,\epsilon}}.
	\]
	Letting $\epsilon\downarrow 0$ and then $n\to\infty$, we conclude that
	\[
	\optC\ \leq\ \frac{E_{n}}{H_{n}}\longrightarrow \CFH.
	\]
	Combining this with the lower bound from Theorem~\ref{thm:main:1}, we obtain
	\[
	\optC \ =\ \CFH.
	\]
	This proves the sharpness of the coefficient in Theorem~\ref{thm:main:1}.
	\medskip

	\subsection{Sharper forms of Theorem~\ref{thm:main:1} and geometric consequences}\label{sec:further-remarks}
	
	We now derive consequences of Theorem~\ref{thm:main:1} that clarify the geometric structure of the three-particle potential. The first provides a lower bound in terms of the circumradius of the triangle determined by the three particles. The second explicitly identifies the non-negative defect by which the three-body potential exceeds the standard pairwise Coulomb-type interaction.
	
	Let $a,b,c$ denote the side lengths of a triangle and let $R$ be its circumradius. Then one has
	\begin{equation}\label{eq:geom:basicIneq}
		\frac{1}{a^{2s}}+\frac{1}{b^{2s}}+\frac{1}{c^{2s}}
		\ \geq\ \frac{3^{1-s}}{R^{2s}}.
	\end{equation}
	Equality holds in both~\eqref{eq:lem:h-o:1} and~\eqref{eq:geom:basicIneq} if and only if the triangle is equilateral, that is, $a=b=c$.
	The proof is analogous to the case $s=1$ treated in Hoffmann-Ostenhof \emph{et al.}~\cite[Lemma 3.3]{hoffmann2008many}, and is therefore omitted. 
	
	As a direct consequence of Theorem~\ref{thm:main:1} combined with~\eqref{eq:lem:h-o:1} and \eqref{eq:geom:basicIneq}, we obtain the following geometric Hardy inequality. 	
	
	\begin{corollary}[\textbf{A geometric three-particle fractional Hardy inequality}]\label{cor:proof:Hoffmann-ostenhof}
		\begin{align*}
			&\intd\intd\intd u(x_1,x_2,x_3)\left(\sum^3_{i=1}\sLapj{i}u(x_1,x_2,x_3)\right)dx_1\,dx_2\,dx_3\\
			& \hspace{6em}\ \geq\ 3^{1-s} \CFH \intd\intd\intd \frac{|u(x_1,x_2,x_3)|^2}{(R_{123})^{2s}}\,dx_1\,dx_2\,dx_3,
		\end{align*}
		where $R_{123}$ is the circumradius of the triangle of vertices $x_1,x_2$ and $x_3$.  
	\end{corollary}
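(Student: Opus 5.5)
The plan is to obtain Corollary~\ref{cor:proof:Hoffmann-ostenhof} by a purely pointwise comparison of potentials, starting from Theorem~\ref{thm:main:1}. Since $\CFH>0$ and $|u|^2\ge 0$, it is enough to show that for every $(x_1,x_2,x_3)\in\off{3}$
\[
\V(x_1,x_2,x_3)\ \geq\ \frac{3^{1-s}}{(R_{123})^{2s}}.
\]
Inserting this bound into the right-hand side of~\eqref{eq:thm:1:inequality} then gives the claim for all $u\in C^\infty_c(\off{3})$. On $\off{3}$ the three points are distinct. If they are not collinear, $R_{123}$ is finite and positive. If they are collinear, I use the convention $R_{123}=\infty$, so the right-hand side vanishes and the inequality is trivial. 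In any case, the collinear configurations form a null set in $\R^{3d}$.

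The pointwise bound follows by chaining two earlier inequalities. Write $a=|x_2-x_3|$, $b=|x_1-x_3|$, $c=|x_1-x_2|$. With this notation the cyclic sum defining $\V$ in~\eqref{eq:thm:1:def:potential} is exactly the left-hand side of~\eqref{eq:lem:h-o:1}, up to relabelling the sides. So the first step applies~\eqref{eq:lem:h-o:1}, equivalently~\eqref{eq:intro:potential-bound}, to get $\V\geq a^{-2s}+b^{-2s}+c^{-2s}$. The second step applies the circumradius inequality~\eqref{eq:geom:basicIneq} to the triangle with vertices $x_1,x_2,x_3$, which bounds this sum below by $3^{1-s}R_{123}^{-2s}$.

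I would check two points for the record. For~\eqref{eq:lem:h-o:1}, set $p=a^{2s}$, $q=b^{2s}$, $r=c^{2s}$ and multiply through by $pqr$. The inequality becomes $p^2+q^2+r^2\ge pq+qr+rp$, which is elementary. For~\eqref{eq:geom:basicIneq}, the power-mean inequality gives $\sum a^{-2s}\ge 3\bigl(\tfrac13\sum a^{-2}\bigr)^{s}$, because $t\mapsto t^s$ is concave. It then suffices to show $\sum a^{-2}\ge 3/R^2$. By the law of sines $a=2R\sin A$, this reduces to $\sum \sin^{-2}A\ge 4$, which follows from $\sum\sin^2 A\le 9/4$ together with the AM--HM inequality.

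The main obstacle is~\eqref{eq:geom:basicIneq}, since it carries the geometric content. The paper states it and refers to~\cite{hoffmann2008many}, so I may take it as given. The rest is bookkeeping: matching the cyclic labels in $\V$ with the side lengths, and handling degenerate triangles. Equality in the corollary's potential bound occurs exactly for equilateral triangles, since the equality cases of both inequalities coincide.
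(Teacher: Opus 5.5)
Your main argument is the paper's argument. The paper states the corollary as a direct consequence of Theorem~\ref{thm:main:1} combined with~\eqref{eq:lem:h-o:1} and~\eqref{eq:geom:basicIneq}, and, like you, takes~\eqref{eq:geom:basicIneq} from~\cite{hoffmann2008many} without proof. Your matching of the cyclic sum in $\V$ with the left side of~\eqref{eq:lem:h-o:1}, your reduction of~\eqref{eq:lem:h-o:1} to $p^2+q^2+r^2\ge pq+qr+rp$, and your handling of collinear configurations are all fine. So the corollary is established as you propose.

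The sketch of~\eqref{eq:geom:basicIneq} you give ``for the record'' is wrong, however, and should not be kept.

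\textbf{Reversed inequality.} With $t_i=a_i^{-2}$, concavity of $t\mapsto t^s$ gives $\sum_i t_i^s\le 3\bigl(\tfrac13\sum_i t_i\bigr)^s$, which is the reverse of what you wrote. For a concrete failure, take the isosceles triangle with $b=c=1$ and $a\to0$. Your inequality would give $a^{-2s}+2\ge 3^{1-s}(a^{-2}+2)^s>3^{1-s}a^{-2s}$, which is false for small $a$ because $3^{1-s}>1$.

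\textbf{Wrong constant.} The bound $\sum a^{-2}\ge 3/R^2$ is false: an equilateral triangle has $\sum a^{-2}=1/R^2$. What $\sum\sin^{-2}A\ge 4$ actually yields is $\sum a^{-2}\ge 1/R^2$.

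\textbf{A correct route.} Apply Jensen to the convex, decreasing function $u\mapsto u^{-s}$ with $u_i=a_i^2$. This gives $\sum a^{-2s}\ge 3\bigl(\tfrac13(a^2+b^2+c^2)\bigr)^{-s}$. Then use $a^2+b^2+c^2=4R^2\sum\sin^2 A\le 9R^2$ to get $\sum a^{-2s}\ge 3(3R^2)^{-s}=3^{1-s}R^{-2s}$, with equality exactly for equilateral triangles. Alternatively, use AM--GM, $\sum a^{-2s}\ge 3(abc)^{-2s/3}$, together with $abc=8R^3\sin A\sin B\sin C\le 3\sqrt3\,R^3$.

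Since you explicitly rely on~\eqref{eq:geom:basicIneq} as stated in the paper, this does not affect the validity of your proof of the corollary. The verification as written, though, should be replaced by one of the correct routes above.
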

	
	\noindent
	A second consequence is that the improvement over the pairwise potential admits a fully explicit representation. Indeed, the algebraic identity 
	\begingroup
	\allowdisplaybreaks
	\begin{align*}
		&\frac{c^{2s}}{a^{2s}b^{2s}}+\frac{b^{2s}}{a^{2s}c^{2s}}+\frac{a^{2s}}{b^{2s}c^{2s}}\\
		&\hspace{8em}\ =\ \frac{1}{a^{2s}}+\frac{1}{b^{2s}}+\frac{1}{c^{2s}}+\frac{\bigl(a^{2s}-b^{2s}\bigr)^2+\bigl(b^{2s}-c^{2s}\bigr)^2+\bigl(c^{2s}-a^{2s}\bigr)^2}{2a^{2s}b^{2s}c^{2s}}    
	\end{align*}
	\endgroup
	shows, in particular, that~\eqref{eq:lem:h-o:1} holds and provides an explicit expression for its defect term. Consequently, for every $s\in(0,1)$ and every $(x_1,x_2,x_3)\in\off{3}$, the three-particle potential admits the pointwise decomposition
	\begin{equation}\label{eq:pot:remainder}
		\V(x_1,x_2,x_3)
		\ =\ \sum_{1\leq i<j\leq 3}\frac{1}{|x_i-x_j|^{2s}}+\rem(x_1,x_2,x_3),
	\end{equation}
	where
	\begin{equation}\label{eq:def:remainder}	    
		\displaystyle{\rem(x_1,x_2,x_3)
			\ :=\ \frac{\bigl(r^{2s}_{12}-r^{2s}_{13}\bigr)^2+\bigl(r^{2s}_{13}-r^{2s}_{23}\bigr)^2+\bigl(r^{2s}_{23}-r^{2s}_{12}\bigr)^2}
			{2\,r_{12}^{2s}\,r_{13}^{2s}\,r_{23}^{2s}}
			\ \ge\ 0.}
	\end{equation}
	Note that the remainder $\rem(x_1,x_2,x_3)$ vanishes if and only if the triple is equilateral. By summing the decomposition~\eqref{eq:pot:remainder} over all triples one obtains
	\begin{align}\label{eq:12}
		\int_{\R^{Nd}} u(x)&\left(\sum_{i=1}^N \sLapj{i}u(x)\right)dx\notag\\
		&\hspace{2em} \geq\ \frac{2\,\CFH}{N-1}\int_{\R^{Nd}}\Biggl(\sum_{\substack{1\le i<j\le N}}\frac{1}{|x_i-x_j|^{2s}}\Biggr)\,|u(x)|^2\,dx\notag\\
		&\hspace{3em}+\frac{2\,\CFH}{(N-1)(N-2)}\int_{\R^{Nd}}\Biggl(\sum_{\substack{1\le i<j<k\le N}}\rem(x_i,x_j,x_k)\Biggr)\,|u(x)|^2\,dx,
	\end{align}
	for every $u\in C^{\infty}_c(\off{N})$, which is a refinement of the elementary bound~\eqref{eq:intro:fracHardyQuadraticVersion}.
	
	\begin{proposition}\label{prop:uniform:domination}
		Let $d\geq 1$, $s\in (0,1)$, and let $N>d+1$. For any $x=(x_1,\ldots, x_N)\in \off{N}$, define
		\[
		\remR(x) \ :=\ \sum_{1\leq i<j<k\leq N} \rem(x_i,x_j,x_k). 
		\]
		Then there exists a constant $\delta=\delta(d,s,N)>0$ such that
		\[
		\remR(x) \ \geq\ \delta\sum_{1\leq i<j\leq N}\frac{1}{|x_i-x_j|^{2s}}\quad\text{ for every }x\in\off{N}.
		\]
	\end{proposition}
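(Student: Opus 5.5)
Both sides of the desired inequality are homogeneous of degree $-2s$ under dilations $x\mapsto\lambda x$ and invariant under translations $x_i\mapsto x_i+a$ and permutations. I will therefore argue by compactness after a normalisation. The normalisation I choose is
\[
\min_{i<j}|x_i-x_j| \ =\ 1,
\]
which is natural because the right-hand side is comparable to $1/\min_{i<j}|x_i-x_j|^{2s}$, up to the factor $\binom N2$. So it suffices to prove
\[
\remR(x)\ \geq\ \delta' \ \Big/\ \min_{i<j}|x_i-x_j|^{2s}.
\]
Writing $r_{ij}=|x_i-x_j|$, I will first rewrite each summand by factoring $r^{2s}$ in the numerator. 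For instance
\[
\frac{(r_{12}^{2s}-r_{13}^{2s})^2}{2r_{12}^{2s}r_{13}^{2s}r_{23}^{2s}}
\ =\ \frac{1}{2r_{23}^{2s}}\Big(\sqrt{\tfrac{r_{12}^{2s}}{r_{13}^{2s}}}-\sqrt{\tfrac{r_{13}^{2s}}{r_{12}^{2s}}}\Big)^2 .
\]
Hence $\rem(x_i,x_j,x_k)$ is a sum of nonnegative terms, each bounded below by $c/r^{2s}_{\text{opposite side}}$ times a function measuring how far the triangle is from equilateral.

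\textbf{Key geometric step (where $N>d+1$ enters).} Among $N\ge d+2$ distinct points in $\R^d$, at most $d+1$ can be mutually equidistant, since a regular simplex in $\R^d$ has at most $d+1$ vertices. Fix the pair $(i,j)$ realising $m:=\min r_{ij}$. I would aim for a uniform statement: there exist $\eta=\eta(d,N)>0$ and a triple containing a side of length at most $Cm$ whose scale-invariant non-equilaterality $\max|r_{ab}^{2s}-r_{ac}^{2s}|/\max r^{2s}$ is at least $\eta$. Given such a triple with sides $a\le b\le c$, $a\le Cm$, the bound $\rem\gtrsim \eta^2/a^{2s}\ge \eta^2 C^{-2s}m^{-2s}$ follows from the displayed rewriting, because the term opposite the shortest side carries the factor $1/a^{2s}$. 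For the case $b\gg a$ I would instead use the term $(b^{2s}-a^{2s})^2/(2a^{2s}b^{2s}c^{2s})\approx b^{2s}/(a^{2s}c^{2s})\gtrsim 1/a^{2s}$ when $b\sim c$.

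\textbf{Proof of the key step by a cluster-scale argument.} Start from the minimal pair $\{x_i,x_j\}$. Every other point $x_k$ either lies at distance much larger than $m$, in which case the triple $(i,j,k)$ has $r_{ij}=m\ll r_{ik}\approx r_{jk}$ and is very non-equilateral with short side $m$, giving $\rem\gtrsim 1/m^{2s}$. Otherwise all points lie within distance $Km$ of $x_i$, for a threshold $K=K(d,N)$. In that bounded configuration, after scaling $m=1$, the set of $N$-point configurations with minimal distance $1$ and diameter $\le 2K$ is compact. The function $x\mapsto m^{2s}\remR(x)$ is continuous on it and positive, because it vanishes only if all triples are equilateral, i.e.\ all $N$ points are mutually equidistant, which is impossible for $N>d+1$. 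Hence it has a positive minimum $\delta_K$. To avoid an infinite cascade of scales I would do this directly: either every $x_k$ satisfies $r_{ik}\le Km$, and compactness applies, or some $r_{ik}>Km$, and the explicit triple estimate applies. The constant $K$ can be fixed a priori (say $K=2$), since the explicit estimate only needs $r_{ik}\ge 2m$, which already gives $\rem\ge c_s/m^{2s}$ with $c_s>0$ from $(r_{ik}^{2s}-m^{2s})^2\ge(1-2^{-2s})^2r_{ik}^{4s}$ and $r_{jk}\le r_{ik}+m\le \tfrac32 r_{ik}$.

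\textbf{Conclusion and expected obstacle.} In both cases $\remR(x)\ge \delta'' m^{-2s}\ge \delta''\binom N2^{-1}\sum_{i<j}r_{ij}^{-2s}$, which yields the claim. The main obstacle is the compactness step. I need continuity of $m^{2s}\remR$ on the normalised compact set, which is fine since all distances are at least $1$ there, and a justification that no configuration in it has $\remR=0$. For the latter I must use that $\rem(x_i,x_j,x_k)=0$ forces $r_{ij}=r_{jk}=r_{ik}$. Applied to all triples, this makes all pairwise distances equal, contradicting the regular-simplex bound on the number of equidistant points in $\R^d$. The same fact shows the threshold $N>d+1$ is sharp, since the vertices of a regular simplex give $\remR=0$.
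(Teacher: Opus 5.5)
Your argument is correct and is essentially the paper's proof. It uses the same normalisation to minimal pairwise distance $1$, and the same observation that a triple formed by the minimal pair and a far-away point has remainder bounded below by a multiple of $m^{-2s}$. It then concludes with the same compactness argument, using that at most $d+1$ points of $\R^d$ can be mutually equidistant.

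The only difference is organisational. You argue directly through the explicit dichotomy $\max_k r_{ik}\le 2m$ versus $r_{ik}>2m$ for some $k$, with the quantitative bound $c_s=(1-2^{-2s})^2/\bigl(2(3/2)^{2s}\bigr)$ in the second case. The paper instead argues by contradiction along a minimising sequence, and so needs only the asymptotic form of the far-point estimate.
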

	\begin{proof}[Proof of Proposition~\ref{prop:uniform:domination}]
		Consider the infimum
		\[
		\inf_{\x\in\off{N}}\frac{\remR(x)}{\sum_{1\leq i<j\leq N}\bigl|x_i-x_j\bigr|^{-2s}}.
		\]
		We claim that the above infimum is strictly positive. Suppose, on the contrary, that it is not the case, and let $x^{(n)}=(x^{(n)}_1,\ldots, x^{(n)}_N)\in\off{N}$ be a minimising sequence. Then,  
		\[
		\frac{\remR(x^{(n)})}{\sum_{1\leq i<j\leq N}\bigl|x^{(n)}_i-x^{(n)}_j\bigr|^{-2s}}\longrightarrow0\qquad\text{ as } n\to\infty.
		\]
		Since the above ratio is invariant under translations, dilations and permutations of the $N$-tuple, we may assume without loss of generality that
		\[
		x^{(n)}_1\ =\ 0,\qquad |x^{(n)}_2|\ =\ 1,\qquad |x^{(n)}_i-x^{(n)}_j|\ \geq\ 1 \quad\text{ for all }1\leq i<j\leq N.
		\]
		Under this normalisation, the denominator is bounded above by $\binom{N}{2}$, and hence
		\begin{equation}\label{eq:rem:limit}
			\remR\bigl(x^{(n)}\bigr)\longrightarrow 0\qquad\text{ as }n\to\infty.
		\end{equation}
		Clearly, both $x^{(n)}_1$ and $x^{(n)}_2$ remain bounded. We claim that $\{x^{(n)}_j\}_n$ is bounded for all $j$. Suppose instead that $|x^{(n)}_m|\to\infty$ as $n\to\infty$ for some $m\notin\{1,2\}$. Set
		\[
		B_n\ :=\ |x^{(n)}_m|^{2s}\to\infty,\quad C_n\ :=\ |x^{(n)}_2-x^{(n)}_m|^{2s}\to\infty,\quad A_n \ :=\ |x^{(n)}_2|^{2s}=1 \quad\text{ for all }n\in\N.
		\]
		We focus on the term $\rem(x^{(n)}_1,x^{(n)}_2,x^{(n)}_m)$. By the definition of this remainder term,
		\begingroup
		\begin{align*}
			\rem(x^{(n)}_1,x^{(n)}_2,x^{(n)}_m)
			&\ =\ \frac{(1-B_n)^2+(B_n-C_n)^2+(1-C_n)^2}{2B_nC_n}\\[0.25em]
			&\ =\ \frac{B_n}{C_n}+\frac{C_n}{B_n}-1-\frac{1}{B_n}-\frac{1}{C_n}+\frac{1}{B_nC_n}.
		\end{align*}
		\endgroup
		Since $t+1/t\geq 2$ for every $t>0$, it follows that
		\[
		\rem(x^{(n)}_1,x^{(n)}_2,x^{(n)}_m)
		\ \geq\ 1-\frac{1}{B_n}-\frac{1}{C_n}+\frac{1}{B_nC_n}.
		\]
		Therefore, $\rem(x^{(n)}_1,x^{(n)}_2,x^{(n)}_m)$ remains bounded away from zero as $n\to\infty$, contradicting~\eqref{eq:rem:limit}.
		
		It follows that all sequences $\{x^{(n)}_m\}^{\infty}_{n=1}$ are bounded. Passing to a subsequence, we may therefore assume that
		\[
		x^{(n)}\longrightarrow x^{\star}=(x^{\star}_1,\ldots,x^{\star}_N)\in(\R^d)^N \qquad\text{ as }n\to\infty.
		\]
		In particular, $x^{\star}\in\off{N}$ since all pairwise distances are bounded below by $1$. Passing to the limit as $n\to\infty$, we deduce $\remR(x^{\star})=0$. Consequently, each summand $\rem(x^{\star}_i,x^{\star}_j,x^{\star}_k)=0$, and hence every triple $x^{\star}_i$,$x^{\star}_j$, $x^{\star}_k$ forms an equilateral triangle. It follows that $x^{\star}_1,x^{\star}_2,\ldots, x^{\star}_N$ are equidistant from each other in $\R^d$. This is only possible if $N\leq d+1$, which contradicts the assumption $N>d+1$. This completes the proof.
	\end{proof}
	
	\begin{remark}[\textbf{On the condition $\mathbf{N>d+1}$}]\label{rem:necesity:condition}
		The restriction $N>d+1$ is sharp for Proposition~\ref{prop:uniform:domination}. Indeed, if $N\leq d+1$, one may choose $x_1,\ldots, x_N$ as the vertices of a regular simplex in $\R^d$. Then, every triple is equilateral, hence $\rem(x_i,x_j,x_k)=0$ for all $1\leq i<j<k\leq N$, and therefore $\remR(x)=0$, while $\sum_{i<j}|x_i-x_j|^{-2s}>0$. Thus, no positive constant $\delta$ can satisfy the conclusion of Proposition ~\ref{prop:uniform:domination} in that range.
	\end{remark}
	
	\begin{corollary}[\textbf{Improved Coulomb-type constant}]\label{cor:strict:improvement}
		Let $d\geq 1$, $s\in (0,1)$, and $N\geq 3$, and assume $N>d+1$. Then there exists $\delta=\delta(d,s,N)>0$ such that
		\begingroup
		\allowdisplaybreaks
		\begin{align*}    
			& \int_{\R^{Nd}}u(x)\left(\sum^N_{i=1}\sLapj{i}u(x)\right)\,dx\\
			& \hspace{5em}\geq \frac{2\,\CFH}{N-1} \left(1+\frac{\delta}{N-2}\right)\int_{\R^{Nd}}\left(\sum_{1\leq i<j\leq N}\frac{1}{|x_i-x_j|^{2s}}\right)|u(x)|^2\,dx\,,
		\end{align*}
		\endgroup
		for every $u\in C^{\infty}_c(\off{N})$.
	\end{corollary}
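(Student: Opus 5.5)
The plan is to combine the refined many-particle inequality \eqref{eq:12} with the uniform domination of Proposition~\ref{prop:uniform:domination}. The corollary's hypotheses are \(d\ge1\) and \(N>d+1\) only, while \eqref{eq:12} was obtained from Theorem~\ref{thm:main:1} and Corollary~\ref{cor:many-particle}, which require \(d\ge 4-2s\). So I first note that \eqref{eq:12} is the input. I read the statement as being in the regime where \eqref{eq:12} is available, namely \(d\ge4-2s\) together with \(N>d+1\). Otherwise the argument is conditional on \eqref{eq:12}.

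\textbf{Step 1.} Start from \eqref{eq:12}. It follows by summing the three-particle inequality \eqref{eq:thm:1:inequality} over all triples \(i<j<k\), using the decomposition \eqref{eq:pot:remainder} of \(\V\), and counting multiplicities. Each operator \(\sLapj{i}\) appears in \(\binom{N-1}{2}\) triples, and each pair \((i,j)\) appears in \(N-2\) triples. This gives the constants \(\frac{2\CFH}{(N-1)(N-2)}\) and \(\frac{2\CFH}{N-1}\). For a fixed triple, \(u\in C^\infty_c(\off{N})\) restricts, for fixed remaining variables, to a function in \(C^\infty_c(\off{3})\), so Theorem~\ref{thm:main:1} applies fibrewise. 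Integrating over the remaining variables then gives the triple inequality.

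\textbf{Step 2.} Bound the remainder integrand pointwise on \(\off{N}\) using Proposition~\ref{prop:uniform:domination}:
\[
\sum_{i<j<k}\rem(x_i,x_j,x_k)=\remR(x)\ \ge\ \delta\sum_{i<j}|x_i-x_j|^{-2s},
\]
with \(\delta=\delta(d,s,N)>0\). Since \(|u|^2\ge0\), inserting this bound into the second term of \eqref{eq:12} gives
\[
\frac{2\CFH}{N-1}\Bigl(1+\frac{\delta}{N-2}\Bigr)\int\sum_{i<j}\frac{|u|^2}{|x_i-x_j|^{2s}}\,dx.
\]

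There is essentially no analytic obstacle here, since all the weight is carried by \eqref{eq:12} and the compactness argument of Proposition~\ref{prop:uniform:domination}. The only point needing care is the bookkeeping in Step 1 and the dimensional restriction \(d\ge4-2s\) inherited from Theorem~\ref{thm:main:1}.
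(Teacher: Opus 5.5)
Your proposal is correct and is the paper's proof. The paper likewise inserts the pointwise bound of Proposition~\ref{prop:uniform:domination} into the remainder term of \eqref{eq:12} and factors out $\frac{2\,\CFH}{N-1}$.

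Your caveat about dimension is also well taken. The inequality \eqref{eq:12} rests on Theorem~\ref{thm:main:1}, so the argument only covers $d\ge 4-2s$, which forces $d\ge3$ and $N\ge5$. The statement nevertheless lists $d\ge1$, and the paper cites \eqref{eq:12} without flagging this restriction.
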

	\begin{proof}[Proof of Corollary~\ref{cor:strict:improvement}]
		By the refined many-particle inequality derived above,
		\begingroup
		\allowdisplaybreaks
		\begin{align*}
			\int_{\R^{Nd}}u(x)\left(\sum^N_{i=1}\sLapj{i}u(x)\right)\,dx
			&\geq \frac{2\,\CFH}{N-1}\int_{\R^{Nd}}\left(\sum_{1\leq i<j\leq N}\frac{1}{|x_i-x_j|^{2s}}\right)|u(x)|^2\,dx \\  
			&\hspace{-3em}+\ \frac{2\,\CFH}{(N-1)(N-2)}\int_{\R^{Nd}}\remR(x)|u(x)|^2\,dx.
		\end{align*}
		\endgroup
		By Proposition~\ref{prop:uniform:domination},
		\[
		\remR(x) \ \geq\ \delta\sum_{1\leq i<j\leq N}\frac{1}{|x_i-x_j|^{2s}}\quad\text{ for all }x\in\off{N}.
		\]
		Substituting this bound into the preceding inequality yields the claimed estimate. 
	\end{proof}
	
	\begin{remark}
		The above corollary is an analogue of Theorem 5 in Lundholm~\cite{lundholm2017methods} in the setting of many-particle fractional inequalities with Coulomb-type potentials. It is an effective improvement of the basic inequality~\eqref{eq:intro:fracHardyQuadraticVersion} but, unlike the local case, we have no explicit constant. In addition, the improvement requires the restriction $N>d+1$. Further comments are made below in Remark~\ref{rem:behavior:s}.
	\end{remark}    
	\medskip
	
	\subsection{Final remarks}\label{sec:final-remarks}
	
	We conclude with remarks concerning limitations of the method and possible directions for improvement. 
	
	\begin{remark}\label{rem:new1}
		The asymptotic behaviour of constants in fractional Hardy inequalities with Coulomb-type potentials has been studied in several works, most recently by Frank, Hoffmann-Ostenhof, Laptev, and Solovej~\cite{frank2024hardy}. Our result does not constitute an asymptotic result like theirs, and the method of proof differs substantially from theirs. While our approach is based on the ground-state representation combined with explicit singular integral identities, their analysis relies on semiclassical methods, coherent states, mean-field arguments, and the Lieb-Thirring inequality, and is closely connected to the Thomas-Fermi theory for Coulomb systems. 
	\end{remark}
	
	\begin{remark}[\textbf{Behaviour as $s\uparrow 1$}]\label{rem:behavior:s} We highlight some inherent difficulties in obtaining sharp constants for the Coulomb-type potential $\sum_{i<j}|x_i-x_j|^{-2s}$ in the fractional setting. This is best illustrated by considering the limit $s\uparrow 1$. We focus on the three-particle case, which captures the essential difficulty in both local and nonlocal settings. On the one hand, as $s\uparrow 1$, the fractional three-particle potential $\V(x_1,x_2,x_3)$ converges to the local three-body potential 
		\[
		V_{3,loc}(x_1,x_2,x_3) \ :=\ \sum_{\text{cyclic}}\frac{|x_k-x_j|^2}{|x_i-x_j|^2|x_i-x_k|^2}
		\ =\ 2\;\sum_{1\leq i<j\leq 3}\frac{1}{|x_i-x_j|^2}-\frac{1}{R^2_{123}},
		\]
		while the coefficient $\CFH$ in~\eqref{eq:thm:1:inequality} converges to the optimal constant $(d-2)^2/4$ in the two-particle Hardy inequality.
		
		On the other hand, we compare this limit with the local result in~\cite[Theorem~5]{lundholm2015geometric}. There, the ground state is taken in the form
		\[
		|x_1-x_2|^{-(d-2)\alpha}|x_1-x_3|^{-(d-2)\alpha}|x_2-x_3|^{-(d-2)\alpha}\,,
		\]
		which leads to the lower bound $(d-2)^2\bigl(2\alpha(1-\alpha)\sum_{1\leq i<j\leq 3}|x_i-x_j|^{-2}-(\alpha^2/R^2_{123})\bigr)$. This expression can still be optimised in $\alpha$; the optimal choice $\alpha=1/3$ yields the improved constant $(d-2)^2/3$ in the work of Lundholm~\cite{lundholm2015geometric}, compared to $(d-2)^2/4$ for the Coulomb-type potential. We cannot match this result in the limiting case of our analysis as $s \to 1$, because the minimum of $V_{3,loc}(x_1,x_2,x_3)/(r^{-2}_{12}+r^{-2}_{13}+r^{-2}_{23})$ over $(x_1,x_2,x_3)\in\off{3}$ is~$1$. Indeed, the condition $N=3>d+1$ does not hold, and Remark~\ref{rem:necesity:condition} shows that no positive $\delta$ is available in Proposition~\ref{prop:uniform:domination}. Consequently, this limiting argument does not improve the constant $(d-2)^2/4$.
		
		This identifies the core difficulty in obtaining an improved constant for the Coulomb-type potential in the nonlocal setting. Although in the evaluation of the interaction term in Lemma~\ref{lem:limit:interaction} we start with a ground-state ansatz containing a free parameter $\alpha$,
		\[
		|x_i-x_j|^{-(d-2)\alpha}|x_i-x_k|^{-(d-2)\alpha}|x_j-x_k|^{-(d-2)\alpha}\,,
		\]
		Theorem~\ref{thm:main:2} can only be applied when $\alpha=1/2$.
	\end{remark}
	
	\begin{remark}[\textbf{Limitations of the GSR method in the fractional setting}] If one attempts to extend the ground-state representation method to four particles, the natural ansatz for the ground state would be
		\[
		\omega(x_1,x_2,x_3,x_4) \ =\ \prod_{1\leq i<j\leq 4}|x_i-x_j|^{-\alpha}.
		\]
		This leads to a four-fold singular integral of the form	
		\[
		\intd \prod^4_{i=1}|x_i-t|^{-\beta_i}\,dt
		\]
		with suitable exponents $\beta_i$. However, unlike the two-fold and three-fold cases, analogous formulae for such $k$-fold integrals, for $k\geq 4$, are unavailable in general, according to recent work by Wu, Shi, Nie, and Yan~\emph{\cite{wu2020kfold}}.
	\end{remark}

	\begin{remark}[{\bf On the dimensional threshold $d\geq 4-2s$}]\label{rem:dim:restriction}
		The restriction $d\geq 4-2s$ arises from the specific mechanism used to evaluate the interaction term in Lemma~\ref{lem:limit:interaction}. More precisely, after reducing the problem to the model integral $T(x_1,x_2,x_3)$ and performing the Green-type decomposition, one encounters boundary contributions whose behaviour changes qualitatively at $d+2s=4$. Above the threshold these terms are negligible, whereas in the borderline case they remain finite and can still be computed explicitly; below it, control is lost over these terms. We emphasise that this does not itself show that the inequality of Theorem~\ref{thm:main:1} is false when $d+2s<4$. Rather, it shows that the present combination of truncated ground-state representation, fractional Leibniz rule, and explicit Selberg-type correlation identity appears insufficient in that range, although it remains plausible that some version of Theorem~\ref{thm:main:1} holds below this threshold.
	\end{remark}
	\section{Proof of Theorem~\ref{thm:main:2}}\label{subsec:new_integral_formula}
	
	We prove the singular integral identity~\eqref{eq:thm:MOZ:1} in three steps. First, we exploit the invariances of the kernel to reduce the statement to a normalised configuration. Second, we show that the resulting objects define tempered distributions. Third, we compute their Fourier transforms and compare them.\medskip
	
	\noindent\textbf{Step 1: Reduction by invariance.}\\
	Let 
	\begin{equation}\label{eq:def:K}
		\kernel(x,y,z) \ :=\ \intd\frac{(x-t)\cdot(y-t)}{|x-t|^{d_2+1}|y-t|^{d_3+1}}\dfrac{1}{|z-t|^{d_1}}dt\,.
	\end{equation}
	We suppress the dependence on the parameters $d_1$, $d_2$ and $d_3$ in the notation. The assumptions $d_i<d$ for $i=1,2,3$ ensure local integrability of the integrand, while the constraint 
	\[
	d_1+d_2+d_3=2d
	\] 
	implies integrability at infinity, since the integrand behaves like $|t|^{-2d}$ as $|t|\to\infty$.
	
	The kernel $\kernel$ is invariant under simultaneous translations and rotations, and is homogeneous of degree $-d$. More precisely, for all $a,x,y,z \in \R^d$, all $Q\in SO(\R^d)$ and all $\rho>0$, one has   
	\begin{gather}
		\label{I1} \kernel(x+a,y+a,z+a)\ =\ \kernel(x,y,z)\,, \\
		\label{I2} \kernel(Qx,Qy,Qz) \ =\ \kernel(x,y,z)\,,\\
		\label{I3} \kernel(\rho x,\rho y,\rho z)\ =\ \rho ^{-d} \kernel(x,y,z).
	\end{gather}
	Likewise, define
	\begin{equation}\label{J}
		J(x,y,z) \ :=\ \frac{(x-z)}{|x-z|^{d-d_3+1}} \cdot \frac{(y-z)}{|y-z|^{d-d_2+1}}\frac{1}{|x-y|^{d-d_1}}\,.
	\end{equation} 
	The function $J$ satisfies the same transformation laws, and the homogeneity property in~\eqref{I3} follows directly from the relation $d_1+d_2+d_3=2d$. Therefore, it suffices to establish 
	\begin{equation}\label{reduced_identity} 
		\kernel(w,e_1,0)\ =\ C\, J(w,e_1,0), \quad\mbox{ whenever } w \in \R^d \setminus \{0,e_1\},
	\end{equation}
	for a suitable constant $ C=\CMOZ$. From now on, we  write simply $\kernel(w)$ and $J(w)$ for $\kernel(w,e_1,0)$ and $J(w,e_1,0)$, respectively.\medskip
	
	\noindent\textbf{ Step 2: Tempered-distribution framework.}\\
	We first verify that both $\kernel$ and $J$ define tempered distributions. In the normalised configuration, these functions take the form
	\begin{gather}
		\kernel(w) \ =\ \intd\frac{(w-t)\cdot(e_1-t)}{|w-t|^{d_2+1}|e_1-t|^{d_3+1}|t|^{d_1}}\,dt\,,\label{Fw} \\[0.5em]
		J(w) \ =\ \frac{w_1}{|w-e_1|^{d-d_1}|w|^{d-d_3+1}} \,.\label{Jw}
	\end{gather}
	The function $J$ is locally integrable, since $d_1,d_3>0$, and decays at infinity like $|w|^{-d_2}$. Local integrability together with polynomial decay at infinity ensures that $J$ defines a tempered distribution. 
	
	For $\kernel$, we claim that it has a singularity of order $|w-e_1|^{-(d-d_1)}$ at $e_1$, a singularity of order $|w|^{-(d-d_3)}$ at $0$, and decays like $|w|^{-d_2}$ at infinity. To analyse the behaviour near~$e_1$, write $w=e_1+ru$ with $u\in\S^{d-1}$, and perform the change of variables $t=e_1+r s$ in~\eqref{Fw}. This yields
	\begin{equation*}
		\kernel(e_1+ru)
		\ =\ r^{d-d_2-d_3} \intd\frac{(u-s)\cdot(-s)}{|u-s|^{d_2+1}\,|s|^{d_3+1}\,|e_1+rs|^{d_1}}\,ds\,.
	\end{equation*}
	As $r\to0$, the integral converges to 
	\begin{equation*}
		\intd\frac{(u-s)\cdot(-s)}{|u-s|^{d_2+1}|-s|^{d_3+1}}\, ds\,,
	\end{equation*}
	a finite quantity independent of $u\in\S^{d-1}$ by rotational symmetry. Since $d_1+d_2+d_3=2d$, it follows that 
	\[
	\kernel(e_1+ru)=O\left(r^{-(d-d_1)}\right).
	\]    
	The behaviour near $0$ and at infinity follows analogously. In particular, $\kernel$ is locally integrable and has at most polynomial growth at infinity, and hence defines a tempered distribution.\medskip
	
	\noindent\textbf{Step 3: Comparison of Fourier transforms.}\\
	We prove identity~\eqref{reduced_identity} in the sense of tempered distributions by showing that the Fourier transforms of $\kernel$ and $J$ coincide up to an explicit constant.
	
	Let $\varphi\in\mathcal{S}(\R^d)$. Identifying $\kernel$ with the tempered distribution induced by the kernel in~\eqref{Fw}, we write 
	\begin{equation*}
		\langle\widehat{\kernel} \,, \, \varphi \rangle 
		\ =\ \langle\kernel \,, \, \widehat{\varphi}\,\rangle \ =\ \intd \left( \intd\frac{(w-t)\cdot(e_1-t)}{|w-t|^{d_2+1}|e_1-t|^{d_3+1}|t|^{d_1}}\, dt \right) \widehat{\varphi}(w)\, dw
	\end{equation*}
	and decompose the dot product componentwise,
	\begin{equation}\label{Igh}
		\langle\widehat{\kernel} \,, \, \varphi \rangle
		\ =:\ \sum_{j=1}^d  \intd \left(\intd g_j(w-t) h_j(t)\, dt \right) \widehat{\varphi}(w)\, dw\,,
	\end{equation}
	where 
	\begin{equation}\label{eq:def:gj:hj} 
		g_j(t) \ :=\ \dfrac{t_j}{|t|^{d_2+1}}\,,\quad
		h_j(t) \ :=\ \dfrac{(e_1-t)_j}{|e_1-t|^{d_3+1}|t|^{d_1}}\,.
	\end{equation}
	Here $g_j$ defines a tempered distribution, while $h_j\in L^1(\R^d)$. Using first the Grafakos-Morpurgo formula~\eqref{eq:prelim:grafakos:formula} and then the finiteness of the final integral, one obtains the following bound: 
	\begin{align*}
		&\intd \intd |g_j(w-t)\,h_j(t)|\, |\widehat{\varphi}(w)|\, dt\, dw \\
		& \quad\ \le\  \intd \intd \dfrac{1}{|w-t|^{d_2}} \dfrac{1}{|e_1-t|^{d_3}}\dfrac{1}{|t|^{d_1}} \ |\widehat{\varphi}(w)|\, dt\, dw\\
		& \quad\ =\ C_{\rm GM}(d_1,d_2,d_3,d)\intd  \dfrac{1}{|w-e_1|^{d-d_1}}\dfrac{1}{|w|^{d-d_3}} \ |\widehat{\varphi}(w)|\, dw\,,
	\end{align*}
	This proves the integrability of  $g_j(w-t) h_j(t)\,   \widehat{\varphi}(w)$ and,
	therefore, the order of integration in~\eqref{Igh} can be interchanged. Using standard identities for Fourier transforms of translates and convolutions, we obtain
	\begingroup
	\allowdisplaybreaks
	\begin{align*}
		\langle \widehat{\kernel} \,, \, \varphi \rangle & \ =\ \sum_{j=1}^d  \intd \left(\intd g_j(w-t)  \widehat{\varphi}(w)\, dw \right)h_j(t)\, dt \\
		&\ =\ \sum_{j=1}^d  \intd \left(\intd \wwidehat{g_j(\cdot-t)}(x)\varphi(x)\, dx  \right)h_j(t)\, dt\\
		& \ =\ \sum_{j=1}^d  \intd \left(\intd e^{-2 \pi i t \cdot x}\widehat{\,g_j\,}(x)  \varphi(x)\,dx\right)h_j(t)\, dt\\
		& \ =\ \sum_{j=1}^d  \intd \left(\intd e^{-2 \pi i t \cdot x}h_j(t)\, dt  \right)\widehat{\,g_j\,}(x)\varphi(x)\, dx \\
		&\ =\ \sum_{j=1}^d\intd\wwidehat{h_j}(x) \widehat{\,g_j\,}(x)\varphi(x)\, dx\,.
	\end{align*}
	\endgroup
	In the penultimate step, we again interchanged the order of integration, justified by the integrability of $h_j(t)\,\widehat{\,g_j\,}(x)\,\varphi(x)$ on $\R^d\times\R^d$. Hence,
	\begin{equation}\label{ft-I}
		\widehat{\kernel}\ =\ \sum_{j=1}^d\wwidehat{h_j}\,\widehat{\,g_j\,}.
	\end{equation}
	It remains to identify $\widehat{J}$. This is precisely the content of the following claim, proved later in Proposition~\ref{prop:app:Fourier:convolution}. 
	
	\medskip
	\noindent
	{\bf Claim. } 
	\begingroup
	\allowdisplaybreaks
	\begin{align}\label{FT-product}	 
		\left(\dfrac{w_1}{|w|^{d-d_3+1}} \dfrac{1}{|w-e_1|^{d-d_1}}\right)^{\wwidehat{\;\;\;}}(\xi)
		\ =\  \left(\frac{\CLL{d_1}}{\CLL{d-d_1}}\frac{\CLL{d_3+1}}{i\,\CLL{d-d_3+1}}\right) \left( \frac{y_1}{|y|^{d_3+1}} \ast \frac{e^{-2 \pi i y \cdot e_1}}{|y|^{d_1}} \right)(\xi).
	\end{align}
	\endgroup
	\smallskip
	
	\noindent From this point on, assume that~\eqref{FT-product} holds and first write explicitly the convolution on the right-hand side:
	\[
	\int_{\R^d}e^{-2\pi iy\cdot e_1}\frac{(\xi-y)\cdot e_1}{|\xi-y|^{d_3+1}}\frac{1}{|y|^{d_1}}\,dy.
	\]
	Apply the change of variables $y=|\xi|s$ in the convolution integral to get
	\begin{equation*}
		\biggl(i \frac{\CLL{d-d_1}}{\CLL{d_1}}\frac{\CLL{d-d_3+1}}{\CLL{d_3+1}}\biggr)\widehat{J} (\xi) 
		\ =\ \frac{1}{|\xi|^{d_1+d_3-d}}\intd e^{-2 \pi i |\xi|s\cdot e_1}\frac{(\widehat{\xi}-s) \cdot e_1}{|\widehat{\xi}-s|^{d_3+1}} \frac{1}{|s|^{d_1}}\,ds,
	\end{equation*}
	where $\widehat{\xi}$ denotes the unit vector $\xi/|\xi|\in\S^{d-1}$. Since the spatial dimension is at least $2$, one can find a {\em symmetric} orthogonal change of coordinates $s \mapsto Qz$ with $\widehat{\xi} = Qe_1$. Hence,
	\begingroup
	\allowdisplaybreaks
	\begin{eqnarray*}
		\biggl(i\frac{\CLL{d-d_1}}{\CLL{d_1}}\frac{\CLL{d-d_3+1}}{\CLL{d_3+1}}\biggr)
		\widehat{J} (\xi) 
		& = & \frac{1}{|\xi|^{d_1+d_3-d}}\intd e^{-2 \pi i |\xi|(Qz)\cdot e_1}\frac{(Qe_1- Qz) \cdot Q\widehat{\xi} }{|Qe_1-Qz|^{d_3+1}} \frac{1}{|Qz|^{d_1}}\,|\det Q| dz\\
		& = &\ \sum_{j=1}^d\dfrac{\xi_j}{|\xi|^{d-d_2+1}} \intd  e^{-2 \pi i z\cdot\xi}\frac{(e_1-z)_j}{|e_1-z|^{d_3+1}} \frac{1}{|z|^{d_1}}\, dz.
	\end{eqnarray*}
	\endgroup
	Invoking \eqref{lem:app:Fourier} once again, for every $j\in\{1,\ldots, d\}$, we have
	\[
	\frac{\xi_j}{|\xi|^{d-d_2+1}}\ =\ \left(\frac{\CLL{d-d_2+1}}{i\,\CLL{d_2+1}}\right)^{-1}\widehat{\frac{x_j}{|x|^{d_2+1}}}(\xi)\,.
	\]
	Substituting this into the previous relation gives
	\begin{equation}\label{ft-J}
		\biggl(\frac{\CLL{d-d_1}}{\CLL{d_1}}\frac{\CLL{d-d_2+1}}{\CLL{d_2+1}}\frac{\CLL{d-d_3+1}}{\CLL{d_3+1}}\biggr)\widehat{J} (\xi)
		\ =\ \sum_{j=1}^d \widehat{\,g_j\,}(\xi) \wwidehat{h_j}(\xi),
	\end{equation}
	in view of~\eqref{eq:def:gj:hj}. Comparing expressions~\eqref{ft-I} and~\eqref{ft-J}, we conclude that
	\begin{equation}\label{eq:proof:theorem:MOZ:last}
		\kernel(w) \ =\ \left(\frac{\CLL{d-d_1}}{\CLL{d_1}}\frac{\CLL{d-d_2+1}}{\CLL{d_2+1}}\frac{\CLL{d-d_3+1}}{\CLL{d_3+1}}\right) J(w).
	\end{equation}
	A direct simplification of the constants using the definition of $\CLL{\mu}$ in~\eqref{eq:Fourier:Lieb-Loss:constant} yields precisely the constant~$\CMOZ$,
	\begin{align*}
		\frac{\CLL{d-d_1}}{\CLL{d_1}}\frac{\CLL{d-d_2+1}}{\CLL{d_2+1}}\frac{\CLL{d-d_3+1}}{\CLL{d_3+1}}
		&\ =\ \pi^{\frac{d}{2}}\,\frac{\Gamma(\tfrac{d-d_1}{2})}{\Gamma(\tfrac{d_1}{2})}\frac{\Gamma(\tfrac{d-d_2+1}{2})}{\Gamma(\tfrac{d_2+1}{2})}\frac{\Gamma(\tfrac{d-d_3+1}{2})}{\Gamma(\tfrac{d_3+1}{2})}.
	\end{align*}
	This completes the proof of Theorem~\ref{thm:main:2}.\qed 	
	\medskip
	
	\appendix
	
	\makeatletter
	\renewcommand{\@seccntformat}[1]{%
		\ifcsname ifinappendix\endcsname
		\ifinappendix
		\ifdefstring{#1}{section}{\appendixname~\thesection.\quad}{%
			\ifdefstring{#1}{subsection}{\appendixname~\thesubsection.\quad}{%
				\csname the#1\endcsname\quad
			}%
		}%
		\else
		\csname the#1\endcsname\quad
		\fi
		\else
		\csname the#1\endcsname\quad
		\fi
	}
	\makeatother
	
	\section{Auxiliary analytic results}\label{sec:app:A}
	This appendix contains the auxiliary results used in the proofs of Theorems~\ref{thm:main:1} and~\ref{thm:main:2}. Appendix~\ref{subsec:app:A:1} treats the truncated ground-state construction and the limiting two-body and interaction terms in the proof of Theorem~\ref{thm:main:1}. Appendix~\ref{subsec:app:A:2} proves the distributional convolution identity used in Theorem~\ref{thm:main:2}. In Appendix~\ref{subsec:app:A:3} we record the convolution estimates used in the proof of Theorem~\ref{thm:main:2}. 
	
	\subsection{Technical lemmas for the proof of Theorem~\ref{thm:main:1} }\label{subsec:app:A:1}
	
	We prove the auxiliary results required for the ground-state representation arguments in Section~\ref{sec:proof:main_thm}: admissibility of the truncated weights, convergence of the two-body Hardy terms, and evaluation of the nonlocal interaction term.
	
	\begin{lemma}\label{lem:app:regularity}
		Let $d\in\N$, $0<s<1$, $0<\alpha<\frac{d-2s}{2}$, and $n\in\N$. Then, for every $i\in\{1,2,3\}$,
		\[
		\om{i} \in H^s(\R^d),
		\]
		where $\om{i}$ was introduced in Step 1 of the proof of Theorem~\ref{thm:main:1}.
	\end{lemma}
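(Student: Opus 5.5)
By symmetry, the plan treats only $i=1$, with $x_2\neq x_3$ fixed. Then $\om{1}(x_1)=|x_2-x_3|^{-\alpha}\,\valphan(x_1-x_2)\,\valphan(x_1-x_3)$, and the constant factor $|x_2-x_3|^{-\alpha}$ plays no role. It therefore suffices to show that the product $f g$ lies in $H^s(\R^d)$, where $f:=\valphan(\cdot-x_2)$ and $g:=\valphan(\cdot-x_3)$. The translates $f$ and $g$ belong to $H^s(\R^d)$ by Frank--Lieb--Seiringer~\cite[Proposition 4.1]{frank2008hardy}, since $H^s$ is translation invariant. I will use the Gagliardo characterisation~\eqref{eq:prelim:identity:2}: I need $fg\in L^2$ and finiteness of the double integral of $|fg(x)-fg(y)|^2|x-y|^{-d-2s}$.

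\textbf{Localisation.} Set $\delta:=|x_2-x_3|/3$. I will take a smooth partition of unity $1=\chi_2+\chi_3+\chi_0$ with $\chi_2$ supported in $B_{2\delta}(x_2)$, $\chi_3$ supported in $B_{2\delta}(x_3)$, and $\chi_0$ vanishing on $B_\delta(x_2)\cup B_\delta(x_3)$. Multiplication by a smooth function with bounded derivatives preserves $H^s$: the Gagliardo seminorm of $\chi h$ is controlled by $\|\chi\|_{W^{1,\infty}}\|h\|_{H^s}$, via the elementary splitting $\chi(x)h(x)-\chi(y)h(y)=\chi(x)(h(x)-h(y))+h(y)(\chi(x)-\chi(y))$ and the bound $|\chi(x)-\chi(y)|\le C\min(1,|x-y|)$. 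It therefore suffices to show $\chi_k fg\in H^s$ for each $k$.

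\textbf{The three pieces.} On $\supp\chi_2$, the function $g$ coincides with a smooth, bounded function $\tilde g$ with bounded derivatives, since $x_3$ lies at distance at least $\delta$ from this support and the cut-off $\phi(\cdot/n)$ is smooth. Hence $\chi_2 fg=\chi_2\tilde g\,f$, which lies in $H^s$ by the multiplier bound above, with $\chi_2\tilde g$ as the multiplier. The case $k=3$ is symmetric. On $\supp\chi_0$, both $f$ and $g$ are smooth with bounded derivatives, and $fg$ is compactly supported because $\phi$ is. Thus $\chi_0 fg\in C^\infty_c(\R^d)\subset H^s$. Membership in $L^2$ follows in the same way, or directly, since $|x|^{-2\alpha}$ is locally integrable when $2\alpha<d-2s<d$.

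\textbf{Main obstacle.} The main obstacle is the multiplier estimate near the singular points: $f$ is unbounded, so products cannot be handled by an $L^\infty\cdot H^s$ bound on $f$ itself. This is avoided because only the \emph{smooth} factor is used as the multiplier. In the splitting, the term $h(y)(\chi(x)-\chi(y))$ contributes at most $C\|h\|_{L^2}^2\int\min(1,|z|^2)|z|^{-d-2s}\,dz<\infty$, since $s<1$. This requires only $h=f\in L^2$, which holds because $\alpha<(d-2s)/2<d/2$.
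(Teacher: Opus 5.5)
Your proof is correct and follows essentially the same route as the paper: a partition of unity around $x_2$ and $x_3$ in which only the smooth factor is used as the multiplier near each singularity, and the remaining piece lies in $C^\infty_c(\R^d)$. Your only addition is an explicit proof of the multiplier bound on $H^s(\R^d)$, which the paper simply quotes as standard; the overlap of your balls $B_{2\delta}(x_2)$ and $B_{2\delta}(x_3)$ is harmless, since you only need $x_3$ to lie at positive distance from $\supp\chi_2$.
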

	
	\begin{proof}[Proof of Lemma~\ref{lem:app:regularity}]
		By symmetry it suffices to consider the case $i=1$. Fix distinct points $x_2,x_3\in\R^d$, and regard $\om{1}$ as a function of the variable $x_1\in\R^d$. Recall that
		\[
		v_{\alpha,n}(x) \ :=\ |x|^{-\alpha}\phi\left(\frac{x}{n}\right)\quad\text{ for }x\in\R^d,
		\]
		where $\phi\in C^{\infty}_c(\R^d)$ is a radial function satisfying $0\leq \phi\leq 1$ and $\phi\equiv 1$ for $|x|\leq 1$, and that  
		\[
		\om{1}(x_1) \ =\ |x_2-x_3|^{-\alpha}\,\valphan(x_1-x_2)\,\valphan(x_1-x_3).
		\]
		By Frank-Lieb-Seiringer~\cite[Proposition 4.1]{frank2008hardy}, one has
		\[
		\valphan\in H^s(\R^d) \quad\text{ for every }n\in\N,
		\]
		provided $0<\alpha<(d-2s)/2$. Since $H^s(\R^d)$ is translation invariant, also
		\[
		\valphan(\,\cdot-x_2),\, \valphan(\,\cdot-x_3)\in H^s(\R^d).
		\]
		It remains to justify that their product belongs to $H^s(\R^d)$. Set
		\[
		r\ :=\ \frac{1}{4}|x_2-x_3|, 
		\]
		and choose a partition of unity $\eta_2,\eta_3,\eta_{\infty}$. That is, $\eta_2,\eta_3\in C^{\infty}_c(\R^d)$ with $\eta_2$ supported near $x_2$, $\eta_3$ supported near $x_3$, and $\eta_{\infty} :=1-\eta_2-\eta_3$. Hence
		\[
		\eta_{\infty} \ +\ \eta_2\ +\ \eta_3\\ =\ 1,
		\]
		with
		\[
		\eta_2 \ \equiv 1\ \;\,\text{ on }B_r(x_2),\quad \eta_3\ \equiv\ 1 \;\,\text{ on }B_r(x_3),
		\]
		and
		\[
		\supp \eta_2 \ \subset\ B_{2r}(x_2),\qquad \supp \eta_3 \ \subset\ B_{2r}(x_3).
		\]
		Because $B_{2r}(x_2)\, \cap\, B_{2r}(x_3)=\emptyset$, the factor $\valphan(\,\cdot-x_3)$ is smooth up to $\supp \eta_2$, and $\valphan(\,\cdot-x_2)$ is smooth on $\supp \eta_3$. Hence,
		\[
		\eta_2\,\om{1} \ =\ m_2\,\valphan(\;\cdot-x_2),\qquad m_2(x_1) \:=\ |x_2-x_3|^{-\alpha}\,\eta_2(x_1)\,\valphan(x_1-x_3),
		\]
		and $m_2\in C^{\infty}_c(\R^d)$. Since multiplication by a $C^{\infty}_c$ function is bounded on $H^s(\R^d)$,
		\[
		\eta_2\,\om{1}\in H^s(\R^d).
		\]
		Applying the same reasoning gives
		\[
		\eta_3\,\om{1}\in H^s(\R^d).
		\]
		Finally, on $\supp\eta_{\infty}$ both factors $\valphan(\;\cdot-x_2)$ and $\valphan(\;\cdot-x_3)$ are smooth. Since each $\valphan$ is compactly supported, their product is smooth and compactly supported. Therefore,
		\[
		\eta_{\infty}\,\om{1}\in C^{\infty}_c(\R^d)\subset H^s(\R^d).
		\]
		Summing the three parts yields
		\[
		\om{1} \ = \ \eta_2\,\om{1}\ +\ \eta_3\,\om{1}\ +\ \eta_{\infty}\,\om{1}\in H^s(\R^d).
		\]
		This proves the lemma.
	\end{proof}
	\noindent
	
	\begin{lemma}\label{lem:app:LapIPP:2}
		Let $s\in(0,1)$ be arbitrary. For any $u\in C^{\infty}_c(\off{3})$ and $i,j\in\{1,2,3\}$ with $i\neq j$, there holds
		\begin{align}\label{eq:lem:app:LapIPP:2}
			\lim_{\alpha\uparrow\frac{d-2s}{2}}\,\lim_{n\to\infty}
			&\inn{\sLapj{i}\valphan(x_i-x_j)}{\frac{|u(x_1,x_2,x_3)|^2}{\valphan(x_i-x_j)}}\notag\\
			& \hspace{6em}\ =\  \CFH \intd\frac{|u(x_1,x_2,x_3)|^2}{|x_i-x_j|^{2s}}\,dx_i\,.
		\end{align}
	\end{lemma}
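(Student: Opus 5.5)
By translation invariance in $x_i$ we may take $x_j=0$ and write $v=\valphan$, $f:=|u(\cdot,\ldots)|^2$ viewed as a function of $x_i$ alone. By Step~2 of the proof of Theorem~\ref{thm:main:1}, $f\in C^\infty_c(\R^d\setminus\{0\})$ with $\supp f\subset B_R\setminus B_\delta$. For $n>R$ we also have $v=|x|^{-\alpha}$ on a neighbourhood of $\supp f$, so $f/v=|x|^{\alpha}f\in C^\infty_c(\R^d\setminus\{0\})$ and this test function no longer depends on $n$. The quantity to study is therefore
\[
\inn{\sLap v_{\alpha,n}}{|x|^{\alpha}f}
\]
with a fixed test function $g_\alpha:=|x|^\alpha f$.

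\textbf{Limit $n\to\infty$.} Since $v_{\alpha,n}\to|x|^{-\alpha}$ in $\calS'(\R^d)$ (monotone convergence against Schwartz functions, since $0\le\phi\le1$ and $\phi\equiv1$ on $B_1$), continuity of $\sLap$ on $\calS'$ does not directly apply because $g_\alpha\notin$ the image class. Instead I would use the variational/pointwise form. For $x\in\supp g_\alpha$ and $n>2R$,
\[
\sLap v_{\alpha,n}(x)=\sLap|\cdot|^{-\alpha}(x)+2a_{s,d}\int_{|y|\ge n}\frac{|y|^{-\alpha}(1-\phi(y/n))}{|x-y|^{d+2s}}\,dy .
\]
The correction is $O(n^{-\alpha-2s})$ uniformly on $B_R$, so it vanishes as $n\to\infty$. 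The identification $\sLap|x|^{-\alpha}=\kappa_{d,s}(\alpha)|x|^{-\alpha-2s}$ away from the origin follows from \eqref{fracLap-Rszpot-alpha}. The distributional identity and the P.V. formula agree on test functions supported away from $0$, because $|x|^{-\alpha}$ is smooth there and integrable against the kernel at infinity (as $\alpha+2s>0$). Hence
\[
\lim_{n\to\infty}\inn{\sLap v_{\alpha,n}}{\frac{f}{v_{\alpha,n}}}=\kappa_{d,s}(\alpha)\intd\frac{f(x)}{|x|^{2s}}\,dx .
\]
The $|x|^{-\alpha}$ and $|x|^{\alpha}$ factors cancel exactly here.

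\textbf{Limit $\alpha\uparrow\frac{d-2s}{2}$.} The constant $\kappa_{d,s}(\alpha)$ is an explicit ratio of Gamma functions, continuous in $\alpha$ on $(0,d-2s)$. At $\alpha=\frac{d-2s}2$ it equals $\CFH$ by \eqref{fracLap-Rszpot}. Translating back ($x=x_i-x_j$) gives \eqref{eq:lem:app:LapIPP:2}.

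The main obstacle is the first limit: justifying rigorously that the pairing $\inn{\sLap v_{\alpha,n}}{g}$, defined variationally via \eqref{eq:prelim:varDef}, equals $\int g\,\sLap v_{\alpha,n}$ computed pointwise, and that the tail estimate is uniform. To handle this, I would split the double integral in \eqref{eq:prelim:varDef} into $|x-y|<\delta/2$, where Taylor expansion is valid since $v$ is smooth near $\supp g$, and the complement, where everything is absolutely integrable. Dominated convergence then applies in $n$, using $|v_{\alpha,n}|\le|x|^{-\alpha}$.
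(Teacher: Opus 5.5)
Your proof is correct, but the key step, the limit $n\to\infty$, goes by a different route from the paper's. The first steps coincide: both arguments reduce to the $n$-independent test function $\psi_\alpha=|x|^{\alpha}f$.

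\textbf{The paper's route.} The paper never evaluates $\sLap\valphan$ pointwise. It moves the operator onto the smooth test function by duality, $\inn{\sLap\valphan}{\psi_\alpha}=\inn{\valphan}{\sLap\psi_\alpha}$. This is immediate from Plancherel, since $\valphan\in L^2$ and $\psi_\alpha\in C^\infty_c$. It then applies dominated convergence, using $0\le\valphan\le|y|^{-\alpha}$ and $|\sLap\psi_\alpha(y)|\lesssim(1+|y|)^{-d-2s}$. Finally it applies the $\calS'$ identity for $\sLap|x|^{-\alpha}$ directly to $\psi_\alpha$. So the worry in your aside (that $\sLap g_\alpha$ is not Schwartz) is harmless: its $|y|^{-d-2s}$ decay is enough for dominated convergence.

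\textbf{What the duality avoids.} This route bypasses what you flag as the main obstacle, namely identifying the variational pairing with a P.V.\ integral of a function singular at the origin. It also bypasses your separate pointwise identification of the P.V.\ of $|x|^{-\alpha}$ with $\kappa_{d,s}(\alpha)|x|^{-\alpha-2s}$ away from $0$. That identification is itself most naturally proved by exactly this duality plus a symmetrisation argument.

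\textbf{What your route buys.} It gives an explicit rate $O(n^{-\alpha-2s})$ for the tail. It is fully rigorous once the symmetrisation you sketch is carried out: truncate to $|x-y|>\epsilon$, swap variables, and let $\epsilon\downarrow0$ using smoothness of $\valphan$ near $\supp g_\alpha$.

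\textbf{The $\alpha$-limit.} Here your version is more careful than the paper's. You correctly use \eqref{fracLap-Rszpot-alpha}, the exact cancellation of the $|x|^{\pm\alpha}$ factors, and continuity of $\kappa_{d,s}(\alpha)$ up to $\alpha=\frac{d-2s}{2}$. The paper only quotes the endpoint identity \eqref{fracLap-Rszpot} at this stage.
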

	\begin{proof}[Proof of Lemma~\ref{lem:app:LapIPP:2}] 
		We argue only the case $i=1$ and $j=2$, since the remaining cases follow by relabelling. Fix $x_2\neq x_3$. Since $u\in C^{\infty}_c(\off{3})$, the function
		\[
		x_1\mapsto u(x_1,x_2,x_3)
		\]
		has compact support in $\R^d\setminus\{x_2\}$. Hence there exists $n_0\in\N$ such that for every $n\geq n_0$,
		\[
		\supp u(\,\cdot,x_2,x_3) \ \subset\ B_n(x_2).
		\]
		Since $\phi\equiv 1$ on $B_1(0)$, it follows that 		for any $n\geq n_0$ large,
		\[
		\valphan(x_1-x_2) \ =\ |x_1-x_2|^{-\alpha}\quad\text{ for }\;x_1\in\supp u(\,\cdot,x_2,x_3).
		\] 
		Set
		\[
		\psi_{\alpha}(x_1)\ :=\ |u(x_1,x_2,x_3)|^2\,|x_1-x_2|^{\alpha}\in C^{\infty}_c(\R^d),
		\]
		and observe that, for $n\geq n_0$, the quotient $|u(\cdot, x_2,x_3)|^2/\valphan(\,\cdot-x_2)$ is represented by $\psi_{\alpha}$. In particular,
		\[
		\inn{\sLapj{1}\valphan(\,\cdot-x_2)}{\frac{|u(\cdot, x_2,x_3)|^2}{\valphan(\,\cdot-x_2)}} \ =\ \inn{\sLapj{1}\valphan(\,\cdot-x_2)}{\psi_{\alpha}}
		\]
		for $n\geq n_0$. We now argue how to pass to the limit $n\to\infty$. First, by duality
		\[
		\inn{\sLapj{1}\valphan(\,\cdot-x_2)}{\psi_{\alpha}} \ =\ \inn{\valphan(\,\cdot-x_2)}{\sLapj{1}\psi_{\alpha}}.
		\]
		Now, since $\psi_{\alpha}\in C^{\infty}_c(\R^d)$, the function $\sLapj{1}\psi_{\alpha}$ is smooth and decays as follows
		\[
		\left|\sLapj{1}\psi_{\alpha}(y)\right|\ \leq\ C(1+|y|)^{-d-2s}.
		\]
		Moreover, observe that
		\[
		0 \ \leq\ \valphan(y-x_2) \ \leq\ |y-x_2|^{-\alpha}\;\;\text{ and }\;\; \valphan(y-x_2)\xrightarrow[n\to\infty]{}|y-x_2|^{-\alpha},\;\quad\text{for a.e. }y\in\R^d.
		\]
		Finally, the function $|y-x_2|^{-\alpha}|\sLapj{1}\psi_{\alpha}(y)|\in L^1(\R^d)$ is integrable: near $y=x_2$ this follows from $\alpha<d$, and integrability at infinity follows from the decay $|y|^{-d-2s-\alpha}$.
		The dominated convergence theorem then yields
		\[
		\lim_{n\to\infty}\inn{\valphan(\,\cdot-x_2)}{\sLapj{1}\psi_{\alpha}} \ =\ \inn{\;|\,\cdot-x_2|^{-\alpha}}{\sLapj{1}\psi_{\alpha}}.
		\]
		That is to say,
		\[
		\lim_{n\to\infty}\inn{\sLapj{1}\valphan(\,\cdot-x_2)}{\psi_{\alpha}} \ =\ \inn{\sLapj{1}|\,\cdot-x_2|^{-\alpha}}{\psi_{\alpha}}.
		\]
		Lastly, the power-law identity~\eqref{fracLap-Rszpot} gives
		\[
		\lim_{\alpha\uparrow\frac{d-2s}{2}}\inn{\sLapj{1}|\,\cdot-x_2|^{-\alpha}}{\psi_{\alpha}}\ =\ \CFH\,\intd\frac{|u(x_1,x_2,x_3)|^2}{|x_1-x_2|^{2s}}\,dx_1,
		\]
		and the proof is now complete.
	\end{proof}	    
	\begin{lemma}\label{lem:limit:interaction} 
		Let $s\in(0,1)$ and $d\geq 4-2s$. Given  distinct points $x_i,x_j,x_k\in\R^{d}$, one has
		\begingroup
		\allowdisplaybreaks
		\begin{align}\label{cross-term-limit}
			&\lim_{\alpha\,\uparrow\frac{d-2s}{2}}\,\lim_{n\to\infty} \frac{\Nn(\valphan(\,\cdot-x_j), \valphan(\,\cdot-x_k))(x_i)}{\valphan(x_i-x_j)\,\valphan(x_i-x_k)} \notag\\
			&\ =\  \CFH \left(\frac{1}{|x_i-x_j|^{2s}}+\frac{1}{|x_i-x_k|^{2s}}+2\,\frac{(x_k-x_i)}{|x_k-x_i|^{2s}}\cdot\frac{(x_j-x_i)}{|x_j-x_i|^{2s}}\,\frac{1}{|x_j-x_k|^{2-2s}}\right.\notag \\[.5em]
			&\hspace{7em}\left. - \left[\frac{|x_i-x_k|^{2-2s}}{|x_i-x_j|^{2s}|x_j-x_k|^{2-2s}}+\frac{|x_i-x_j|^{2-2s}}{|x_i-x_k|^{2s}|x_j-x_k|^{2-2s}}\right]\right) .
		\end{align} 
		\endgroup
	\end{lemma}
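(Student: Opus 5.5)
The plan is to compute the interaction term via the algebraic form of the Leibniz rule and to evaluate the resulting three-point integrals with the Grafakos--Morpurgo formula~\eqref{eq:prelim:grafakos:formula} and Theorem~\ref{thm:main:2}.

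\emph{Removing the truncation.} Fix distinct $x_i,x_j,x_k$ and first let $n\to\infty$. For $y$ in a bounded region and $n$ large, $\valphan(y-x_j)=|y-x_j|^{-\alpha}$, and likewise for $x_k$. The integrand of $\Nn$ at the point $x_i$ is controlled as follows:
\begin{itemize}
\item near $y=x_i$, by $C|x_i-y|^{2-d-2s}$, since both factors are smooth there;
\item near $y=x_j$ and $y=x_k$, by integrable singularities, since $\alpha<d$;
\item at infinity, by $C|y|^{-d-2s}$, since both differences are bounded.
\end{itemize}
Dominated convergence then replaces $\valphan$ by the pure powers $u(y):=|y-x_j|^{-\alpha}$ and $v(y):=|y-x_k|^{-\alpha}$. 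Next I would use Proposition~\ref{prop:LeibnitzRule} pointwise (justified by localisation near $x_i$, where $u,v$ are smooth) in the form
\[
\Nn(u,v)(x_i)=v\,\sLap u(x_i)+u\,\sLap v(x_i)-\sLap(uv)(x_i).
\]
By~\eqref{fracLap-Rszpot-alpha}, after dividing by $u(x_i)v(x_i)$ the first two terms give $\kappa_{d,s}(\alpha)\bigl(|x_i-x_j|^{-2s}+|x_i-x_k|^{-2s}\bigr)$. As $\alpha\uparrow\frac{d-2s}{2}$ this tends to $\CFH$ times the first two terms of~\eqref{cross-term-limit}.

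\emph{Factorising the fractional Laplacian of the product.} The core is $\sLap(uv)(x_i)$. I would write $\sLap=(2\pi)^{2s-2}(-\Delta)\circ I_{2-2s}$, up to the normalisation in~\eqref{FT-Riesztr}, and move the Laplacian inside, i.e. $\sLap(uv)=c\,I_{2-2s}\bigl(-\Delta(uv)\bigr)$. Expanding the local Laplacian of the product with~\eqref{lap-Riesz} gives
\[
-\Delta(uv)=\alpha(d-\alpha-2)\bigl(|\cdot-x_j|^{-\alpha-2}v+|\cdot-x_k|^{-\alpha-2}u\bigr)-2\alpha^2\frac{(\cdot-x_j)\cdot(\cdot-x_k)}{|\cdot-x_j|^{\alpha+2}|\cdot-x_k|^{\alpha+2}}.
\]
Convolving against the Riesz kernel $|x_i-t|^{-(d-2+2s)}$ at the critical value $\alpha=\frac{d-2s}{2}$ produces three-point integrals with exponent sums exactly $(d-2+2s)+(\alpha+2)+\alpha=2d$.
\begin{itemize}
\item The two radial terms are evaluated by~\eqref{eq:prelim:grafakos:formula}. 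Each gives a multiple of $|x_i-x_k|^{2-2s}|x_i-x_j|^{-2s}|x_j-x_k|^{-(2-2s)}$ after division by $uv$.
\item The angular term is exactly the integral of Theorem~\ref{thm:main:2}, with $d_1=d-2+2s$ and $d_2=d_3=\alpha+1=\frac{d-2s}{2}+1$. It yields $\frac{(x_k-x_i)}{|x_k-x_i|^{(d+2s)/2}}\cdot\frac{(x_j-x_i)}{|x_j-x_i|^{(d+2s)/2}}|x_j-x_k|^{-(2-2s)}$. Dividing by $uv$ turns the powers into $|\cdot|^{2s}$, which is the cross term in~\eqref{cross-term-limit}.
\end{itemize}
The admissibility conditions $0<d_1,d_2,d_3<d$ are the source of the dimensional constraint, since $d_2<d$ reads $d>4-2s$. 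What remains is bookkeeping: using the Gamma-function identities for $\gamma(\cdot)$, $\CGM$ and $\CMOZ$, I would check that all constants collapse to $\CFH$ with the coefficients $2$ and $-1$ displayed in~\eqref{cross-term-limit}.

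\emph{Main obstacle.} The delicate step is justifying $\sLap(uv)=c\,I_{2-2s}(-\Delta(uv))$ for functions singular at $x_j,x_k$ that are not in $\calS$. I would excise balls $B_\varepsilon(x_j)$ and $B_\varepsilon(x_k)$ and apply Green's identity to $\int|x_i-t|^{-(d-2+2s)}\Delta(uv)\,dt$. The resulting boundary terms are of size
\[
\varepsilon^{d-1}\,\varepsilon^{-\alpha-1}=\varepsilon^{d-2-\alpha}=\varepsilon^{(d+2s-4)/2},
\]
which vanish for $d>4-2s$ and stay finite and explicitly computable in the borderline case $d=4-2s$. A second point is the order of limits: the Selberg-type identities hold only at the critical exponent, so I would do the computation for $\alpha<\frac{d-2s}{2}$ at the level of the excised integrals. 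Then I would pass $\alpha\uparrow\frac{d-2s}{2}$ inside by dominated convergence, which needs uniform integrability near $x_j,x_k$ and at infinity (the decay at infinity is $|t|^{-2d+(\frac{d-2s}{2}-\alpha)\cdot 2}$), and only afterwards invoke~\eqref{eq:prelim:grafakos:formula} and Theorem~\ref{thm:main:2}.
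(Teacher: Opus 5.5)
Your plan is essentially the paper's proof. It uses the same reduction $|x_i-y|^{-d-2s}=\frac{1}{2s(d+2s-2)}\Delta_y|x_i-y|^{-(d+2s-2)}$ (equivalently $\sLap=I_{2-2s}\circ(-\Delta)$, with constant exactly $1$ in the paper's normalisation), Green's identity on punctured domains with boundary terms of order $\varepsilon^{(d+2s-4)/2}$, and evaluation of the bulk by \eqref{eq:prelim:grafakos:formula} and Theorem~\ref{thm:main:2}. The only reorganisation is that you peel off the two one-body terms via the Leibniz rule and \eqref{fracLap-Rszpot-alpha}, rather than applying the Green reduction to the full product $g$ and reading them off from Stein's formula \eqref{eq:prelim:stein:formula}.

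Two corrections. First, $d_2=\frac{d-2s}{2}+1<d$ holds for all $d\ge 2$, so the threshold actually enters through the Grafakos--Morpurgo exponent $\frac{d-2s}{2}+2<d$ (local integrability of $\Delta|t-x_j|^{-(d-2s)/2}$) and through the boundary terms. Second, for exactly this reason, dominated convergence in $\alpha$ near $x_j,x_k$ fails in the borderline case $(d,s)=(3,\tfrac12)$. There you should pass $\alpha\uparrow\frac{d-2s}{2}$ in the original nonlocal integral first, as the paper does, and then compute the nonvanishing boundary terms at $x_j,x_k$, which reproduce the two bracketed terms of \eqref{cross-term-limit}.
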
    
	
	\begin{proof}[Proof of Lemma~\ref{lem:limit:interaction}]
		We treat only the case $i=1$, $j=2$ and $k=3$ since the remaining cases follow by relabelling. It suffices to identify the limiting interaction term generated by the fractional Leibniz rule. The proof is divided into five steps.
		
		\medskip 
		\noindent\textbf{Step 1: Reduction to the model integral and passing to the limit.}\\
		For $0<\alpha\leq (d-2s)/2$ and $n\in\N$, define
		\[
		F_{\alpha,n}(y) \ :=\ \frac{\bigl(\valphan(x_1-x_2)-\valphan(y-x_2)\bigr)\bigl(\valphan(x_1-x_3)-\valphan(y-x_3)\bigr)}{\valphan(x_1-x_2)\,\valphan(x_1-x_3)\,|x_1-y|^{d+2s}}
		\]
		in such a way that
		\[
		\frac{\Nn\bigl(\valphan(\;\cdot-x_2),\valphan(\;\cdot-x_3)\bigr)(x_1)}{\valphan(x_1-x_2)\,\valphan(x_1-x_3)}
		\ =\ 2a_{s,d}\intd F_{\alpha,n}(y)\,dy.
		\]
		Choose $n_0\in\N$ so large that
		\[
		|x_1-x_2|<n\quad\text{ and }\quad |x_1-x_3|<n
		\]
		for the fixed $x_1$ and $n\geq n_0$. Now recall the definition $\valphan$ in formula~\eqref{eq:proof:main:1:GSR:1}. Since $\phi\equiv 1$ on $B_1(0)$, one has
		\[
		\valphan(x_1-x_2) \ =\ |x_1-x_2|^{-\alpha}\quad\text{ and }\quad \valphan(x_1-x_3) \ =\ |x_1-x_3|^{-\alpha}\,
		\]
		for all $x_1 \in \supp u(\cdot,x_2,x_3)$ and every $n\geq n_0$. Hence,
		\[
		F_{\alpha,n}(y) \ =\ \left(1-\frac{|x_1-x_2|^{\alpha}}{|y-x_2|^{\alpha}}\,\phi\left(\frac{y-x_2}{n}\right)\right)\left(1-\frac{|x_1-x_3|^{\alpha}}{|y-x_3|^{\alpha}}\,\phi\left(\frac{y-x_3}{n}\right)\right)\,|x_1-y|^{-d-2s}.
		\]
		Thus, for each fixed $y\notin\{x_1,x_2,x_3\}$, one has the convergence
		\begin{equation}\label{convFalphan}
			F_{\alpha,n}(y) \longrightarrow \left(1-\frac{|x_1-x_2|^{\frac{d-2s}{2}}}{|y-x_2|^{\frac{d-2s}{2}}}\right)\left(1-\frac{|x_1-x_3|^{\frac{d-2s}{2}}}{|y-x_3|^{\frac{d-2s}{2}}}\right)\,|x_1-y|^{-d-2s}
		\end{equation}
		by taking the limit first as $n\to\infty$ and then as $\alpha\uparrow(d-2s)/2$. Now, set
		\[
		r \ :=\ \frac{1}{4}\min\{|x_1-x_2|,|x_1-x_3|,|x_2-x_3|\}. 
		\]
		We study the behaviour of $F_{\alpha,n}$ separately on $B_r(x_1)$, $B_r(x_2)$, $B_r(x_3)$, and at infinity. Near $x_1$, both factors in parenthesis vanish at $y=x_1$. Since the functions
		\[
		y\mapsto|y-x_2|^{-\alpha}\quad\text{ and }\quad y\mapsto|y-x_3|^{-\alpha}
		\]
		are smooth on $B_r(x_1)$, the mean-value theorem gives
		\[
		\left|1-\frac{|x_1-x_2|^{\alpha}}{|y-x_2|^{\alpha}}\,\phi\left(\frac{y-x_2}{n}\right)\right|+\left|1-\frac{|x_1-x_3|^{\alpha}}{|y-x_3|^{\alpha}}\,\phi\left(\frac{y-x_3}{n}\right)\right|
		\ \leq\ C|y-x_1|.
		\]
		Therefore,
		\[
		|F_{\alpha,n}(y)|
		\ \leq\ C|y-x_1|^{2-d-2s}\quad\text{ for }\;y\in B_r(x_1).
		\]
		This bound is locally integrable near $x_1$ since $s<1$. Near $x_2$, the second factor is bounded on $B_r(x_2)$, while
		\[
		\left|1-\frac{|x_1-x_2|^{\alpha}}{|y-x_2|^{\alpha}}\,\phi\left(\frac{y-x_2}{n}\right)\right| 
		\ \leq\ C\Bigl(1+|y-x_2|^{-\frac{d-2s}{2}}\Bigr).
		\]
		Moreover, the lower bound $|x_1-y|\geq 3r$ on $B_r(x_2)$ implies that
		\[
		|F_{\alpha,n}(y)| \ \leq\ C\Bigl(1+|y-x_2|^{-\frac{d-2s}{2}}\Bigr)\quad\text{ for }\; y\in B_r(x_2),
		\]
		which is integrable because $(d-2s)/2<d$. The same argument applies near $x_3$.
		
		\noindent
		Finally, the behaviour of $F_{\alpha,n}(y)$ at infinity is like $|y|^{-d-2s}$ 
		which is integrable as $s>0$. Thus, the family $\{F_{\alpha,n}: \alpha < (d-2s)/2\,,\, n \in \mathbb{N}\}$ is dominated by an $L^1(\R^d)$-function. Hence, the dominated convergence theorem gives
		\begin{equation}\label{interaction-term}
			\lim_{\alpha\uparrow\frac{d-2s}{2}}\lim_{n\to\infty}\frac{\Nn(\valphan(\;\cdot-x_2),\valphan(\;\cdot-x_3))(x_1)}{\valphan(x_1-x_2)\,\valphan(x_1-x_3)} \ =\ 2a_{s,d}\, T(x_1,x_2,x_3),
		\end{equation}
		with the right-hand side defined as
		\[
		T(x_1,x_2,x_3) \ :=\ \intd \left(1-\frac{|x_1-x_2|^{\frac{d-2s}{2}}}{|y-x_2|^{\frac{d-2s}{2}}}\right)\left(1-\frac{|x_1-x_3|^{\frac{d-2s}{2}}}{|y-x_3|^{\frac{d-2s}{2}}}\right)\,|x_1-y|^{-d-2s}\,dy.
		\]
		\smallskip
		
		\noindent\textbf{Step 2: Reduction by integration by parts.}\\
		We now need to compute $T(x_1,x_2,x_3)$. For this, set
		\[
		g(y) \ :=\ \left(1-\frac{|x_1-x_2|^{\frac{d-2s}{2}}}{|y-x_2|^{\frac{d-2s}{2}}}\right)\left(1-\frac{|x_1-x_3|^{\frac{d-2s}{2}}}{|y-x_3|^{\frac{d-2s}{2}}}\right),
		\]
		and
		\[
		f(y)\ := \ |y-x_1|^{-(d+2s-2)}.
		\]
		Due to identity~\eqref{lap-Riesz}, one has
		\[
		|y-x_1|^{-d-2s} \ =\ \frac{1}{2s(d+2s-2)}\,\Delta_yf(y),
		\]
		and therefore the integral can be rewritten as
		\begin{equation}\label{T}
			T(x_1,x_2,x_3) \ =\ \frac{1}{2s(d+2s-2)}\intd g(y)\,\Delta_yf(y)\,dy.
		\end{equation}
		Since Step 1 ensures that the above integral is finite, we may localise it on punctured domains and then pass to the limit. More precisely, 
		\[
		\intd g(y)\,\Delta_yf(y)\,dy \ =\ \lim_{\epsilon\downarrow0}\lim_{R\to\infty}\int_{B_{R,\epsilon}}g(y)\,\Delta_yf(y)\,dy,
		\]
		where we denote
		\[
		B_{R,\epsilon} \ :=\ B_R(0)\setminus (B_{\epsilon}(x_1)\cup B_{\epsilon}(x_2)\cup B_{\epsilon}(x_3)).
		\]
		Finally, applying Green's identity to the integral on $B_{R,\epsilon}$ gives
		\begingroup
		\begin{align}\label{eq:Green:full}
			\int_{B_{R,\epsilon}}g\,\Delta f \,dy & \ =\
			\int_{B_{R,\epsilon}}f\,\Delta g\,dy + \int_{\partial B_{R}(0)}(g\,\partial_{\nu}f-f\,\partial_{\nu}g)\,dS \\
			&\hspace{3em} - \sum^3_{i=1}\int_{\partial B_{\epsilon}(x_i)}(g\,\partial_{\nu}f-f\,\partial_{\nu}g)\,dS.\notag
		\end{align}
		\endgroup
		
		\smallskip
		\noindent\textbf{Step 3: Analysis of the boundary terms.}\\
		We next study the boundary contributions in the limit $R\to\infty$ and $\epsilon\downarrow 0$. For large $|y|$, one has
		\[
		g(y) \ =\ O(1)\quad\text{ and }\quad f(y) \ =\ O(|y|^{-d-2s+2}),
		\]
		together with
		\[
		|\nabla g(y)| \ =\ O\Bigl(|y|^{-\frac{d-2s}{2}-1}\Bigr)\quad\text{ and }\qquad |\nabla f(y)| \ =\ O\bigl(|y|^{-d-2s+1}\bigr).
		\]
		Hence, the dominant term in $g\,\partial_{\nu}f-f\,\partial_{\nu}g$ behaves like $O(|y|^{-d-2s+1})$, and therefore
		\[
		\int_{\partial B_R(0)}(g\,\partial_{\nu}f-f\,\partial_{\nu}g)\, dS\ =\ O(R^{-2s})\ =\ o(1)
		\quad \text{ as }\,R\to\infty.
		\]
		As $y\to x_1$, the cancellations in $g$ give
		\[
		g(y)\ =\ O\bigl(|y-x_1|^2\bigr),\qquad |\nabla g(y)|\ =\ O\bigl(|y-x_1|\bigr),
		\]
		whereas $f$ and $\nabla f$ have the same homogeneity as above with $|y|$ replaced by $|y-x_1|$.  It follows that
		\[
		g\,\partial_{\nu}f-f\,\partial_{\nu}g
		\ =\ O\Bigl(|y-x_1|^{-d-2s+3}\Bigr).
		\]
		Therefore, since $s<1$, one has
		\[
		\int_{\partial B_{\epsilon}(x_1)}(g\,\partial_{\nu}f-f\,\partial_{\nu}g)\,dS \ =\
		O\Bigl(\epsilon^{2-2s}\Bigr) \ =\ o(1)\quad\text{ as }\,\epsilon\downarrow 0.
		\]	
		We now study the boundary integral on $\partial B_{\epsilon}(x_2)$. The analysis on $\partial B_{\epsilon}(x_3)$ is analogous. Let us denote $m:=(d-2s)/2$ and let us write $y \in  \partial B_{\epsilon}(x_2)$ as $y=x_2+\epsilon\,\theta$ with $\theta\in\S^{d-1}$.
		Then,
		\[
		f(y) \ =\ |x_1-x_2|^{-(d+2s-2)}+O(\epsilon),\qquad \partial_{\nu}f(y)=O(1),
		\]
		and
		\[
		g(y) \ =\ \left(1-\frac{|x_1-x_2|^m}{\epsilon^m}\right)\left(1-\frac{|x_1-x_3|^m}{|x_2-x_3|^m}+O(\epsilon)\right) 
		\ =\ O(\epsilon^{-m}). 
		\]
		A simple calculation of $\nabla g$, followed by the evaluation of $\partial_{\nu}g(y)$ on $\partial B_{\epsilon}(x_2)$, shows that
		\[
		\partial_{\nu}g(y) \ =\ m\left(1-\frac{|x_1-x_3|^m}{|x_2-x_3|^m}\right)\,|x_1-x_2|^m\,\epsilon^{-m-1}+O(\epsilon^{-m}).    
		\]
		Since the surface area of $\partial B_{\epsilon}(x_2)$ is $|\S^{d-1}|\epsilon^{d-1}$, these estimates give
		\begin{eqnarray}\nonumber
			\lefteqn{\int_{\partial B_{\epsilon}(x_2)}\left(g\,\partial_{\nu}f-f\,\partial_{\nu}g\right)\, dS}\\
			&& \ =\ -|\S^{d-1}|\left(\frac{d-2s}{2}\right)\left(1-\frac{|x_1-x_3|^{\frac{d-2s}{2}}}{|x_2-x_3|^{\frac{d-2s}{2}}}\right)|x_1-x_2|^{2-\frac{d+6s}{2}}\,\epsilon^{\frac{d+2s-4}{2}} +O\Bigl(\epsilon^{\frac{d}{2}+s-1}\Bigr).\label{eq:Green:expansion:1} 
		\end{eqnarray}
		Arguing analogously near $x_3$ yields
		\begin{eqnarray}\nonumber
			\lefteqn{\int_{\partial B_{\epsilon}(x_3)}\left(g\,\partial_{\nu}f-f\,\partial_{\nu}g\right)\, dS}\\
			&& \ =\ -|\S^{d-1}|\left(\frac{d-2s}{2}\right)\left(1-\frac{|x_1-x_2|^{\frac{d-2s}{2}}}{|x_2-x_3|^{\frac{d-2s}{2}}}\right)|x_1-x_3|^{2-\frac{d+6s}{2}}\,\epsilon^{\frac{d+2s-4}{2}}+O\Bigl(\epsilon^{\frac{d}{2}+s-1}\Bigr). \label{eq:Green:expansion:2} 
		\end{eqnarray}
		In particular, both boundary terms vanish as $\epsilon\downarrow 0$, provided $d+2s>4$. In the borderline case, $d+2s=4$, they converge to finite non-zero limits and can be read off explicitly from~\eqref{eq:Green:expansion:1} and~\eqref{eq:Green:expansion:2}.\medskip
		
		\noindent\textbf{Step 4: Evaluation of the bulk term.}\\
		It remains to compute the bulk contribution $\int f\Delta g$. A direct calculation gives
		\begingroup
		\allowdisplaybreaks
		\begin{align*}\label{integrand-vol-term}
			f \Delta g &\ =\  \frac{(d-2s)(d+2s-4)}{4}\left[|x_1-x_2|^{\frac{d-2s}{2}} |y-x_2|^{-\frac{d-2s}{2}-2}|y-x_1|^{-d-2s+2} \right. \notag\\[0.25em]
			& \left.\;\; -|x_1-x_3|^{\frac{d-2s}{2}}|x_1-x_2|^{\frac{d-2s}{2}} |y-x_2|^{-\frac{d-2s}{2}-2}|y-x_3|^{-\frac{d-2s}{2}}|y-x_1|^{-d-2s+2}\right] \notag\\[0.25em]
			&+\frac{(d-2s)(d+2s-4)}{4}\left[|x_1-x_3|^{\frac{d-2s}{2}} |y-x_3|^{-\frac{d-2s}{2}-2}|y-x_1|^{-d-2s+2} \right. \notag \\[0.25em]
			&\left.\;\; -|x_1-x_2|^{\frac{d-2s}{2}} |x_1-x_3|^{\frac{d-2s}{2}} |y-x_3|^{-\frac{d-2s}{2}-2}|y-x_2|^{-\frac{d-2s}{2}}|y-x_1|^{-d-2s+2} \right] \notag \\[0.25em]
			&+2\left(\frac{d-2s}{2}\right)^2|x_1-x_3|^{\frac{d-2s}{2}}|x_1-x_2|^{\frac{d-2s}{2}}\frac{(y-x_2)}{|y-x_2|^{\frac{d-2s}{2}+2}}\cdot\frac{(y-x_3)}{|y-x_3|^{\frac{d-2s}{2}+2}}|y-x_1|^{-d-2s+2}.
		\end{align*}
		\endgroup
		We now evaluate the resulting integrals using:
		\begin{enumerate}[label=$\bullet$, itemsep=0.5em, leftmargin=*]
			\item the two-point beta integral~\eqref{beta-integral} with
			\[
			\alpha \ =\ \frac{d-2s}{2}+2,\qquad \beta \ =\ d+2s-2
			\]
			\item the three-fold identity~\eqref{eq:prelim:grafakos:formula} with
			\[
			\alpha_2 \ =\ \frac{d-2s}{2}+2,\qquad \alpha_3\ =\frac{d-2s}{2},\qquad \alpha_1 \ =\ d+2s-2
			\]
			\item and Theorem~\ref{thm:main:2} with
			\[
			d_2 \ =\ \frac{d-2s}{2}+1,\qquad d_3 \ =\ \frac{d-2s}{2}+1,\qquad d_1\ =\ d+2s-2\,,
			\]
		\end{enumerate}
		to obtain
		\begingroup
		\allowdisplaybreaks
		\begin{align}
			\intd f \Delta g\, dy 
			&\ = \ \frac{(d-2s)(d+2s-4)}{4} \left[|x_1-x_2|^{\frac{d-2s}{2}} \mathcal{B}(x_2,x_1)+|x_1-x_3|^{\frac{d-2s}{2}} \mathcal{B}(x_3,x_1)\right. \nonumber \\
			&\hspace{3em}\left. -|x_1-x_2|^{\frac{d-2s}{2}}|x_1-x_3|^{\frac{d-2s}{2}}\left(\GM(x_2,x_3,x_1) + \GM(x_3,x_2,x_1)\right) \right]  \nonumber \\
			&\hspace{3em}+\frac{(d-2s)^2}{2}|x_1-x_2|^{\frac{d-2s}{2}}|x_1-x_3|^{\frac{d-2s}{2}}\, K(x_2,x_3,x_1). \label{bulk-integral} 
		\end{align}
		\endgroup
		\medskip
		
		\noindent\textbf{Step 5: Evaluation of the right-hand side of~\eqref{interaction-term}}.\\
		In the regime $d+2s>4$, the boundary terms vanish and only the bulk term contributes. 		
		By making use of the following calculations
		\begingroup
		\allowdisplaybreaks
		\begin{align*}
			&(a.1) \qquad |x_i-x_j|^{\frac{d-2s}{2}} |x_i-x_j|^{d-\alpha-\beta} \ =\   |x_i-x_j|^{-2s}\,.\\
			&(a.2) \qquad|x_1-x_2|^{\frac{d-2s}{2}}|x_1-x_3|^{\frac{d-2s}{2}} |x_2-x_3|^{\alpha_1-d} |x_3-x_1|^{\alpha_2-d}|x_1-x_2|^{\alpha_3-d} \\
			&\hspace{16em} \ =\ \dfrac{|x_1-x_3|^{2-2s}}{|x_1-x_2|^{2s}|x_2-x_3|^{2-2s}}\,.\\[0.5em]
			&(a.3) \qquad|x_1-x_3|^{\frac{d-2s}{2}}|x_1-x_2|^{\frac{d-2s}{2}} |x_3-x_2|^{\alpha_1-d} |x_2-x_1|^{\alpha_2-d}|x_3-x_1|^{\alpha_3-d}\\ 
			&\hspace{16em}\ =\   \dfrac{|x_1-x_2|^{2-2s}}{|x_1-x_3|^{2s}|x_2-x_3|^{2-2s}}\,.\\[0.5em]
			&(a.4) \qquad |x_1-x_2|^{\frac{d-2s}{2}}|x_1-x_3|^{\frac{d-2s}{2}} \frac{(x_2-x_1)}{|x_2-x_1|^{d-d_3+1}}\cdot \frac{(x_3-x_1)}{|x_3-x_1|^{d-d_2+1}}\frac{1}{|x_2-x_3|^{d-d_1}}\\
			& \hspace{16em} \ =\  \frac{(x_2-x_1)}{|x_2-x_1|^{2s}}\cdot \frac{(x_3-x_1)}{|x_3-x_1|^{2s}}\frac{1}{|x_2-x_3|^{2-2s}}\,.
		\end{align*}	
		\endgroup	
		and
		\begingroup
		\allowdisplaybreaks
		\begin{align*}
			& (b.1) \qquad  \dfrac{2 a_{s,d}}{2s(d+2s-2)} \dfrac{(d-2s)(d+2s-4)}{4}\, C_{\rm Stein}(\alpha,\beta,d) \ =\  \CFH\,.\\[0.5em]
			& (b.2) \qquad \dfrac{2 a_{s,d}}{2s(d+2s-2)} \dfrac{(d-2s)(d+2s-4)}{4}\, C_{\rm GM}(\alpha_1,\alpha_2,\alpha_3,d) \ =\  \CFH\,.\\[0.5em]
			& (b.3) \qquad  \dfrac{2 a_{s,d}}{2s(d+2s-2)} \dfrac{(d-2s)^2}{2}\, C_{\rm MOZ}(d_1,d_2,d_3,d)\ =\  2\,\CFH\,.
		\end{align*}	
		\endgroup	
		we conclude that
		\begingroup
		\begin{align*}        
			&\lim_{\alpha\uparrow\frac{d-2s}{2}}\lim_{n\to\infty}\frac{\Nn(\valphan(\,\cdot-x_2),\valphan(\,\cdot-x_3))(x_1)}{\valphan(x_1-x_2)\,\valphan(x_1-x_3)}\\[0.25em]
			&\ =\ \CFH\,\left[\frac{1}{|x_1-x_2|^{2s}}+\frac{1}{|x_1-x_3|^{2s}}+2\,\frac{(x_3-x_1)}{|x_3-x_1|^{2s}}\cdot\frac{(x_2-x_1)}{|x_2-x_1|^{2s}}\frac{1}{|x_2-x_3|^{2-2s}}\right.\\[0.25em]
			&\left. \hspace{5em}-\left(\frac{|x_1-x_3|^{2-2s}}{|x_1-x_2|^{2s}|x_2-x_3|^{2-2s}}+\frac{|x_1-x_2|^{2-2s}}{|x_1-x_3|^{2s}|x_2-x_3|^{2-2s}}\right)\right].
		\end{align*}
		\endgroup
		It remains to consider the borderline case $d+2s=4$. Since $d\in\N$ and $0<s<1$, this necessarily means 
		\[
		(d,s) \ =\ \Bigl(3,\frac{1}{2}\Bigr).
		\]
		In particular,
		\[
		\frac{d-2s}{2} \ =\ 1,\qquad 2s(d+2s-2) \ =\ 2,\qquad |\S^{d-1}|\ =\ |\S^2|\ =\ 4\pi, \qquad  a_{\frac{1}{2},3} \ =\ \frac{1}{2\pi^2}\,.
		\]
		Moreover, the factor $d+2s-4$ vanishes, and hence the bulk contribution reduces to the last term in~\eqref{bulk-integral} which, in view of $(b.3)$, is nothing but
		\[
		2 C_{fH}\Bigl(3,\frac{1}{2}\Bigr)\;\frac{(x_3-x_1)}{|x_3-x_1|}\cdot\frac{(x_2-x_1)}{|x_2-x_1|}\,\frac{1}{|x_2-x_3|}.
		\]
		On the other hand, for this special case $(d,s)=(3,\tfrac{1}{2})$, we calculate the boundary expansions~\eqref{eq:Green:expansion:1} and~\eqref{eq:Green:expansion:2} multiplied by the overall prefactor $\dfrac{2 a_{s,d}}{2s(d+2s-2)}= \dfrac{1}{2 \pi^2}$, to obtain
		\begingroup
		\allowdisplaybreaks
		\begin{align*}       
			& -\frac{1}{2 \pi^2}\lim_{\epsilon\downarrow 0}\left(
			\int_{\partial B_{\epsilon}(x_2)}(g\,\partial_{\nu}f-f\,\partial_{\nu}g)\,dS+\int_{\partial B_{\epsilon}(x_3)}(g\,\partial_{\nu}f-f\,\partial_{\nu}g)\,dS\right)\\
			& \ =\ \dfrac{2}{\pi}\left(\frac{1}{|x_1-x_2|}+\frac{1}{|x_1-x_3|}-\frac{|x_1-x_3|}{|x_1-x_2||x_2-x_3|}-\frac{|x_1-x_2|}{|x_1-x_3||x_2-x_3|}\right).
		\end{align*}
		\endgroup
		Now, observing that $C_{fH}\Bigl(3,\frac{1}{2}\Bigr) \ =\ \dfrac{2}{\pi}$, it follows that the total boundary contribution is
		\[
		C_{fH}\Bigl(3,\frac{1}{2}\Bigr) \left(\frac{1}{|x_1-x_2|}+\frac{1}{|x_1-x_3|}-\frac{|x_1-x_3|}{|x_1-x_2||x_2-x_3|}-\frac{|x_1-x_2|}{|x_1-x_3||x_2-x_3|}\right).
		\]
		Adding this term to the angular bulk term yields precisely
		\begingroup
		\allowdisplaybreaks
		\begin{align*}
			&C_{fH}\Bigl(3,\frac{1}{2}\Bigr) \left(\frac{1}{|x_1-x_2|}+\frac{1}{|x_1-x_3|}+2\,\frac{(x_3-x_1)}{|x_3-x_1|}\cdot\frac{(x_2-x_1)}{|x_2-x_1|}\,\frac{1}{|x_2-x_3|}\right.\\[0.25em]
			&\hspace{5em}-\left.\left[\frac{|x_1-x_3|}{|x_1-x_2||x_2-x_3|}+\frac{|x_1-x_2|}{|x_1-x_3||x_2-x_3|}\right]\right).
		\end{align*}
		\endgroup
		This is exactly the specialisation of~\eqref{cross-term-limit} to $s=1/2$. Therefore, the same closed formula holds in the borderline case as well. The proof of the lemma is now complete.
	\end{proof}
	\smallskip
	
	\subsection{Proof of Claim~\eqref{FT-product} in Theorem~\ref{thm:main:2} }\label{subsec:app:A:2}
	
	We now prove Claim~\eqref{FT-product} used in the derivation of Theorem~\ref{thm:main:2}. The issue is to justify a Fourier transform formula for a product of tempered distributions where the standard formulae do not apply directly. We therefore argue by truncation and passage to the limit. 
	\begin{proposition}\label{prop:app:Fourier:convolution}
		Let $0<d_1,d_3<d$ be exponents with $d_1+d_3>d$. Then the following convolution formula holds in the sense of tempered distributions:
		\begin{equation}\label{eq:lem:app:Fourier:convolution}
			\left(\frac{w_1}{|w|^{d-d_3+1}|w-e_1|^{d-d_1}}\right)\widehat{\hspace{9pt}}(\xi)
			\, =\,  \frac{\CLL{d_1}\CLL{d_3+1}}{i\,\CLL{d-d_1}\CLL{d-d_3+1}} \left( \frac{y_1}{|y|^{d_3+1}} \ast \frac{e^{-2 \pi i y \cdot e_1}}{|y|^{d_1}} \right)(\xi)\,\text{ in }\calS'(\R^d)
		\end{equation}
		where $e_1$ is the first standard basis vector of $\R^d$, and $\CLL{\mu}:=\pi^{-\frac{\mu}{2}}\Gamma(\frac{\mu}{2})$ for $\mu>0$.
	\end{proposition}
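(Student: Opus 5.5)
The formula is the distributional version of $\widehat{FG}=\widehat F\ast\widehat G$, applied to
\[
F(w):=\frac{w_1}{|w|^{d-d_3+1}},\qquad G(w):=\frac{1}{|w-e_1|^{d-d_1}} .
\]
By \eqref{lem:app:Fourier} with $\alpha=d_3$, and by \eqref{eq:Fourier:Lieb-Loss:aux:1} combined with the translation rule in \eqref{FTprop-distributions},
\[
\widehat F(y)=\frac{\CLL{d_3+1}}{i\,\CLL{d-d_3+1}}\,\frac{y_1}{|y|^{d_3+1}},\qquad
\widehat G(y)=\frac{\CLL{d_1}}{\CLL{d-d_1}}\,\frac{e^{-2\pi i y\cdot e_1}}{|y|^{d_1}} .
\]
The right-hand side of \eqref{eq:lem:app:Fourier:convolution} is therefore formally $\widehat F\ast\widehat G$. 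Neither factor is Schwartz, so the product rule in \eqref{FTprop-distributions} does not apply directly, and I will argue by truncation.

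Fix a cutoff $\chi\in C^\infty_c(\R^d)$ with $\chi\equiv1$ near $e_1$ and $\chi\equiv 0$ near $0$. Split $FG=F\,(\chi G)+F\,((1-\chi)G)$.

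In the first piece, $F$ is smooth on $\supp\chi$, so $F\chi G=\psi\,G$ with $\psi:=F\chi\in C^\infty_c$. Hence $\widehat{\psi G}=\widehat\psi\ast\widehat G$ holds legitimately.

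In the second piece, $(1-\chi)G$ is smooth, with symbol-type decay $|w|^{-(d-d_1)}$ at infinity. I would approximate it by $(1-\chi)G\,\phi(\cdot/n)$, with $\phi$ the radial cutoff from Step 1 of the proof of Theorem~\ref{thm:main:1}. For each $n$ the product formula then holds with a Schwartz (indeed compactly supported smooth) multiplier, and I let $n\to\infty$.

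Rather than tracking the pieces separately, a cleaner route is to work at the level of pairings. Set
\[
G_\epsilon:=G\,e^{-\epsilon|w|^2}\quad\text{or}\quad F_\epsilon:=F\,e^{-\epsilon|w|^2},
\]
combined with a mollification near the singular points. It then suffices to show two things:
\begin{enumerate}[label=(\roman*)]
\item $F_\epsilon G_\epsilon\to FG$ in $\calS'$. This follows by dominated convergence, since $FG\in L^1_{\rm loc}$ with polynomial decay (as checked for $J$ in Step 2).
\item $\widehat{F_\epsilon}\ast\widehat{G_\epsilon}\to\widehat F\ast\widehat G$ in $\calS'$.
\end{enumerate}
For the regularised factors, $\widehat{F_\epsilon}$ and $\widehat{G_\epsilon}$ are convolutions of the homogeneous transforms with Gaussians. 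These are smooth functions bounded by $C|y|^{-d_3}$ and $C|y|^{-d_1}$ uniformly in $\epsilon$, and they converge pointwise.

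The main obstacle is the justification of (ii): the convolution $\widehat F\ast\widehat G$ has to exist as a locally integrable function, and the dominated-convergence estimates must be uniform in $\epsilon$. The hypothesis $d_1+d_3>d$ enters exactly here. By \eqref{eq:prelim:stein:formula},
\[
\intd|\xi-y|^{-d_3}|y|^{-d_1}\,dy=\CStein\,|\xi|^{d-d_1-d_3},
\]
with $\alpha=d_3$ and $\beta=d_1$, which is admissible precisely because $d_1,d_3<d$ and $d_1+d_3>d$. This gives absolute convergence of the convolution integral for every $\xi\neq0$. The resulting bound $|\xi|^{d-d_1-d_3}$ is locally integrable and polynomially bounded, so pairing with $\varphi\in\calS$ and applying Fubini and dominated convergence (with the uniform majorants above) yields (ii).

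Combining (i), (ii), and $\widehat{F_\epsilon G_\epsilon}=\widehat{F_\epsilon}\ast\widehat{G_\epsilon}$, which holds because one factor is Schwartz after regularisation, gives \eqref{eq:lem:app:Fourier:convolution} with the stated product of constants.

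If the Gaussian regularisation of both factors makes the uniform bounds delicate, I would instead regularise only $G$. Taking $G_\epsilon=\widehat{\,\widehat G\,\eta(\epsilon\,\cdot)\,}$-type truncations in frequency gives $\widehat{G_\epsilon}=\widehat G\,\eta(\epsilon\cdot)$, whose modulus is dominated by $|\widehat G|$. The uniform control in (ii) then reduces directly to the Stein estimate above.
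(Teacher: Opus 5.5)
Your approach is correct once the regularisation is pinned down (see the second paragraph), and it takes a genuinely different route from the paper's.

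The paper never regularises $G(w)=|w-e_1|^{-(d-d_1)}$. Instead:
\begin{itemize}
\item It absorbs $G$ into the test function, $\psi:=G\,\widehat{\varphi}\in L^1(\R^d)$, so that $\widehat{\psi}=\widehat{G}\ast\widetilde{\varphi}$ is covered by the Schwartz product rule.
\item It truncates the kernel $x_1/|x|^{d_3+1}$, whose transform is a multiple of $F$, at scale $L$. This amounts to mollifying $F$ at scale $1/L$.
\item The exchange formula for the truncated kernel is then plain Fubini for two $L^1$ functions, $\int\widehat{K_L}\,\psi=\int K_L\,\widehat{\psi}$.
\item On the convolution side it passes to the limit $L\to\infty$ by dominated convergence, using the decay $|(\widehat{G}\ast\widetilde{\varphi})(x)|\lesssim(1+|x|)^{-d_1}$ from Lemma~\ref{lem:app:nested:conv:estimate:2}.
\item On the product side it uses an $L^p$ approximate-identity argument: H\"older with $p\in(d/d_3,\,d/(d-d_1))$ on a ball, and an $L^1$ bound plus the decay of $F$ outside it.
\end{itemize}

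You instead make one factor Schwartz in physical space, so the standard rule gives $\widehat{FG_\epsilon}=\widehat{F}\ast\widehat{G_\epsilon}$ outright. Both limits then become pointwise dominated convergence. The hypothesis $d_1+d_3>d$ enters exactly once, through Stein's formula \eqref{eq:prelim:stein:formula} as the majorant $C|\xi|^{d-d_1-d_3}$. As a by-product, the right-hand side is an absolutely convergent integral for every $\xi\neq0$, and the regularised convolutions converge pointwise. The price is that you need pointwise bounds on the regularised transforms that are uniform in $\epsilon$; the paper avoids these, at the cost of its auxiliary weighted convolution lemmas. On balance your route is shorter and more elementary.

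Three points need tightening.

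\begin{itemize}
\item \emph{Choice of regularisation.} Multiplying by $e^{-\epsilon|w|^2}$ alone makes neither factor Schwartz, because the singularities at $0$ and $e_1$ survive. There is no need to touch $F$. Take $G_\epsilon:=(G\ast\rho_\epsilon)\,e^{-\epsilon|w|^2}$ with $\rho_\epsilon:=\epsilon^{-d}\rho(\cdot/\epsilon)$, $0\le\rho\in C^\infty_c(\R^d)$ and $\int\rho=1$; this is Schwartz.
\item \emph{Uniform bounds.} The bounds you assert follow from scaling. For $0<a<d$ and $k\ge0$ bounded and rapidly decreasing, $h:=|\cdot|^{-a}\ast k$ satisfies $h(u)\le C|u|^{-a}$. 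Hence, with $k_\delta:=\delta^{-d}k(\cdot/\delta)$, one has $(|\cdot|^{-a}\ast k_\delta)(z)=\delta^{-a}h(z/\delta)\le C|z|^{-a}$ uniformly in $\delta$.
  \begin{itemize}
  \item With $k=\rho$ this gives $|G_\epsilon|\le C\,G$, the majorant for (i).
  \item With $k$ Gaussian, using $|\widehat{\rho_\epsilon}|\le1$, it gives $|\widehat{G_\epsilon}|=|(\widehat{G}\,\widehat{\rho_\epsilon})\ast g_\epsilon|\le|\widehat{G}|\ast g_\epsilon\le C|y|^{-d_1}$, where $g_\epsilon:=\widehat{e^{-\epsilon|\cdot|^2}}$. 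This is the majorant for (ii).
  \end{itemize}
\item \emph{Your fallback.} With $\widehat{G_\epsilon}=\widehat{G}\,\eta(\epsilon\,\cdot)$, the function $G_\epsilon$ is smooth but decays only like $|w|^{-(d-d_1)}$, so it is \emph{not} Schwartz. The product formula there needs the exchange theorem for a compactly supported $\widehat{G_\epsilon}$, not \eqref{FTprop-distributions}.
\end{itemize}
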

	\begin{proof}[Proof of Proposition~\ref{prop:app:Fourier:convolution}] We already have that the product of the tempered distributions $w_1/|w|^{d-d_3+1}$ and $\tau^{e_1}(1/|w|^{d-d_1})$ is precisely the function $J$ introduced in~\eqref{Jw} (see Step 2 in the proof of Theorem~\ref{thm:main:2}). Although $J$ does not belong to $L^1(\R^d)$, it defines a tempered distribution and therefore admits a Fourier transform. 
		However, the standard sufficiency conditions for the Fourier transform of a  product, such as formula \eqref {FTprop-distributions} in Grafakos~\cite[Proposition 2.3.22]{grafakos2008classical} or in Tr\`{e}ves~\cite[Theorem 30.4]{Treves}, do not hold. Hence, some truncation and regularisation arguments are needed before such a formula can be applied, and the desired identity will be obtained via a limit argument. This will be carried out in four steps, as detailed below. 
		
		\medskip
		\noindent
		Let \(\chi\in C_c^\infty(\R^d;[0,1])\) be radial, with \(\chi\equiv1\) on \(\{|x|\le1\}\) and \(\supp \chi\subset\{|x|\le2\}\). For \(L>1\), define
		\[
		K_L(x)\ :=\ \frac{x_1}{|x|^{d_3+1}}\chi\left(\frac{x}{L}\right),
		\qquad
		\psi(w)\ :=\ \frac{\widehat{\varphi}(w)}{|w-e_1|^{d-d_1}},
		\]
		where \(\varphi\in\calS(\R^d)\) is fixed.\smallskip 
			
				\noindent\textbf{Step 1:} 
		Since both \(K_L\) and $\psi$ are integrable on \(\R^d\), Fubini's theorem yields
		\begin{equation}    \label{basic-identity}    
			\intd \widehat{K_L}(w)\psi(w)\,dw
			\ =\ 
			\intd K_L(x)\widehat{\psi}(x)\,dx.
		\end{equation}
		The desired result will follow by passing to the limit as $L \to \infty$.\smallskip
		
		\noindent\textbf{Step 2:}  In view of the last property in~\eqref{FTprop-distributions}, the second property in \eqref{FTprop-functions}, and the third property in \eqref{FTprop-distributions}, one has	
		\begingroup
		\allowdisplaybreaks
		\begin{align}\label{eq:app:proof:Fourier:identity:aux:2}          
			\widehat{\psi} \ =\	\left(\frac{1}{|\cdot-e_1|^{d-d_1}}\widehat{\varphi}\right)\widehat{\hspace{9pt}}
			&\ =\  \left(\frac{1}{|\cdot-e_1|^{d-d_1}}\right)\widehat{\hspace{9pt}}\ast\widetilde{\varphi}(\cdot\,)\hspace{3.8em}\text{ in }\calS'(\R^d)\notag\\
			&\ =\  e^{-2\pi i(\cdot)\cdot e_1}\left(\frac{1}{|\cdot|^{d-d_1}}\right)\widehat{\hspace{9pt}}\ast\widetilde{\varphi}(\,\cdot\,)\notag\\
			&\ =\ \frac{\CLL{d_1}}{\CLL{d-d_1}} \Bigl(\frac{e^{-2\pi i(\cdot)\cdot e_1}}{|\cdot|^{d_1}}\Bigr)\ast \widetilde{\varphi}(\,\cdot\,)\hspace{2em}\text{ in }\calS'(\R^d)\,,
		\end{align}
		\endgroup
		the last step being a consequence of the Fourier transform formula~\eqref{eq:Fourier:Lieb-Loss:aux:1}. Since \(\psi\in L^1(\R^d)\), its Fourier transform is bounded and continuous, whereas the right-hand side of \eqref{eq:app:proof:Fourier:identity:aux:2} is smooth. {Therefore,  \eqref{eq:app:proof:Fourier:identity:aux:2} also holds pointwise.} 
		
		Applying the same product--convolution rule to the truncated kernel gives
		\begin{equation}\label{eq:app:proof:Fourier:identity:aux:2:right}        
			\left(
			\frac{x_1}{|x|^{d_3+1}}\chi\left(\frac{x}{L}\right)
			\right)\widehat{\hspace{9pt}}
			=
			\widehat{\frac{x_1}{|x|^{d_3+1}}}
			\ast
			\widehat{\chi\Bigl(\frac{\cdot}{L}\Bigr)}
			\qquad\text{in }\calS'(\R^d).
		\end{equation}
		As before, this identity holds pointwise.\smallskip
		
		\noindent\textbf{Step 3:} Using \eqref{eq:app:proof:Fourier:identity:aux:2} from Step 2 and Lemma~\ref{lem:app:nested:conv:estimate:2}, we see that the integrand on the right-hand side of \eqref{basic-identity} can be estimated (aside from the constant $\CLL{d_1}/\CLL{d-d_1}$) as follows:
		\begin{eqnarray*}        
			\left|\frac{x_1}{|x|^{d_3+1}}\chi\left(\frac{x}{L}\right)\left(\frac{e^{-2\pi i(\cdot)\cdot e_1}}{|\cdot|^{d_1}}\ast \widetilde{\varphi} \right)(x)\right|
			&\ \lesssim & \ C_{d,d_1,\varphi}|x|^{-d_3}(1+|x|)^{-d_1}\\
			&\ \lesssim &\ C_{d,d_1,\varphi}(|x|^{-d_3}\1_{\{|x|\leq 1\}}+|x|^{-(d_1+d_3)}\1_{\{|x|> 1\}}).
		\end{eqnarray*}
		Thus, we obtain domination by an $L^1(\R^d)$-function that is independent of $L$, guaranteed by the conditions $0<d_3<d$ and $d_1+d_3>d$. Therefore, Lebesgue's dominated convergence theorem yields
		\begin{eqnarray}
			\lim_{L \to \infty} \intd K_L(x)\widehat{\psi}(x)\,dx 
			& = &  \frac{\CLL{d_1}}{\CLL{d-d_1}} \intd\frac{x_1}{|x|^{d_3+1}} \left(\frac{e^{-2\pi i(\cdot)\cdot e_1}}{|\cdot|^{d_1}}\ast \widetilde{\varphi} \right)(x)\,dx \nonumber \\
			& = &  -\frac{\CLL{d_1}}{\CLL{d-d_1}} \intd\left(\frac{(\cdot)\cdot e_1}{|\cdot|^{d_3+1}} \ast \frac{e^{-2\pi i(\cdot)\cdot e_1}}{|\cdot|^{d_1}}\right)(x)\,\varphi(x)\,dx\,.	\label{eq:app:proof:Fourier:identity:aux:last}
		\end{eqnarray}   
		The last identity follows from an application of Fubini's theorem, which is legitimate since the uniform integrability of $K_L\widehat{\psi}$ implies the integrability of the right-hand side.\smallskip

		\noindent\textbf{Step 4:} We now pass to the limit in the left-hand side of \eqref{basic-identity}. To begin with, as a consequence of \eqref{eq:app:proof:Fourier:identity:aux:2:right}, there holds
		\[
		\intd \widehat{K_L}(w)\psi(w)\,dw   
		\ =\ \intd\left(\widehat{\frac{x_1}{|x|^{d_3+1}}}\ast\widehat{\chi\Bigl(\frac{\cdot}{L}\Bigr)}\right)(w) \frac{\widehat{\varphi}(w)}{|w-e_1|^{d-d_1}}\,dw\,.    
		\]     
		Then, in view of~\eqref{lem:app:Fourier},  $K:=(x_1/|x|^{d_3+1})\widehat{\hspace{9pt}}=\Bigl(\frac{\CLL{d-d_3+1}}{i\,\CLL{d_3+1}}\Bigr)\frac{w_1}{|w|^{d-d_3+1}}$ and so belongs to $L^1_{loc}(\R^d)$. Therefore, invoking Lemma~\ref{lem:app:nested:conv:estimate:3}, with $K=(x_1/|x|^{d_3+1})\widehat{\hspace{9pt}}$,  $\psi=\widehat{\chi(\cdot/L)}$, $\Phi=\widehat{\,\varphi\,}/|\cdot-e_1|^{d-d_1}$, 
		\begingroup
		\allowdisplaybreaks
		\begin{align*}   
			\intd\left(\widehat{\frac{x_1}{|x|^{d_3+1}}}\ast  \widehat{\chi\Bigl(\frac{\cdot}{L}\Bigr)}\right)(w)
			& \frac{\widehat{\varphi}(w)}{|w-e_1|^{d-d_1}}\,dw\\   
			& \ =\ \intd\widehat{\frac{x_1}{|x|^{d_3+1}}}\left( \frac{\widehat{\varphi}}{|\cdot-e_1|^{d-d_1}} \ast \widehat{\chi\Bigl(\frac{\cdot}{L}\Bigr)} \right)(w)\,dw\\ 
			& \ =\  \frac{\CLL{d-d_3+1}}{i\,\CLL{d_3+1}} \intd \frac{w_1}{|w|^{d-d_3+1}}
			\left(\frac{\widehat{\varphi}}{|\cdot-e_1|^{d-d_1}} \ast \widehat{\chi\Bigl(\frac{\cdot}{L}\Bigr)}
			\right)(w)\,dw.
		\end{align*}
		\endgroup
		We claim that		\begin{equation}\label{eq:app:nested:conv:estimate:3:bis:aux:last:2}
			\lim_{L\to\infty} \intd \frac{w_1}{|w|^{d-d_3+1}} \left( \frac{\widehat{\varphi}}{|\cdot-e_1|^{d-d_1}} \ast \widehat{\chi\Bigl(\frac{\cdot}{L}\Bigr)}     \right)(w)\,dw 
			\ =\  \intd\frac{w_1}{|w|^{d-d_3+1}}\frac{\widehat{\varphi}}{|w-e_1|^{d-d_1}}\,dw.
		\end{equation}
		To see this, observe that 
		\[
		\Phi(w)\ :=\ \frac{\widehat{\varphi}(w)}{|w-e_1|^{d-d_1}}\in L^p(\R^d)\;\;\text{ for }\;\; 1\leq p<\frac{d}{d-d_1}.
		\]    
		Thus, by the standard \(L^p\)-convergence theorem for convolutions, \begin{equation}\label{eq:app:nested:conv:estimate:3:bis:aux:1}
			\left\|\Phi\ast\widehat{\chi\Bigl(\frac{\cdot}{L}\Bigr)}-\Phi\right\|_{L^p(\R^d)}\xrightarrow[L\to\infty]{}0.        
		\end{equation}
		\smallskip
		
		\noindent Let us now introduce the convergence error $E_L$ as follows
		\begin{equation*}
			E_L 
			=\ \intd K(w)\biggl(\Phi\ast\widehat{\chi\Bigl(\frac{\cdot}{L}\Bigr)}-\Phi\biggr)(w)\,dw\,.
		\end{equation*}
		We split the error into two parts
		\begin{align*}
			E_L \ &=\ \int_{\{|w|\leq R\}} K(w)\biggl(\Phi\ast\widehat{\chi\Bigl(\frac{\cdot}{L}\Bigr)}-\Phi\biggr)(w)\,dw +\int_{\{|w|>R\}} K(w)\biggl(\Phi\ast\widehat{\chi\Bigl(\frac{\cdot}{L}\Bigr)}-\Phi\biggr)(w)\,dw \\ 
			&:=\ E_{{\rm inn},R}(L)+E_{{\rm out},R}(L)\,.
		\end{align*}
		Since $0<d_3<d$ and $d_1+d_3>d$, the interval $\left(\frac{d}{d_3},\frac{d}{d-d_1}\right)$ is non-empty. Thus, if we choose $p \in \left(\frac{d}{d_3},\frac{d}{d-d_1}\right)$, then its H\"older conjugate $p'=p/(p-1)$ satisfies $1<p'<d/(d-d_3)$. In particular, $K\in L^{p'}(\{|w|\leq R\})$ for every $R>0$. Hence, H\"older's inequality and~\eqref{eq:app:nested:conv:estimate:3:bis:aux:1} imply
		\[
		|E_{{\rm inn}, R}(L)|
		\ \leq\ 
		\|K\|_{L^{p'}(\{|w|\leq R\})}
		\left\|\Phi\ast\widehat{\chi\Bigl(\frac{\cdot}{L}\Bigr)}-\Phi\right\|_{L^p(\R^d)}
		\xrightarrow[L\to\infty]{}0,
		\]
		for each fixed $R>0$. 
		
		\smallskip
		\noindent Now, note that $\|\widehat{\chi(\cdot/L)}\|_{L^1(\R^d)}=\|\widehat{\chi}\|_{L^1(\R^d)}$ is independent of \(L\). So, by Young's inequality,
		\[
		\left\|\Phi\ast\widehat{\chi\Bigl(\frac{\cdot}{L}\Bigr)}-\Phi\right\|_{L^1(\R^d)}
		\ \leq\  \bigl(1+\|\widehat{\chi}\|_{L^1(\R^d)}\bigr)\|\Phi\|_{L^1(\R^d)}
		=:C_0,
		\]
		where \(C_0\) is independent of $L$ for \(L>1\). Therefore,
		\begin{align*}
			|E_{{\rm out},R}(L)|
			&\ \leq\ \biggl(\sup_{|w|>R}|K(w)|\biggr)\,
			\left\|\Phi\ast\widehat{\chi\Bigl(\frac{\cdot}{L}\Bigr)}-\Phi\right\|_{L^1(\R^d)}\\
			&\ \leq \ C_0\sup_{|w|>R}|K(w)|\ \lesssim \frac{C_0}{R^{d-d_3}} 
		\end{align*}
		and thus $\lim_{R\to+\infty}\sup_{L>1}|E_{{\rm out},R}(L)|=0$. Combining this with the observed fact that $E_{{\rm inn}, R}(L)\to 0$ for any fixed $R>>1$ yields $E_L\to 0$ as $L\to\infty$. This proves the claim~\eqref{eq:app:nested:conv:estimate:3:bis:aux:last:2}. 
		
		\smallskip
		\noindent
		The desired result follows by passing to the limit in  \eqref{basic-identity}, using \eqref{eq:app:nested:conv:estimate:3:bis:aux:last:2} and \eqref{eq:app:proof:Fourier:identity:aux:last}.
	\end{proof}
	
	\subsection{Convolution properties}\label{subsec:app:A:3}
	
	\begin{lemma}\label{lem:app:nested:conv:estimate:3}
		Let $K\in L^1_{loc}(\R^d)$ be bounded at infinity, and $\Phi\in L^1(\R^d)$. Suppose that $\psi\in L^1(\R^d)\cap L^{\infty}(\R^d)$. Then the following convolution formula holds:
		\[
		\intd(K\ast \psi)(w)\Phi(w)\,dw \ =\ \intd K(w)(\Phi\ast\widetilde{\psi})(w)\,dw. 
		\]
	\end{lemma}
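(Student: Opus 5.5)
The plan is to prove the identity by writing both sides as iterated integrals and using Fubini, after checking absolute integrability. Concretely, I write
\[
\intd (K\ast\psi)(w)\,\Phi(w)\,dw \;=\; \intd\intd K(w-t)\,\psi(t)\,\Phi(w)\,dt\,dw .
\]
The substitution $u=w-t$ (for fixed $w$, replace $t$ by $w-u$) turns the integrand into $K(u)\,\psi(w-u)\,\Phi(w)$. Exchanging the order of integration then gives
\[
\intd K(u)\Bigl(\intd \Phi(w)\,\widetilde{\psi}(u-w)\,dw\Bigr)du \;=\; \intd K(u)\,(\Phi\ast\widetilde{\psi})(u)\,du .
\]
This is the desired right-hand side. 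So the whole content is justifying Tonelli/Fubini, i.e. showing
\[
I \;:=\; \intd\intd |K(u)|\,|\psi(w-u)|\,|\Phi(w)|\,du\,dw \;<\;\infty .
\]

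To bound $I$, I split $K=K\1_{\{|u|\le R\}}+K\1_{\{|u|>R\}}=:K_0+K_\infty$, choosing $R$ so that $|K|\le M$ on $\{|u|>R\}$; such $R$ exists because $K$ is bounded at infinity. For the local part, $K_0\in L^1(\R^d)$ since $K\in L^1_{loc}$. Tonelli and Young then give
\[
\intd\intd |K_0(u)|\,|\psi(w-u)|\,|\Phi(w)|\,du\,dw \;\le\; \|K_0\|_{L^1}\,\|\psi\|_{L^\infty}\,\|\Phi\|_{L^1} ,
\]
since $\int |K_0(u)|\,|\psi(w-u)|\,du\le \|K_0\|_{L^1}\|\psi\|_{L^\infty}$ for every $w$. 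For the part at infinity, I integrate in $u$ first:
\[
\intd |K_\infty(u)|\,|\psi(w-u)|\,du \;\le\; M\,\|\psi\|_{L^1},
\]
uniformly in $w$. Integrating against $|\Phi(w)|$ then gives a bound of $M\,\|\psi\|_{L^1}\|\Phi\|_{L^1}$. Hence $I<\infty$. This shows that $K\ast\psi$ is defined a.e. (in fact everywhere, and it is bounded), and that $\Phi\ast\widetilde\psi\in L^1\cap L^\infty$. Both sides of the claimed identity are therefore absolutely convergent, and Fubini applies.

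The only delicate point is precisely this integrability split. Neither hypothesis alone suffices: $K$ is not globally integrable, and it is not globally bounded. That is why $\psi$ must lie in $L^1\cap L^\infty$: the $L^\infty$ norm pairs with the local $L^1$ piece of $K$, and the $L^1$ norm pairs with the bounded tail of $K$. Once $I<\infty$ is established, the change of variables and the interchange of integrals are routine. Measurability of $(u,w)\mapsto K(u)\psi(w-u)\Phi(w)$ is standard.
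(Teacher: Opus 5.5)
Your proof is correct and takes essentially the same approach as the paper. The paper also reduces the identity to Fubini by showing that $\Theta(w)=\int|K(u)|\,|\psi(w-u)|\,du$ is bounded uniformly in $w$: it splits $K$ at a radius beyond which $K$ is bounded, pairs the local $L^1$ piece with $\|\psi\|_{L^\infty}$ and the bounded tail with $\|\psi\|_{L^1}$, and then integrates against $|\Phi|\in L^1$.
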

	\begin{proof}[Proof of Lemma~\ref{lem:app:nested:conv:estimate:3}]
		It suffices to establish the absolute integrability:
		\begin{equation}\label{eq:app:nested:conv:estimate:3:aux:1}            
			\iint_{\R^d\times\R^d}|K(z)\psi(w-z)\Phi(w)|\,dz\,dw<\infty.
		\end{equation}
		For $w\in\R^d$, define $\Theta(w):=\intd|K(u)||\psi(w-u)|\,du$. Then the above can be rewritten as
		\[
		\iint_{\R^d\times\R^d}|K(z)\psi(w-z)\Phi(w)|\,dz\,dw \ =\ \intd|\Phi(w)|\Theta(w)\,dw.
		\]
		As $\Phi\in L^1(\R^d)$, to justify~\eqref{eq:app:nested:conv:estimate:3:aux:1} it is enough to prove that $\Theta\in L^{\infty}(\R^d)$. To see this, note that the assumptions on $K$ ensure that there exists $R_0>1$ such that $K\in L^1(\{|z|\leq R\})$ and $K\in L^{\infty}(\{|z|>R\})$ for every $R\geq R_0$. Combining this with $\psi\in L^1(\R^d)\cap L^{\infty}(\R^d)$ yields
		\[
		\begin{aligned}
			\Theta(w)
			&\ =\ \int_{\{|u|\leq R_0\}}|K(u)||\psi(w-u)|\,du+ \int_{\{|u|>R_0\}}|K(u)||\psi(w-u)|\,du\\
			&\ \leq\ \|K(z)\|_{L^1(\{|z|\leq R_0\})}\|\psi\|_{L^{\infty}(\R^d)}+\|K(z)\|_{L^{\infty}(\{|z|>R_0\})}\|\psi\|_{L^1(\R^d)}<\infty.
		\end{aligned}
		\]
		As this bound is independent of $w\in\R^d$, $\Theta$ is bounded on $\R^d$. Then, by Fubini's theorem,
		\begingroup
		\allowdisplaybreaks
		\begin{align*}
			\intd(K\ast\psi)(w)\Phi(w)\,dw
			&\ =\ \intd K(z)\left(\intd\psi(w-z)\Phi(w)\,dw\right)dz\\
			&\ =\ \intd K(z)(\psi(-\,\cdot\,)\ast\Phi)(z)\,dz,
		\end{align*}
		\endgroup
		and the desired identity follows.
	\end{proof}
	
	\begin{lemma}\label{lem:app:nested:conv:estimate:1}
		Suppose that $\eta\in L^1_{\alpha}(\R^d)$ and $\zeta\in L^{\infty}_{\alpha}(\R^d)$ for some $\alpha>0$; that is, 
		\[
		\intd|\eta(y)|(1+|y|)^{\alpha}\,dy<\infty\,,
		\]
		and 
		\[
		\esssup_{y\in\R^d} (1+|y|)^{\alpha}|\zeta(y)|
		\ \leq\ C_{\alpha,\zeta}<\infty.
		\]    
		Then,  
		\[
		\eta\ast\zeta\in L^{\infty}_{\alpha}(\R^d),
		\]
		and moreover
		\[
		\esssup_{\R^d}(1+|\cdot|)^{\alpha}|\eta\ast\zeta|\ \leq\ C_{\alpha,\zeta}\|(1+|\cdot|)^{\alpha}\eta\|_{L^1(\R^d)}.
		\]
	\end{lemma}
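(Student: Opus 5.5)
The plan is to estimate the weighted sup norm pointwise by distributing the weight $(1+|y|)^{\alpha}$ between the two factors of the convolution. Fix $y\in\R^d$ and write
\[
(\eta\ast\zeta)(y)\ =\ \intd \eta(z)\,\zeta(y-z)\,dz .
\]
The key elementary ingredient is the Peetre-type inequality
\[
(1+|y|)^{\alpha}\ \leq\ (1+|z|)^{\alpha}\,(1+|y-z|)^{\alpha},
\]
which follows from $1+|y|\leq 1+|z|+|y-z|\leq (1+|z|)(1+|y-z|)$ and the monotonicity of $t\mapsto t^{\alpha}$ for $\alpha>0$.

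Multiplying the convolution by $(1+|y|)^{\alpha}$ and inserting this inequality inside the integral, we will bound
\[
(1+|y|)^{\alpha}|(\eta\ast\zeta)(y)|\ \leq\ \intd |\eta(z)|(1+|z|)^{\alpha}\,(1+|y-z|)^{\alpha}|\zeta(y-z)|\,dz .
\]
Next we use the essential bound $(1+|y-z|)^{\alpha}|\zeta(y-z)|\leq C_{\alpha,\zeta}$. This holds for a.e. $z$ at each fixed $y$, since translation and reflection preserve null sets. Consequently the right-hand side is at most $C_{\alpha,\zeta}\|(1+|\cdot|)^{\alpha}\eta\|_{L^1(\R^d)}$.

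The same computation, run with $|\eta|$ and $|\zeta|$, shows that the integral defining $\eta\ast\zeta(y)$ converges absolutely for every $y$, because $\eta\in L^1$ and $\zeta\in L^\infty$. Hence the convolution is well defined pointwise. Taking the supremum (a fortiori the essential supremum) over $y$ gives both the membership $\eta\ast\zeta\in L^\infty_\alpha(\R^d)$ and the stated estimate.

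There is no genuine obstacle here. The only point needing care is the measure-theoretic remark about applying the essential bound on $\zeta$ under the change of variable $z\mapsto y-z$, which is routine.
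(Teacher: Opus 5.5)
Your proof is correct and is essentially the paper's argument: the paper first bounds $|\zeta(x-y)|\le C_{\alpha,\zeta}(1+|x-y|)^{-\alpha}$ and then moves the weight onto $\eta$ via the same submultiplicativity $1+|x|\le(1+|y|)(1+|x-y|)$, which is your Peetre-inequality step taken in the reverse order. Your remarks that the essential bound survives the change of variable $z\mapsto y-z$ and that the convolution converges absolutely at every point are left implicit in the paper, and they are worth stating.
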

	\begin{proof}[Proof of Lemma~\ref{lem:app:nested:conv:estimate:1}]
		The claim is a weighted Young-type estimate. The key observation is that the weight $(1+|x|)^{\alpha}$ may be transferred from the output variable $x$ to the integration variable $y$ by means of the elementary inequality	
		\[
		(1+|x-y|)^{-\alpha}\ \leq\
		(1+|y|)^{\alpha} (1+|x|)^{-\alpha}\quad\text{ for }x,y\in\R^d,
		\]
		in view of the monotonicity of $t\mapsto t^{-\alpha}$ and  $1+|x|\leq (1+|y|)(1+|x-y|)$ for any $x,y\in\R^d$. 
		The $L^{\infty}$-bound on $\zeta$ allows one to estimate the convolution:
		\[
		|(\eta\ast\zeta)(x)| \ \leq\ \intd|\eta(y)|\,|\zeta(x-y)|\,dy
		\ \leq \ C_{\alpha,\zeta}\intd|\eta(y)|(1+|x-y|)^{-\alpha}\,dy.
		\]
		Applying the previous weight inequality yields
		\begin{equation*}
			|(\eta\ast\zeta)(x)|
			\ \leq\ C_{\alpha,\zeta}\, (1+|x|)^{-\alpha}\intd |\eta(y)|(1+|y|)^{\alpha}\,dy.		      
		\end{equation*}    
		Equivalently, we have just proved that
		\[
		\esssup_{x\in\R^d} (1+|x|)^{\alpha}|(\eta\ast\zeta)(x)| \ \leq\ C_{\alpha,\zeta} \|(1+|\cdot|)^{\alpha}\eta\|_{L^1(\R^d)}.
		\]
		This proves the lemma.
	\end{proof}
	\begin{lemma}\label{lem:app:nested:conv:estimate:2}
		Let $0<\alpha_1<d$ and $\varphi\in\calS(\R^d)$. Then the following decay holds
		\[
		\left|\left(e^{2\pi i(\cdot)\cdot e_1}|\cdot|^{-\alpha_1}\ast \varphi\right)(x)\right|\ \leq\  C_{d,\alpha_1,\varphi}\,(1+|x|)^{-\alpha_1}\quad\text{ for all }\,x\in\R^d,
		\]
		for some constant $C_{d,\alpha_1,\varphi}>0$.
	\end{lemma}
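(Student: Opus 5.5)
The plan is to write the convolution as $(e^{2\pi i(\cdot)\cdot e_1}|\cdot|^{-\alpha_1}\ast\varphi)(x)=\intd e^{2\pi i y\cdot e_1}|y|^{-\alpha_1}\varphi(x-y)\,dy$. Boundedness on compact sets is immediate, because $|y|^{-\alpha_1}$ is locally integrable ($\alpha_1<d$) and bounded at infinity, while $\varphi$ is integrable and bounded. Indeed, splitting at $|y|=1$ gives $|(\cdots)(x)|\le \||\cdot|^{-\alpha_1}\|_{L^1(B_1)}\|\varphi\|_{L^\infty}+\|\varphi\|_{L^1}$ uniformly in $x$. It therefore suffices to prove the decay $|x|^{-\alpha_1}$ for $|x|\ge 2$.

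For $|x|\ge 2$ we split the $y$-integral into the regions $\{|y|\ge |x|/2\}$ and $\{|y|<|x|/2\}$. On the first region we bound $|y|^{-\alpha_1}\le 2^{\alpha_1}|x|^{-\alpha_1}$, so this piece is at most $2^{\alpha_1}|x|^{-\alpha_1}\|\varphi\|_{L^1}$. On the second region we have $|x-y|\ge |x|/2$, hence $|\varphi(x-y)|\le C_N(1+|x|)^{-N}$ for any $N$. The integral of $|y|^{-\alpha_1}$ over $B_{|x|/2}$ is $\lesssim |x|^{d-\alpha_1}$, so this piece is $\lesssim |x|^{d-\alpha_1-N}$. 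Choosing $N=d$ gives $\lesssim |x|^{-\alpha_1}$. Combining the two regions with the compact-set bound yields the claim with a constant depending on $d,\alpha_1$ and finitely many Schwartz seminorms of $\varphi$.

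Equivalently, one can phrase the argument through Lemma~\ref{lem:app:nested:conv:estimate:1} with $\eta=\varphi\in L^1_{\alpha_1}$. However, $\zeta=e^{2\pi i(\cdot)\cdot e_1}|\cdot|^{-\alpha_1}$ is not bounded near the origin, so one must first split $\zeta=\zeta\1_{B_1}+\zeta\1_{B_1^c}$. The second part lies in $L^\infty_{\alpha_1}$, so Lemma~\ref{lem:app:nested:conv:estimate:1} applies to it directly. For the first part, $\zeta\1_{B_1}\in L^1$ has compact support, and convolving it with the Schwartz function $\varphi$ gives rapid decay. The oscillating factor plays no role, since only absolute values are used. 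The only step needing any care is the near-origin singularity, and the splitting handles it; I do not expect a real obstacle.
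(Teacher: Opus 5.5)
Your proof is correct. Your main route differs from the paper's, although your last paragraph reproduces the paper's argument almost exactly.

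The paper splits the \emph{kernel} at the fixed radius $|y|=1$. It bounds the local piece by $\||y|^{-\alpha_1}\|_{L^1(\{|y|\le 1\})}\sup_{|u|\le 1}|\varphi(x-u)|\lesssim (1+|x|)^{-d}\le(1+|x|)^{-\alpha_1}$. It treats the tail with the weighted Young estimate of Lemma~\ref{lem:app:nested:conv:estimate:1}, whose content is the weight transfer $1+|x|\le(1+|y|)(1+|x-y|)$ combined with $\varphi\in L^1_{\alpha_1}$.

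Your primary argument instead splits the integration region \emph{relative to} $|x|$, into $|y|\ge |x|/2$ and $|y|<|x|/2$, after a uniform bound handles $|x|\le 2$. This is the classical direct estimate. It is self-contained and needs no auxiliary lemma. It also makes the mechanism transparent: far from the origin the kernel is already $\lesssim |x|^{-\alpha_1}$. Near the origin its mass, of order $|x|^{d-\alpha_1}$, is compensated by the decay $\sup_{|z|\ge |x|/2}|\varphi(z)|\lesssim |x|^{-d}$. The constant then depends only on $d$, $\alpha_1$, $\|\varphi\|_{L^1}$, $\|\varphi\|_{L^\infty}$ and $\sup_y(1+|y|)^d|\varphi(y)|$.

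The paper's version is more modular, since its tail estimate is a general weighted convolution inequality that could be reused for other kernels with weighted-$L^\infty$ tails. For Schwartz $\varphi$, however, both routes need essentially the same information and give the same conclusion.
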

	
	\begin{proof}[Proof of Lemma~\ref{lem:app:nested:conv:estimate:2}]
		We split the kernel into its local and nonlocal parts. The local part is integrable and can be treated directly using the rapid decay of $\varphi$, whereas the nonlocal part is not integrable near infinity but satisfies a weighted $L^{\infty}$ bound, so Lemma~\ref{lem:app:nested:conv:estimate:1} applies.
		
		Set the function 
		\[
		\zeta:=e^{2\pi i (\cdot)\cdot e_1}|\cdot|^{-\alpha_1}\ast \varphi,
		\]    
		and decompose it as
		\[
		\begin{aligned}
			\zeta\ =\ \zeta_{0}\ +\ \zeta_{\infty}
			\ :=\ a_0\ast\varphi\ +\ a_{\infty}\ast\varphi,
		\end{aligned}
		\]
		where 
		\[
		a_0(y):=(e^{2\pi iy\cdot e_1}|y|^{-\alpha_1})\1_{\{|y|\leq 1\}},\qquad a_{\infty}(y):=(e^{2\pi iy\cdot e_1}|y|^{-\alpha_1})\1_{\{|y|>1\}}.
		\]    
		We first estimate the local part, $\zeta_0$. Since $0<\alpha_1<d$, the kernel $a_0$ belongs to $L^1(\R^d)$. Hence,
		\[
		|\zeta_0(x)|\ \leq\ \int_{\{|y|\leq 1\}}|y|^{-\alpha_1}|\varphi(x-y)|\,dy
		\ \leq\ \left(\int_{\{|y|\leq 1\}}|y|^{-\alpha_1}\,dy\right)\sup_{|u|\leq 1}|\varphi(x-u)|.
		\]
		Because $\varphi$ is a Schwartz function, there exists $\varsigma_{d,\varphi}>0$ such that
		\[
		|\varphi(y)|\leq \varsigma_{d,\varphi}\,(1+|y|)^{-d}
		\quad\text{ for all }y\in\R^d. 
		\]
		Moreover, if $|u|\leq 1$ then $1+|x-u| \geq \tfrac{1}{2}(1+|x|)$ for any $x\in\R^d$. Therefore, as $\alpha_1<d$, 
		\[
		\begin{aligned}        
			|\zeta_0(x)|
			\ \leq\ \varsigma_{d,\varphi}\||y|^{-\alpha_1}\|_{L^1(\{|y|\leq 1\})}\sup_{|u|\leq 1}(1+|x-u|)^{-d}
			\ \leq\ C^{(0)}_{\alpha_1,d,\varphi}(1+|x|)^{-\alpha_1}\,,
		\end{aligned}
		\]
		for every $x\in\R^d$ with constant $C^{(0)}_{\alpha_1,d,\varphi}:=2^d\varsigma_{d,\varphi}\||y|^{-\alpha_1}\|_{L^1(\{|y|\leq 1\})}\,$.	
		
		We now study the nonlocal part, $\zeta_{\infty}$. For $|y|>1$,
		\[
		|a_{\infty}(y)| \ =\ |y|^{-\alpha_1}\ \leq \ 2^{\alpha_1}(1+|y|)^{-\alpha_1},
		\]
		so $a_{\infty}\in L^{\infty}_{\alpha_1}(\R^d)$. On the other hand, since $\varphi\in\calS(\R^d)$, one has
		\[
		\intd|\varphi(y)|(1+|y|)^{\alpha_1}\,dy<\infty,   
		\]
		that is, $\varphi\in L^1_{\alpha_1}(\R^d)$. Applying Lemma~\ref{lem:app:nested:conv:estimate:1} in the case $\alpha=\alpha_1>0$, $\eta=\varphi\in L^1_{\alpha}(\R^d)$, $\zeta=a_{\infty}\in L^{\infty}_{\alpha}(\R^d)$ gives
		\[
		a_{\infty}\ast\varphi\in L^{\infty}_{\alpha_1}(\R^d).
		\]
		In particular,
		\[
		|\zeta_{\infty}(x)| \ \leq\ C^{(\infty)}_{d,\alpha_1,\varphi}\, (1+|x|)^{-\alpha_1}.
		\]
		Combining the two estimates, we conclude that
		\[
		|\zeta(x)|\ \leq\ |\zeta_0(x)|+|\zeta_{\infty}(x)|\ \leq\ C_{d,\alpha_1,\varphi}(1+|x|)^{-\alpha_1} \quad \text{ for all }x\in\R^d,
		\]
		with explicit constant $C_{d,\alpha_1,\varphi}:= C^{(0)}_{\alpha_1,d,\varphi}+C^{(\infty)}_{\alpha_1,d,\varphi}$. This proves the claim.
	\end{proof}

	\bigskip
	
	\bibliographystyle{amsplain}
	\bibliography{citations}
	
	\bigskip
	
	\begingroup
	\small
	\addressblock{Departamento de Matem\'atica, Universidad de Concepci\'on (UdeC), Avenida Esteban Iturra  s/n Barrio Universitario, Concepci\'on, Chile.}
	
	\EmailB{rmahadevan@udec.cl}
	\bigskip
	
	\addressblock{Instituto de Ciencias de la Ingenier\'ia, Universidad de O'Higgins (UOH), Avenida del Libertador Bernardo O'Higgins 611, Rancagua, Chile.}
	\EmailB{franco.olivares@uoh.cl}
	
	\EmailB{andres.zuniga@uoh.cl}
	\endgroup
	
\end{document}